\def\tsc#1{\csdef{#1}{\textsc{\lowercase{#1}}\xspace}}
\newcommand{\defeq}{\vcentcolon=}
\newcommand{\norm}[1]{\left\lVert#1\right\rVert}
\let\oldeqref\eqref
\renewcommand*{\eqref}[1]{Eq.~\oldeqref{#1}}
    \newcolumntype{L}{>{\raggedright\arraybackslash}X}
    \newcommand\revt[1]{{\color{blue}#1}}
    \newcommand{\com}[1]{\textbf{\color{red} (COMMENT: #1)}} 
    \newcommand\revt[1]{{#1}}
    \newcommand{\com}[1]{}
    \newcommand\rev[1]{{\color{blue}#1}}
    \newcommand{\com}[1]{\textbf{\color{red} (COMMENT: #1)}} 
    \newcommand\rev[1]{{#1}}
    \newcommand\revrr[1]{{\color{blue}#1}}
    \newcommand{\comrr}[1]{\textbf{\color{red} (COMMENT: #1)}} 
    \newcommand\revrr[1]{{#1}}
    \newcommand{\comrr}[1]{}
\newcommand{\xmark}{\ding{55}}
\DeclareMathOperator*{\argmin}{arg\,min}
\newtheorem{theorem}{Theorem}
\newtheorem{lemma}{Lemma}
\def\BibTeX{{\rm B\kern-.05em{\sc i\kern-.025em b}\kern-.08em T\kern-.1667em\lower.7ex\hbox{E}\kern-.125emX}}
\begin{document}

\title{Optimizing Two-Truck Platooning with Deadlines}

\author{Wenjie~Xu,
        Titing~Cui,
        and~Minghua~Chen,~\IEEEmembership{Fellow,~IEEE}%
\thanks{
The work presented in this paper was supported in part by a General Research Fund from Research Grants Council, Hong Kong (Project No. 14207520), an InnoHK initiative, The Government of the HKSAR, and Laboratory for AI-Powered Financial Technologies.
}
\thanks{W. Xu is with \'Ecole Polytechnique F\'ed\'erale de Lausanne, Lausanne CH-1015, Switzerland.}
\thanks{T. Cui is with Joseph M. Katz Graduate School of Business, University of Pittsburgh, Pittsburgh, PA 15260, USA.}
\thanks{M. Chen is with School of Data Science, City University of Hong Kong, 83 Tat Chee Avenue, Kowloon Tong, Kowloon, Hong Kong.}
\thanks{The first two authors are co-primary authors. A part of the work was done when all the three authors were with The Chinese University of Hong Kong. Corresponding author: Minghua Chen (minghua.chen@cityu.edu.hk).}
}

\markboth{}
{Xu \MakeLowercase{\textit{et al.}}: Bare Demo of IEEEtran.cls for IEEE Journals}

\maketitle

\begin{abstract}
\rev{We study a transportation problem where two heavy-duty trucks travel across the national highway from separate origins to destinations, subject to individual deadline constraints.} Our objective is to minimize their total fuel consumption by jointly optimizing path planning, speed planning, and platooning configuration. Such a two-truck platooning problem is pervasive in practice yet challenging to solve due to \revt{hard deadline constraints and enormous platooning configurations to consider.}
We first leverage a unique problem structure to significantly simplify platooning optimization and present a novel formulation. We prove that the two-truck platooning problem is \textit{weakly} NP-hard and admits a Fully Polynomial Time Approximation Scheme (FPTAS). The FPTAS can achieve \revt{a fuel consumption within a ratio of $(1+\epsilon)$ to the optimal (for any $\epsilon>0$) with a time complexity polynomial in the size of the transportation network and $1/\epsilon$.} These results are in sharp contrast to the general multi-truck platooning problem, which is known to be APX-hard and repels any FPTAS. As the FPTAS still incurs excessive running time for large-scale cases, we design an efficient dual-subgradient algorithm for solving large-/national- scale instances. It is an iterative algorithm that always converges. We prove that each iteration only incurs polynomial-time complexity, albeit it requires solving an integer linear programming problem optimally. We characterize a condition under which the algorithm generates an optimal solution and \revt{derive} a posterior performance bound when the condition is not met. Extensive simulations based on real-world traces show that our joint solution of path planning,  speed planning, and platooning saves up to $24\%$ fuel as compared to baseline alternatives.
\end{abstract}


\begin{IEEEkeywords}
Energy efficiency, Timely truck transportation, Path planning, Speed planning, Platooning, Algorithm, Optimization
\end{IEEEkeywords}

\IEEEpeerreviewmaketitle

\section{Introduction}
Heavy-duty trucks, transporting freights across national highways, play a critical role in society's economic development. It is reported that the U.S. trucking industry hauled around $80\%$ of all freight tonnage and brought in 792 billion U.S. dollars as revenue in 2019~\cite{ATA}.
This number would rank 18th if measured against the country's GDP~\cite{GDPranking2019}. Meanwhile, the {trucking industry} is also responsible for a large {fraction of the world's energy consumption and greenhouse gas emissions. {For example}, in the U.S., {with} only 4\% of the vehicle population, heavy-duty trucks account for $18\%$ energy consumption of the entire transportation sector and $5\%$ of the greenhouse gas emission~\cite{ATA}. In addition, fuel cost corresponds to $24\%$ of the total operational cost in the trucking industry in 2019~\cite{OperCost2020analysis}.} These alerting observations, {together with that the global freight activity is predicted to increase by a factor of 2.4 by 2050~\cite{futuretruck},} make it critical to develop effective solutions for improving fuel efficiency of truck transportation.

Truck platooning, where two or more trucks travel in convoy with small headway, is a promising fuel-saving approach. It is made possible in practice by the novel semi-automated driving technology, also referred to as the Cooperative Adaptive Cruise Control (CACC) system~\cite{Alam2013LCC, Besselink2016CyberP, Turri2017CooperativeL}. Similar to what racing cyclists exploit, trucks traveling in a platoon experience a reduction in air drag, which translates into lower fuel consumption. Many studies, e.g.,~\cite{2000-01-3056, 2014-01-2438, Alam2015Sustainable, AeroTruck, Tsugawa2016}, suggest up to 20\% fuel saving for the non-leading vehicles in a platoon. Meanwhile, platooning also allows vehicles to drive closer than otherwise at the same speed, improving traffic throughput~\cite{vanArem2006} and even safety. To date, platooning has received significant attention from private fleets and commercial carriers~\cite{tsugawa2011automated}.

However, it is algorithmically non-trivial to capitalize the platooning potential to minimize the convoy's total fuel consumption. First, it requires exploring three distinct design spaces: path planning, speed planning, and platooning configuration. Path planning and speed planning are well-recognized mechanisms for saving fuels by optimizing the driving paths and the speed profiles over individual road segments. Real-world studies show up to 21\% fuel saving by driving at fuel-economic speeds along an optimized path~\cite{Tunnell2011EstimatingTF}. Meanwhile, \revt{platooning configurations define the road segments for platooning and the trucks to participate in each platoon. In other words, it describes where and when a set of the trucks form and end a platoon.} It adds another layer of combinatorial structure to the design space.  

Second, the problem becomes even more complicated when we consider transportation deadlines for individual trucks. Freight delivery with time guarantee is common in truck operation; see examples and discussions in~\cite{amazon,ashby2006protecting}. As estimated by the U.S. Federal Highway Administration (FHWA)~\cite{mallett2006freight}, an unexpected delay can increase freight cost by 50\% to 250\%. The operator needs to keep track of the time spent so far and the remaining time budget for individual trucks, when optimizing {over the design space discussed above}, to ensure on-time arrival at destinations. Such a 4-way coupling among path planning, speed planning, platooning configuration, and deadline creates a unique challenge. Indeed, the problem of minimizing fuel consumption for multiple platooning trucks is shown to be APX-hard in general~\cite{DRAKE200415}, and it admits no polynomial-time algorithm with a fuel consumption within a constant ratio to the minimum, unless $\mathbf{P}=\mathbf{NP}$~\cite{Larson2016CoordinatedPR, Vandehoef2015}.

\begin{table*}
    \caption{Summary of existing studies on minimizing fuel consumption for truck transportation with platooning. 
    }
    \label{tab:Platooning_research}
    
    {\centering
    \begin{tabular}{|l|c|c|c|c|l|c|}
    \hline
     \textbf{Existing Study} & \textbf{Path Planning} & \textbf{Speed Planning} & \textbf{Platooning} & \textbf{Deadline} & \textbf{Approach} & {\textbf{Bound$^*$}}\\
     
    \hline ~\cite{Vandehoef2015,Vandehoef201509, Liang2016, Vandehoef2018} &\xmark  &\checkmark & \xmark &  \checkmark & Convex optimization 
    & \xmark\\
    \hline
    ~\cite{Larson2013} & \xmark & \checkmark & \checkmark& \checkmark & Distributed method& \xmark\\
    \hline
        ~\cite{LARSSON2015258} & \checkmark &\xmark & \checkmark & \xmark& Hub heuristic &\xmark \\
    \hline
        ~\cite{Larson2016CoordinatedPR} & \checkmark &\xmark & \checkmark &\checkmark & Integer programming (IP) & \xmark\\
    \hline
        ~\cite{LUO2018213} & \checkmark & \revt{\checkmark} & \checkmark &\checkmark & K-set clustering + IP &\xmark \\
    \hline
    ~\cite{BOYSEN201826} & \xmark & \xmark & \checkmark & \checkmark & Mixed integer programming & \checkmark\\
    
    \hline
    \revt{~\cite{Luo2021repeated}} & \checkmark & \xmark & \checkmark & \checkmark & Iterative heuristic & \xmark \\
    \hline
        This work & \checkmark& \checkmark & \checkmark & \checkmark & Dual-based optimization & Posterior \\
    \hline
    \end{tabular}}
    
    \vspace{2pt}
    \revt{$^*$: Entries in the column represent whether there is a guaranteed performance gap to the optimal. Usually, performance bounds are independent to the solutions obtained by the algorithm and can be computed beforehand. Meanwhile, posterior performance pounds are solution dependent and can be computed after the solution is obtained.
    }
\end{table*}

\textbf{Existing studies}. Previous researches on truck platooning optimization \revt{mostly} consider either speed planning for given paths or path planning with \revt{a set of constant speeds}. Assumed all the trucks drive on their shortest or pre-specified paths, the authors of \cite{Vandehoef2015, Vandehoef201509, Vandehoef2018} obtain the optimal speed profiles for each truck by solving a series of convex optimization problems. The authors in \cite{Liang2016} and \cite{ZHANG20171} use a similar technique to solve a fixed-path platooning problem. When the traveling speed on each road segment is fixed, the path planning in the platooning problem is often formulated as an integer linear programming~(ILP) problem  \cite{Larson2016CoordinatedPR, LUO2018213, LARSEN2019249, Nourmohammadzadeh2016TheFP, Nourmohammadzadeh2018}. In \cite{Larson2016CoordinatedPR, LUO2018213, LARSEN2019249}, heuristics such as the best-pair algorithms are proposed to solve the ILP in the platooning problem. Meanwhile, \cite{Nourmohammadzadeh2016TheFP} and \cite{Nourmohammadzadeh2018} apply genetic algorithms to path planning. \cite{BOYSEN201826} \revt{employs} a mixed-integer programming model to study the platooning of trucks along an identical path but with different time windows and maximum platooning length.
\revt{The studies in \cite{LUO2018213, Nourmohammadzadeh2018fuel} consider the platooning optimization with discrete speeds for individual edges, which can be modeled as a path planning problem by introducing multiple parallel edges between the starting and ending nodes of original edges, each with a constant (and different) speed.}

\revrr{Other related studies include~\cite{XIONG2021482,LEE2021102664,SALA2021116,ZHOU2021102882,Luo2021repeated}.}
\cite{XIONG2021482} \revt{formulates} a Markov decision process to optimize coordination of vehicle platooning at highway junctions.
\cite{LEE2021102664} \revt{presents} a new platoon formation strategy that optimizes the platooning number and configuration of heterogeneous vehicles in a single route.
\cite{SALA2021116} \revt{proposes} a macroscopic model to analyze the probability distribution of the length of platoons,  for a given traffic demand.
\cite{ZHOU2021102882} \revt{develops} an analytical model to investigate the impacts of platoon size.
\revt{\cite{Luo2021repeated} develops a heuristic iterative procedure to optimize the routes and departure schedules for the general k-platooning problem under deadline constraints, without speed planning involved. The heuristic iterative procedure may not converge or have performance guarantee. In comparison, in this paper, we explore the full design space, to jointly optimize continuous speed planning, path planning, and platooning configuration, for a popular two-truck platooning problem. The iterative procedure in our scheme is a dual subgradient algorithm that guarantees to converge to the dual optimal, and we provide a posterior performance bound for it. We refer readers to \cite{BHOOPALAM2018212} for a comprehensive overview of studies related to coordinated platooning optimization.}

Table \ref{tab:Platooning_research} summarizes existing studies on fuel consumption {minimization} in truck platooning.

\textbf{Contributions.} In this paper, we focus on a prevalent two-truck platooning scenario where two heavy-duty trucks travel across the national highway from separate origins to destinations~\cite{UKreport2014, Floridareport2018, NACFEReport2016, Janssen2015TruckPD, 2017StudyOC}. Two-truck platooning is not only the mainstream of current platooning practice~\cite{NACFEReport2016}, but it also incurs minimum safety concern for the surrounding traffic sharing the roads~\cite{UKreport2014}~\cite{Janssen2015TruckPD}~\cite{Floridareport2018}. 

To our best knowledge, this is the first work in the literature to minimize total fuel consumption for truck transportation by simultaneously optimizing path planning, speed planning, and platooning configuration, subject to individual deadline constraints. We also consider \rev{departure coordination},
where trucks can strategically wait at the origins for proper schedule alignment to form an efficient platoon later. We make the following contributions.

$\rhd$ In Sec.~\ref{sec:model}, we present a novel formulation for the two-truck platooning problem of minimizing fuel consumption subject to individual deadline constraints, taking into account the entire design space of path planning, speed planning, and platooning configuration. Our formulation is built upon an elegant problem structure that significantly simplifies the platooning optimization. \revt{We leverage the formulation to prove that the two-truck platooning problem is \emph{weakly} NP-hard and admits a Fully Polynomial Time Approximation Scheme (FPTAS). It guarantees to return an approximate and feasible solution, whose fuel consumption is no large than $(1+\epsilon)$ times the minimum for any $\epsilon>0$, with a time complexity polynomial in the size of the transportation network and $1/\epsilon$.} These results are in sharp contrast to the general multi-truck platooning problem that is APX-hard and repels any FPTAS~\cite{Larson2016CoordinatedPR, Vandehoef2015}, suggesting that the two problems are fundamentally different.

$\rhd$ While the FPTAS works effectively for small-/medium- scale problem instances, it may still incur excessive running time for large-scale problem instances in practice. We thus develop an efficient dual-based algorithm for national-scale problem instances in Sec.~\ref{sec:meth}. {It is an iterative algorithm that always converges. Further, we prove that each iteration only incurs polynomial time complexity, albeit it requires solving an integer linear programming problem optimally. 
    We characterize a condition under which the algorithm generates an optimal solution and derive a posterior performance bound when the condition is not satisfied.}

$\rhd$ In Sec.~\ref{sec:simulation}, we carry out extensive numerical experiments using real-world traces over the U.S. national highway system. \revt{The results show that our algorithm obtains close-to-minimum fuel consumption for the two-truck platooning problem} and achieves up to $24\%$ fuel saving as compared to baseline alternatives adapted from state-of-the-art schemes. 

We conclude and discuss future work in Sec.~\ref{sec:conclusion}.

\section{Model and Problem Formulation}
\label{sec:model} 

\subsection{System Model}

We model a national highway network as a directed graph $G\triangleq(V,E)$.
Each edge $e\in E$ represents a road segment with homogeneous grade,
surface type, and environmental conditions. Each node $v\in V$ denotes
an intersection or connecting point of adjacent road segments. We
define $N\triangleq|V|$ as the number of nodes and $M\triangleq|E|$
as the number of edges. For each $e\in E$, we use $D_{e}>0$ to denote
its length and $r_{e}^{l}>0$ (resp. $r_{e}^{u}$) to denote the minimum
(resp. maximum) driving speed. {

There are mainly three types of truck fuel consumption models~\cite{zhou2016review}: (i) first-principle white-box models, e.g., the kinematics model based on the Newton's second law~\cite{cachon2007fuel,heywood2018internal, rakha2011virginia, moskwa1992modeling,Liang2016, Vandehoef2018, Guo2019}; (ii) black-box models based on fitting a mathematical function to real-world data, e.g.,~\cite{saerens2013assessment, hellstrom2009look, park2010development, post1984fuel, faris2011vehicle, piccoli2013estimating, leung2000modelling, ahn2002estimating}; (iii) grey-box models that lie between the two models, e.g.,~\cite{pelkmans2004development}. \revt{White-box models are developed by using domain knowledge in physical and chemical processes relevant to fuel consumption. It is most accurate and allows one to investigate tuning internal physical/chemical processes for higher fuel efficiency. However, the white-box model usually has a large number of parameters and can be expensive to obtain. On the contrary, the black-box model is less accurate than the white-box model, does not allow internal-process tuning, but has fewer parameters to determine.

For our purpose of minimizing fuel consumption and similar to existing eco-routing studies, we only need to model the (external) speed to fuel-consumption relationship, instead of the whole physical model with internal process tuning. As such, black-box models suffice, and they are easier to construct than white-box models. Indeed, the particular black-box model used in our simulation only has four parameters, and the model predicts the fuel consumption pretty well~\cite[Fig. 5]{Deng2016EnergyefficientTT}.}


\revt{To proceed}, we define function $f_{e}(r_{e}):[r_{e}^{l},r_{e}^{u}]\to\mathbb{R}^{+}$
as the \emph{(instantaneous) fuel consumption rate} for the truck traveling over $e$
at speed $r_{e}$~\footnote{This paper considers the setting where the two trucks have identical
fuel consumption rate functions. For example, they are the same model
and carry the same truckload weight. Our analysis and solution can
also be extended to the case with heterogeneous fuel
consumption rate functions. }. \revt{It is well understood that $f_{e}(r_{e})$ can be modeled as a strictly
convex function over the speed range $[r_{e}^{l},r_{e}^{u}]$; see
e.g.,~\cite{Deng2016EnergyefficientTT,Liu2020TITS_energy_truck} with justifications.}
Many existing works \cite{nam2005fuel, Ardenkani2001Traffic, An1993MODELOF, Yue2008MESOSCOPICFC, gao2007modeling, nie2013eco, demir2011comparative, lajunen2014energy} also use polynomial functions to model the \emph{fuel consumption rate} $f_e\left(\cdot\right)$. Following the practice, in this paper, we assume that $f_e\left(\cdot\right)$ is a strictly convex polynomial function.} As a result, according to Jensen’s inequality, driving at a constant speed is most fuel-economic inside a road segment. Thus, without loss of optimality, it suffices to consider trucks traveling at constant speeds over individual road segments~\cite{Deng2016EnergyefficientTT,Liu2020TITS_energy_truck}. \rev{The fuel consumption due to acceleration and deceleration between
consecutive road segments is negligible as compared to that of driving
over individual segments}\footnote{The speed transition spans over only several
hundred feet~\cite{yang2016truck}, while trucks travel on a road segment for several miles or longer~(3.1 miles on average in the US national highway network as shown in Sec.~\ref{sec:simulation}). 
It is reported in~\cite{Liu2020TITS_energy_truck} that the fuel usage due to acceleration and deceleration between consecutive road segments is less than $0.5\%$ of the total fuel consumption.}.
For ease of discussion, we further define the \emph{fuel consumption
function} for a truck to traverse $e$ as 
\begin{equation}
    c_{e}(t_{e})\triangleq t_{e}\cdot f_{e}\left({D_{e}}/{t_{e}}\right),\forall t_{e}\in[t_{e}^{l},t_{e}^{u}],\label{eq:c_e_def}
\end{equation}
where $t_{e}$ is the travel time of the truck and $t_{e}^{l}\triangleq{D_{e}}/{r_{e}^{u}}$
and $t_{e}^{u}\triangleq{D_{e}}/{r_{e}^{l}}$ are the minimum and
maximum travel duration, respectively. We note that $c_{e}(\cdot)$
is strictly convex over $[t_{e}^{l},t_{e}^{u}]$ as its second-order
derivative $c^{\prime\prime}(t_{e})=f^{\prime\prime}\left({D_{e}}/{t_{e}}\right){D_{e}^{2}}/{t_{e}^{3}}$
is positive. The function has taken into account the influence of
road grade, surface type, truck weight, and environment conditions.
We define the \emph{platooning fuel consumption function} when the
two trucks platoon on road segment $e$ as 
\begin{align}
    c_{e}^{p}(t_{e})= & 2(1-\eta)c_{e}(t_{e}),\label{platoonfunction}
\end{align}
where $\eta\in(0,1)$ is the average fuel-saving ratio and it usually
takes value around 0.1~\cite{Larson2013,LARSSON2015258,Larson2016CoordinatedPR}.
\revt{Thus when the two trucks platoon at the same speed, they jointly save a {$2\eta$} fraction of total fuel as compared to driving separately. Note that this model focuses on the total fuel saving; the leading and following trucks can still have different fuel saving performance. Similar to~\cite{ LARSSON2015258, Larson2016CoordinatedPR}, we assume constant fuel-saving rate ratio. More sophisticated fuel models can be derived from Newton’s second law, e.g., those in \cite{Alam2013LCC, Vandehoef2015, Liang2016}, and the fuel-saving rate can be speed-dependent~\cite{Liang2016}. For these model, it can be verified that the corresponding platooning fuel consumption function is still convex and our approach can be applied directly.}

We consider the scenario of two-truck platooning with deadlines. Each
truck is associated with a task, denoted by $(s_{i},d_{i},\alpha_{s}^{i},\beta_{s}^{i}, \alpha_{d}^i, \beta_{d}^{i})$,
$i=1,2$. Here $s_{i}$ and $d_{i}$ are origin and destination for
truck $i$, respectively. The pickup window $[\alpha_{s}^{i},\beta_{s}^{i}]$ defines the
time period for pickup at $s_{i}$, and the delivery window $[\alpha_{d}^i, \beta_{d}^{i}]$
represents the time period for delivery at $d_{i}$. 
We define the traveling time window of truck $i$ ($i=1,2$) as $\left[T_{s}^{i},T_{d}^{i}\right]$,
where $T_{s}^{i}\triangleq\alpha_{s}^{i}$ and $T_{d}^{i}\triangleq\beta_{d}^{i}$
are the earliest departure time and the latest arrival time, respectively.


\begin{figure*}[t]
\centering
\includegraphics[width=\textwidth]{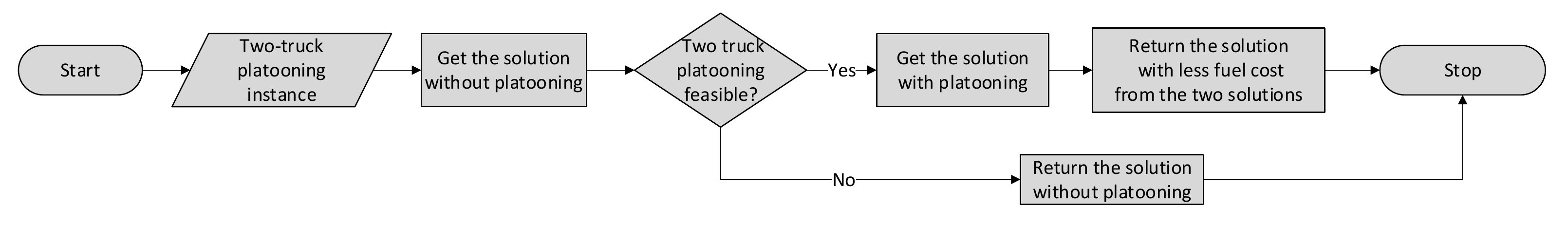}
\caption{The flow chart of our overall solution.}
\label{fig:sysFlow} 
\end{figure*}

\subsection{To Platoon or not to Platoon}
\label{ssec:platooning.or.not}

It may not always be feasible for the two trucks to platoon
without missing deadlines. Even if it is, it may
not necessarily save fuel \revt{when compared to} driving separately.
Given a general two-truck platooning instance, we follow the procedure
shown in Fig.~\ref{fig:sysFlow} to obtain a fuel-minimizing solution.
We first compute the optimized total fuel consumption of two trucks
driving separately, by applying the single-truck algorithm~\cite{Deng2016EnergyefficientTT}
to separately optimize the path and speed planning of individual trucks.
Next, we apply the polynomial-time Algorithm \ref{alg:feasCheck}
proposed in Appendix \ref{subsec:checkFeasi} to check the feasibility
of two-truck platooning under individual deadlines. If infeasible,
the two trucks can only drive separately. We output the separate-driving
solutions computed in the first step. Otherwise, we solve the feasible
two-truck platooning problem with our proposed algorithm to be discussed
later. Afterward, we compare the platooning solution and the separate-driving
solution and output the one with lower fuel consumption.

With the above understanding, in the rest of the paper, we focus on
the feasible two-truck platooning instances.

\subsection{An \rev{Elegant} Problem Structure and Insights}

The conventional formulation of the multi-truck platooning problem
introduces a large number of binary variables, each indicating whether
a group of trucks platoon over a road segment or not. The formulated
problem has a combinatorial structure and is challenging to solve.
Indeed, it is APX-hard\footnote{A problem is APX-hard if it is NP-hard and cannot be approximated within an arbitrary constant factor $(\geq 1)$ in polynomial time unless $\mathbf{P}=\mathbf{NP}$.} and admits no FPTAS\footnote{In particular, the authors in \cite{LARSSON2015258} show that the
multi-truck platooning problem can be reduced to the Steiner tree
problem, which is known to be APX-hard and no FPTAS exists~\cite{BERN1989171,DRAKE200415}.}. Interestingly, the following lemma reveals an \rev{elegant} structure
of the feasible two-truck platooning problem, which allows us to significantly
simplify the problem formulation. \begin{lemma} \label{lem:platoonOnce}
For any feasible two-truck platooning instance, there exists an optimal
solution in which the two trucks platoon only once. \end{lemma}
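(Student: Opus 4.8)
The plan is to argue by an exchange/contradiction: I take any optimal solution and show that whenever the two trucks form two or more separate platoons, one can splice a consecutive pair into a single platoon without increasing fuel, so that at most one platoon survives. First I would make the phrase ``platoon only once'' precise: call a \emph{platooning period} a maximal contiguous stretch of road on which both trucks travel together (identical edges, synchronized in time, equal speed), and aim to show there is an optimal solution with at most one such period. Suppose for contradiction that an optimal solution contains at least two periods, and fix two consecutive ones: the first ends at a node $b$ at time $\tau_b$ and the next begins at a node $a$ at time $\tau_a$. The decisive structural observation is that, because both trucks are together at $b$ at time $\tau_b$ and together again at $a$ at time $\tau_a$, \emph{each} truck travels from $b$ to $a$ within the \emph{same} time budget $\Delta\triangleq \tau_a-\tau_b$; truck~$1$ uses some path $\pi_1$ and truck~$2$ some path $\pi_2$.

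Next I would perform the splice. For a fixed path $\pi$ and time budget $\Delta$, let $g(\pi,\Delta)$ denote the minimum fuel for a single truck to traverse $\pi$ within time $\Delta$; by the strict convexity of each $c_e(\cdot)$ this is well defined (the optimal speeds follow from allocating the budget across edges), so the original middle cost is at least $g(\pi_1,\Delta)+g(\pi_2,\Delta)$. I then modify the solution so that \emph{both} trucks take the cheaper of the two paths, say $\pi_1$ with $g(\pi_1,\Delta)\le g(\pi_2,\Delta)$, follow its fuel-optimal speed profile, and platoon along it. Since the trucks are identical and truck~$1$ already realized this profile within $\Delta$, this is feasible; both still depart $b$ at $\tau_b$ and arrive at $a$ at $\tau_a$, so every decision and deadline outside the middle is untouched. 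The new middle cost is
\begin{equation}
2(1-\eta)\,g(\pi_1,\Delta)\;\le\;(1-\eta)\bigl(g(\pi_1,\Delta)+g(\pi_2,\Delta)\bigr)\;<\;g(\pi_1,\Delta)+g(\pi_2,\Delta),
\end{equation}
using $\eta\in(0,1)$ and positivity of fuel. Thus the modification strictly lowers total fuel while fusing the first period, the middle, and the second period into one contiguous platooning period.

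This contradicts optimality, so no optimal solution contains two separate platooning periods; repeating the argument (or inducting on the number of periods) leaves at most one, which is the claim. The step I expect to be the main obstacle is establishing the synchronization structure rigorously --- that two consecutive platooning periods force both trucks onto a common node-to-node trip within an \emph{identical} time window --- together with the bookkeeping that the splice preserves feasibility \emph{exactly} (unchanged arrival/departure times at $b$ and $a$) rather than merely approximately. I would also dispose of the degenerate case in which the ``middle'' contains no edges but only joint waiting at a common node: there the two trucks are never actually apart, so the two stretches already constitute a single period and nothing needs to be spliced.
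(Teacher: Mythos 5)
Your proposal is correct and is essentially the paper's own argument: between two consecutive platooning periods the trucks leave a common node and must rejoin at a common node after the same elapsed time (no mid-route waiting is allowed), so both can be rerouted onto the cheaper of the two intermediate paths and platoon there, reducing the middle cost from $g(\pi_1,\Delta)+g(\pi_2,\Delta)$ to $2(1-\eta)g(\pi_1,\Delta)$. The only nit is that you should take the time budget as an equality (or simply reuse truck 1's realized speed profile, which takes exactly $\Delta$) so that the modified trucks arrive at $a$ precisely at $\tau_a$ rather than possibly early; with that adjustment the splice is exact and the argument, including your strict-improvement strengthening, goes through.
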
 \begin{proof}
The proof is presented in Appendix \ref{lem:platoonOnceproof}. \end{proof}

\begin{figure}[t]
    \centering \includegraphics[width=\columnwidth]{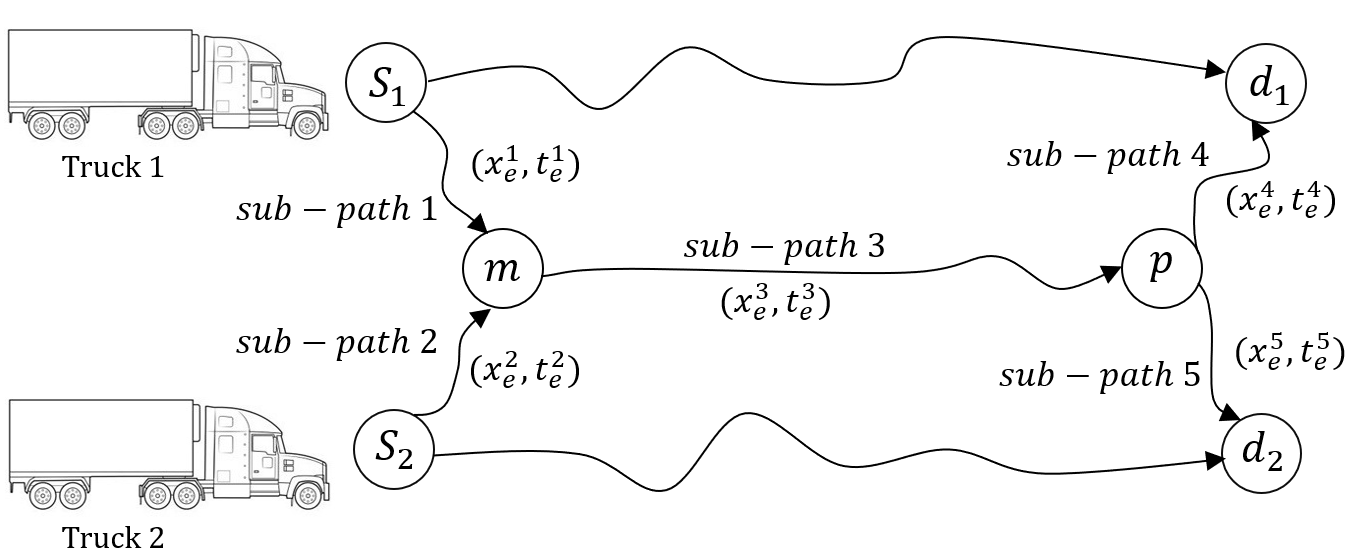}
    \caption{An illustration of our two-truck platooning problem decomposition.}
    \label{fig:two-truck-platooning} 
\end{figure}

Lemma~\ref{lem:platoonOnce} covers the observation in \cite{LARSSON2015258}
as a special case without deadline constraints and speed planning.
It implies that without loss of optimality, we can focus on solutions
in which the two trucks form a platoon only once. This structure significantly
reduces the number of platooning configurations to consider, leading
to a new formulation that admits efficient algorithms. In particular,
applying Lemma~\ref{lem:platoonOnce}, we divide the two-truck platooning
problem into {merging and splitting points selection problem and the following} five sub-problems: 

\vspace{-5pt}

\begin{enumerate}
\item pre-platooning path and speed planning for truck 1;


\item pre-platooning path and speed planning for truck 2;


\item in-platooning path and speed planning for truck 1 and 2;


\item post-platooning path and speed planning for truck 1;


\item post-platooning path and speed planning for truck 2. 
\end{enumerate}
\vspace{-3pt}
 We {denote} the five decomposed paths {by} sub-path 1/2/3/4/5,
{respectively; see Fig.~\ref{fig:two-truck-platooning} for an illustration}.
For  ease of presentation, we denote the set $\left\{ 1,...,k\right\} $
as $\mathbb{N}_{k}$. We introduce binary variables $x_{e}^{i},e\in E,i\in\mathbb{N}_{5}$
for path planning, where 
\[
    x_{e}^{i}=\left\{ \begin{array}{lr}
    1,\quad\text{if edge }e\text{ is selected by the sub-path }i;\\
    0,\quad\text{otherwise.}
    \end{array}\right.
\]
We also introduce non-negative variables $t_{e}^{i}$
to denote the travel time of the (corresponding) trucks over edge $e$
on the sub-path $i$. 
Furthermore, we use binary variables $y_{v}$
and $z_{v}$ to indicate whether the two trucks start and
finish platooning at node $v\in V$, respectively.

\subsection{Problem Formulation}

For simplicity of presentation, we define $\vec{x}\triangleq(x_{e}^{i},i\in\mathbb{N}_5)_{e\in E}$,
$\vec{y}\triangleq(y_{v})_{v\in V}$, $\vec{z}\triangleq(z_{v})_{v\in V}$, $\vec{t}\triangleq(t_{e}^{i},i\in\mathbb{N}_5)_{e\in E}$, and  $\mathcal{T}\triangleq\big\{\vec{t}: t_e^{l}\leq t_e^i\leq t_e^{u}, \forall e\in E, i\in \mathbb{N}_5 \big\}$.
We define $\mathcal{P}$ as the feasible set of $\left(\vec{x},\vec{y},\vec{z}\right)$ as follows: 
\begin{align}
    \mathcal{P}\triangleq\Big\{ & \left(\vec{x},\vec{y},\vec{z}\right)\big|\;\;x_{e}^{i},y_{v},z_{v}\in\{0,1\},\forall e\in E,i\in\mathbb{N}_{5},v\in V,\nonumber \\
     & \sum_{e\in\textbf{Out}(v)}x_{e}^{1}-\sum_{e\in\textbf{In}(v)}x_{e}^{1}=\textbf{1}_{v=s_{1}}-y_{v},\forall v\in V,\nonumber \\
     & \sum_{e\in\textbf{Out}(v)}x_{e}^{2}-\sum_{e\in\textbf{In}(v)}x_{e}^{2}=\textbf{1}_{v=s_{2}}-y_{v},\forall v\in V,\nonumber \\
     & \sum_{e\in\textbf{Out}(v)}x_{e}^{3}-\sum_{e\in\textbf{In}(v)}x_{e}^{3}=y_{v}-z_{v},\forall v\in V,\nonumber \\
     & \sum_{e\in\textbf{Out}(v)}x_{e}^{4}-\sum_{e\in\textbf{In}(v)}x_{e}^{4}=z_{v}-\textbf{1}_{v=d_{1}},\forall v\in V,\nonumber   
     \\
     & \sum_{e\in\textbf{Out}(v)}x_{e}^{5}-\sum_{e\in\textbf{In}(v)}x_{e}^{5}=z_{v}-\textbf{1}_{v=d_{2}},\forall v\in V,\nonumber \\
     & \sum_{v\in V}y_{v}=1,\sum_{v\in V}z_{v}=1\Big\},\label{feasible_region}
\end{align}
where $\textbf{Out}(v)$ and $\textbf{In}(v)$ are the set of outgoing
edges and incoming edges of node $v$, respectively. $\textbf{1}_{\left\{ *\right\} }$
is an indicator variable taking value 1 if the statement $*$ is true
and 0 otherwise. The first five constraints in $\mathcal{P}$ are
the flow conservation requirements for the 5 sub-paths. The constraints
on $y_{v}$ and $z_{v}$ restrict the solution to platoon only once.

We now model the deadline constraints and speed planning. The speed planning is done by adjusting the travel time $t_e^i$ for individual edge $e$. To increase
fuel efficiency, we consider \rev{departure coordination}
where trucks
can strategically wait at the origins to properly align the schedule
to form an efficient platoon later. Consequently, the two trucks start
to platoon from the merging point at time
\begin{align}
\max & \left\{ T_{s}^{1}+\sum_{e\in E}t_{e}^{1}x_{e}^{1},\;\; T_{s}^{2}+\sum_{e\in E}t_{e}^{2}x_{e}^{2}\right\} .
\end{align}
The two \revt{expressions} in the bracket are the arrival times at the merging
point of the two trucks, without \revt{departure coordination}. If one term in
the bracket is smaller than the other, the corresponding truck waits
at the origin or a nearby rest area to synchronize the arrival time
at the merging point. \revt{Without loss of generality, we ignore the cost of strategic waiting at the origin. It can be incorporated into our model by adding a dummy waiting edge at the origin with a designated cost function.} To simplify the formulation later, we define
\begin{subequations}
\label{equ:delta}
\begin{align}
    \delta_{1}(\vec{t},\vec{x})= & T_{s}^{1}+\sum_{e\in E}t_{e}^{1}x_{e}^{1}+\sum_{e\in E}t_{e}^{3}x_{e}^{3}+\sum_{e\in E}t_{e}^{4}x_{e}^{4}-T_{d}^{1}, \label{equ:delta_1}\\
    \delta_{2}(\vec{t},\vec{x})= & T_{s}^{2}+\sum_{e\in E}t_{e}^{2}x_{e}^{2}+\sum_{e\in E}t_{e}^{3}x_{e}^{3}+\sum_{e\in E}t_{e}^{4}x_{e}^{4}-T_{d}^{1}, \label{equ:delta_2}\\
    \delta_{3}(\vec{t},\vec{x})= & T_{s}^{1}+\sum_{e\in E}t_{e}^{1}x_{e}^{1}+\sum_{e\in E}t_{e}^{3}x_{e}^{3}+\sum_{e\in E}t_{e}^{5}x_{e}^{5}-T_{d}^{2}, \label{equ:delta_3}\\
    \delta_{4}(\vec{t},\vec{x})= & T_{s}^{2}+\sum_{e\in E}t_{e}^{2}x_{e}^{2}+\sum_{e\in E}t_{e}^{3}x_{e}^{3}+\sum_{e\in E}t_{e}^{5}x_{e}^{5}-T_{d}^{2}. \label{equ:delta_4}
\end{align}
\end{subequations}
It is straightforward to verify that truck 1 meets its deadline if
and only if $\delta_{1}(\vec{t},\vec{x})\leq0$ and $\delta_{2}(\vec{t},\vec{x})\leq0$.
Similarly, truck 2 meets deadline if and only if $\delta_{3}(\vec{t},\vec{x})\leq0$
and $\delta_{4}(\vec{t},\vec{x})\leq0$.

To this end, we formulate the feasible two-truck platooning problem
as follows: 
\begin{align}
    \min\quad & \sum_{i\in\mathbb{N}_{5}\backslash\left\{ 3\right\}}\sum_{e\in E}c_{e}(t_{e}^{i})x_{e}^{i}+\sum_{e\in E}c_{e}^{p}(t_{e}^{3})x_{e}^{3}\label{equ:objective} 
    \\
    \mathrm{s.t.}\quad & \delta_{j}(\vec{t},\vec{x})\leq0,j\in\mathbb{N}_{4};\label{equ:deadlines}\\
    \mbox{var.}\quad & \left(\vec{x},\vec{y},\vec{z}\right)\in\mathcal{P}, \vec{t}\in\mathcal{T}.\nonumber 
\end{align}
The objective in $\eqref{equ:objective}$ is the total fuel consumption
of the two trucks. 
 $\eqref{equ:deadlines}$ describes their deadline
constraints. $\mathcal{P}$ restricts the variables
so that the two trucks platoon only once. Compared to the conventional
formulation, our formulation in~$\eqref{equ:deadlines}$ leverages
Lem.~\ref{lem:platoonOnce} to simplify the optimization of platooning
configuration, by only considering solutions that platoon only once.
Further, the formulation only involves four deadline constraints.
We show in Sec.~\ref{sec:meth} that this structure significantly
simplifies the design of the dual-based algorithm. 

\subsection{Hardness}
It is not surprising that the two-truck platooning problem is challenging
to solve due to its combinatorial structure and multiple deadline
constraints. Meanwhile, it is easier than the general multi-truck
counterpart. 

\begin{theorem} \label{Thm:NP} The two-truck platooning problem
in $\eqref{equ:objective}$ is NP-hard, but only in the weak sense.
In particular, there is a Fully Polynomial Time Approximation Scheme (FPTAS)
for the two-truck platooning problem that achieves $\left(1+\epsilon\right)$
approximation ratio $\left(\forall \epsilon>0\right)$ with a time complexity
of $\mathcal{O}\left(N^{6}/\epsilon^{4}\right)$. 
\end{theorem}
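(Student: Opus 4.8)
The plan is to establish the two halves of the statement separately: an NP-hardness reduction for the lower bound, and an explicit rounding-and-scaling scheme for the FPTAS. The \emph{weak} qualifier then comes for free, since the existence of an FPTAS rules out strong NP-hardness (assuming $\mathbf{P}\neq\mathbf{NP}$ and a polynomially bounded objective). So it suffices to prove ordinary NP-hardness together with the FPTAS.

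For NP-hardness, I would reduce from the single-truck energy-efficient timely routing problem, which is itself weakly NP-hard via a standard reduction from \textsc{Partition} that encodes each integer as a two-edge ``diamond'' gadget trading travel time against fuel, under a single overall deadline. The key observation is that the single-truck problem embeds into ours. Given a single-truck instance with origin $s$, destination $d$, and window $[T_s,T_d]$, I create two trucks with identical tasks $s_1=s_2=s$, $d_1=d_2=d$ and identical windows. Because platooning only discounts fuel (by the factor $2(1-\eta)<2$ in \eqref{platoonfunction}) while any non-platooned edge is charged at the full rate, one can show the optimum is attained by merging at $s$, splitting at $d$, and driving the single-truck fuel-optimal path together; the optimal two-truck fuel then equals $2(1-\eta)$ times the single-truck optimum. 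Hence an exact two-truck solver would solve the single-truck problem, which gives NP-hardness. (A direct reduction from \textsc{Partition} into one chain of diamond gadgets would also work and avoids invoking the single-truck result.)

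For the FPTAS, the plan is to exploit Lemma~\ref{lem:platoonOnce} to tame the combinatorics. First I would enumerate the merge node $u$ (where $y_u=1$) and the split node $w$ (where $z_w=1$), costing $O(N^2)$. For fixed $(u,w)$, the four deadline constraints in \eqref{equ:deadlines} depend on the sub-path schedules only through two scalar interface times: the platoon \emph{start} time $S$ at $u$ and the platoon \emph{end} time $B$ at $w$. Indeed, with departure coordination each truck need only arrive at $u$ no later than $S$, so every pre-platoon sub-path contributes its minimum fuel subject to an arrival-by-$S$ budget; the platoon sub-path contributes its discounted minimum fuel over the budget $B-S$; and each post-platoon sub-path contributes its minimum fuel over the residual budgets $T_d^1-B$ and $T_d^2-B$. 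Writing $\phi_i(\cdot)$ for the minimum-fuel-within-budget function of sub-path $i$ between its fixed endpoints, the fixed-$(u,w)$ problem reduces to the two-dimensional minimization of $\phi_1(S)+\phi_2(S)+\phi_3^p(B-S)+\phi_4(T_d^1-B)+\phi_5(T_d^2-B)$ over $S\le B$. Each $\phi_i$ is itself a single-truck timely-routing value, which I would approximate by the standard cost-scaling dynamic program: round fuel onto a geometric grid of ratio $1+\Theta(\epsilon)$ and run a Bellman--Ford-type recursion over $O(N)$-hop paths, yielding a polynomial-size approximate Pareto frontier whose fuel is within $1+\Theta(\epsilon)$ of optimal at every budget. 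I would then discretize $S$ and $B$ onto the frontier breakpoints, sweep all admissible pairs, and return the best feasible one; rounding budgets conservatively downward keeps every deadline strictly satisfied. As the objective is a sum of five terms, taking per-term tolerance $\Theta(\epsilon/5)$ gives an overall $(1+\epsilon)$ guarantee, and a careful count of the $O(N^2)$ merge/split pairs, the $O(N/\epsilon)$-resolution discretizations of $S$ and $B$, and the per-frontier computation over $O(N)$-edge sub-paths yields the stated $O(N^6/\epsilon^4)$ bound.

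The hard part will be the error and complexity bookkeeping in the combination step rather than any single ingredient. I expect the main obstacles to be: (i) proving rigorously that the four coupled deadline constraints and the $\max$ from departure coordination genuinely reduce to dependence on only the two interface times $S$ and $B$, so the combination is truly low-dimensional; (ii) controlling how the multiplicative per-sub-path errors compose across the five-term sum while simultaneously guaranteeing that the rounded schedule never violates a hard deadline---a minimization with hard constraints is unforgiving, so the rounding must err on the safe side without inflating fuel beyond $1+\epsilon$; and (iii) pinning down the exact polynomial degree, where the number of grid levels, the two interface discretizations, and the sub-path DP cost must be balanced to land precisely at $O(N^6/\epsilon^4)$.
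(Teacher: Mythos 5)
Your proposal is correct and follows essentially the same route as the paper: NP-hardness by collapsing to the single-truck restricted-shortest-path problem via identical origins/destinations (where full-route platooning is trivially optimal), and an FPTAS built from Lemma~\ref{lem:platoonOnce} by enumerating the $\mathcal{O}(N^2)$ merge/split pairs, decomposing into five RSP sub-problems, and applying Hassin-style fuel rounding-and-scaling with dynamic programming. The only cosmetic difference is that you coordinate the five sub-problems through the two interface times $S$ and $B$ restricted to the breakpoints of the approximate fuel--time Pareto frontiers, whereas the paper enumerates the scaled cost budgets of the sub-paths directly; since your breakpoints are induced by the fuel grid, the two coordinations are equivalent and both land at the stated $\mathcal{O}(N^6/\epsilon^4)$ bound.
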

\begin{proof} 
    See Appendix \ref{Thm:NP_Proof}.
\end{proof}
\revt{Thm. \ref{Thm:NP} and its proof show that the two-truck platooning problem is weakly NP-hard and admits an FPTAS.} 
This is in sharp contrast to the general multi-truck platooning problem (even without speed planning), which is APX-hard and repels any FPTAS.
Consequently, there exists a pseudo-polynomial time algorithm to solve the two-truck platooning problem exactly. 
Further, one can trade optimality loss for running time by using the FPTAS derived in the proof. \revt{It guarantees to return an approximate and feasible solution to the two-truck platooning problem, whose fuel consumption  is within a ratio of $(1+\epsilon)$ to the optimal for any $\epsilon>0$, with a time complexity of $\mathcal{O}\left(N^{6}/\epsilon^{4}\right)$. The FPTAS allows us to solve small-scale two-truck platooning instances efficiently.
Meanwhile, it may still incur excessive running time for solving large-scale instances. 
In the next section, we introduce a dual-subgradient algorithm for solving large-/national- scale instances.} 


\section{An Efficient Dual-Subgradient Algorithm}

\label{sec:meth} In this section, we design an efficient dual-subgradient algorithm
for solving a partially-relaxed dual problem of the two-truck platooning
problem, in which only a subset of the constraints are relaxed. While a fully-relaxed dual problem is always convex, the
partially-relaxed one is not. Indeed, it is still a combinatorial
problem and can be challenging to solve. As a key contribution, we
explore the problem structure to show that it can be solved optimally
in polynomial time by a dual-subgradient algorithm. 
{We derive a posterior bound on the optimality gap when a feasible primal solution is recovered from the dual optimal solution. We resort to the separate-driving solution when no feasible primal solution is recovered.}

\subsection{The Dual of the Two-truck Platooning Problem}
We relax the deadline constraints in $\eqref{equ:deadlines}$ and
obtain the following Lagrangian function 
\begin{align*}
    L\left(\vec{x},\vec{y},\vec{z},\vec{t},\vec{\lambda}\right) &=  \sum_{i\in\mathbb{N}_{5}\backslash\left\{ 3\right\} }\sum_{e\in E}c_{e}\left(t_{e}^{i}\right)x_{e}^{i} +\sum_{e\in E}c_{e}^{p}\left(t_{e}^{3}\right)x_{e}^{3}\\
    & \qquad +\sum_{i\in\mathbb{N}_{4}}\lambda_{i}\delta_{i}\left(\vec{t},\vec{x}\right),
\end{align*}
where $\vec{\lambda}\triangleq\left(\lambda_{1},\lambda_{2},\lambda_{3},\lambda_{4}\right)$
and $\lambda_{i}$ is the dual variable associated with the constraint
$\delta_{i}(\vec{t},\vec{x})\leq0$.  The corresponding dual problem
is given by 
\begin{align}
    \max\limits _{\vec{\lambda}\geq0}D\left(\vec{\lambda}\right)\triangleq & \max\limits _{\vec{\lambda}\geq0}\,\min\limits _{\left(\vec{x},\vec{y},\vec{z}\right)\in\mathcal{P},\vec{t}\in\mathcal{T}}L\left(\vec{x},\vec{y},\vec{z},\vec{t},\vec{\lambda}\right).\label{equ:dualprob}
\end{align}
Our idea is to solve this partially-relaxed dual problem optimally and recover
the corresponding primal solutions. At the first glance, {the inner problem seems challenging as it is still combinatorial
(due to the structure of the feasible set $\mathcal{P}$).} To proceed,
we re-express $D(\vec{\lambda})$ as follows:
\begin{align}
    \negthickspace D\left(\vec{\lambda}\right) & =h\left(\vec{\lambda}\right)+\underbrace{\min_{\underset{\in\mathcal{P}}{\left(\vec{x},\vec{y},\vec{z}\right)}}\sum_{i\in\mathbb{N}_{5}}\sum_{e\in E}x_{e}^{i}\cdot\underbrace{\min_{t_{e}^{i}\in\left[t_{e}^{l},t_{e}^{u}\right]}g_{e}^{i}\left(\vec{\lambda},t_{e}^{i}\right)}_{\mbox{speed optimization}}}_{\mbox{path and platooning optimization}},\label{equ:dualfunction}
\end{align}
 where $h(\vec{\lambda})\triangleq\lambda_{1}\left(T_{s}^{1}-T_{d}^{1}\right)+\lambda_{2}\left(T_{s}^{2}-T_{d}^{1}\right)+\lambda_{3}\left(T_{s}^{1}-T_{d}^{2}\right)$
$+\lambda_{4}\left(T_{s}^{2}-T_{d}^{2}\right)$ is a linear function
and $g_{e}^{i}(\vec{\lambda},t_{e}^{i})$ is a generalized edge cost
function given by, for all $e\in E$,
\[
    g_{e}^{i}\left(\vec{\lambda},t_{e}^{i}\right)=\begin{cases}
    c_{e}\left(t_{e}^{1}\right)+\left(\lambda_{1}+\lambda_{3}\right)t_{e}^{1}, & i=1;\\
    c_{e}\left(t_{e}^{2}\right)+\left(\lambda_{2}+\lambda_{4}\right)t_{e}^{2}, & i=2;\\
    c_{e}^{p}\left(t_{e}^{3}\right)+\left(\sum_{j\in\mathbb{N}_{4}}\lambda_{j}\right) t_{e}^{3}, & i=3;\\
    c_{e}\left(t_{e}^{4}\right)+\left(\lambda_{1}+\lambda_{2}\right)t_{e}^{4}, & i=4;\\
    c_{e}\left(t_{e}^{5}\right)+\left(\lambda_{3}+\lambda_{4}\right)t_{e}^{5}, & i=5.
    \end{cases}
\]
The function $g_{e}^{i}(\vec{\lambda},t_{e}^{i})$, for all $i\in\mathbb{N}_{5}$
and $e\in E$, is convex in $t^i_e$ as the sum of a convex function and
a linear function. As such, the speed optimization module in $\eqref{equ:dualfunction}$, one for each edge, 
can be solved in polynomial time. For all $i\in\mathbb{N}_{5}$
and $e\in E$, define $w_{e}^{i}(\vec{\lambda})\triangleq\min_{t_{e}^{i}\in\left[t_{e}^{l},t_{e}^{u}\right]}g_{e}^{i}(\vec{\lambda},t_{e}^{i})$
and $t^{i,*}_e(\vec{\lambda})$ as the optimal objective value and a
corresponding optimal solution, respectively. We further define $\vec{t}^{\,*}(\vec{\lambda})\triangleq(t_{e}^{i,*}(\vec{\lambda}))_{i\in\mathbb{N}_{5}, e\in E}$.
After obtaining $w_{e}^{i}(\vec{\lambda})$ for all $i\in\mathbb{N}_{5}$ and $e\in E$, the path and platooning optimization module in $\eqref{equ:dualfunction}$ amounts to a combinatorial integer linear programming (ILP) problem: 
\begin{align}
    \min_{\left(\vec{x},\vec{y},\vec{z}\right)\in\mathcal{P}} & \sum_{i\in\mathbb{N}_{5}}\sum_{e\in E}w_{e}^{i}\left(\vec{\lambda}\right)x_{e}^{i}.\label{eq:prob_ppo_ILP}
\end{align}
In general, ILP problems can be difficult to solve.
Generic approaches such as the branch-and-bound method incur high
computational complexity and do not scale well to large problem
instances. As a key contribution, we show that one can relax binary
decision variables $\vec{x},\vec{y},\vec{z}$ into continuous ones
in $\left[0,1\right]$ and solve the resulting linear programming
(LP) problem without loss of optimality.  Recall that $M$ is the number of edges.
\begin{theorem} \label{Thm_ILPGap}
    {There is no integrality gap between the ILP problem in $\eqref{eq:prob_ppo_ILP}$ and the corresponding relaxed LP problem. Furthermore, {the optimal solution to the former can be recovered from that to the latter in $\mathcal{O}(M^2)$ time}.} \label{thm:zeroIntGap} 
\end{theorem}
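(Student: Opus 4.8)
The plan is to prove both assertions at once: I will show the LP relaxation of \eqref{eq:prob_ppo_ILP} has the same optimal value as the ILP, and that an integral optimizer can be read off any LP optimum. Since the LP is a relaxation of a minimization problem, its optimum $\mathrm{LP}^\star$ automatically satisfies $\mathrm{LP}^\star \le \mathrm{ILP}^\star$, so the whole content is the reverse inequality $\mathrm{LP}^\star \ge \mathrm{ILP}^\star$. First I would record two structural facts. (i) All edge weights are nonnegative: $w_e^i(\vec\lambda)=\min_{t}g_e^i(\vec\lambda,t)$ with $g_e^i$ the sum of the nonnegative fuel cost $c_e$ (or $c_e^p$) and a nonnegative multiple of $t>0$, so $w_e^i(\vec\lambda)\ge 0$; hence shortest paths under $w^i$ are well defined and any flow-cycle can only add cost. (ii) The two cardinality constraints $\sum_v y_v=1$ and $\sum_v z_v=1$ are redundant: summing the sub-path-$1$ flow-conservation rows over $v\in V$ makes the incidence terms cancel and forces $\sum_v y_v=1$, and likewise sub-path $4$ forces $\sum_v z_v=1$. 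With these, the relaxed feasible region is governed purely by the five blocks of flow-conservation equalities together with $\vec x,\vec y,\vec z\ge 0$. Consequently, at any integral point the constraints make $\vec y=\mathbf 1_{m}$, $\vec z=\mathbf 1_{p}$ for a single merging node $m$ and splitting node $p$, and \eqref{eq:prob_ppo_ILP} reduces to five independent shortest-path problems, so $\mathrm{ILP}^\star=\min_{m,p} Q(m,p)$, where $Q(m,p)\triangleq\mathrm{SP}_1(s_1,m)+\mathrm{SP}_2(s_2,m)+\mathrm{SP}_3(m,p)+\mathrm{SP}_4(p,d_1)+\mathrm{SP}_5(p,d_2)$ and $\mathrm{SP}_i$ is the shortest-path cost under $w^i$.

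Next I would lower-bound the cost of an arbitrary LP-feasible point $(\vec x,\vec y,\vec z)$ by flow decomposition. The sub-path-$1$ flow $\vec x^{\,1}$ is a nonnegative flow whose net divergence is $\mathbf 1_{s_1}-\vec y$, i.e.\ a flow routing $y_v$ units from $s_1$ to each node $v$; decomposing it into source-to-sink paths (discarding nonnegative-cost residual cycles) gives $\mathrm{cost}(\vec x^{\,1})\ge \sum_m y_m\,\mathrm{SP}_1(s_1,m)$, and identically $\mathrm{cost}(\vec x^{\,2})\ge \sum_m y_m\,\mathrm{SP}_2(s_2,m)$, $\mathrm{cost}(\vec x^{\,4})\ge \sum_p z_p\,\mathrm{SP}_4(p,d_1)$, $\mathrm{cost}(\vec x^{\,5})\ge \sum_p z_p\,\mathrm{SP}_5(p,d_2)$. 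The crux is the platoon flow $\vec x^{\,3}$, whose divergence is $\vec y-\vec z$: I would close it into a circulation by attaching a super-source $S$ feeding $y_v$ into each $v$ and a super-sink $T$ draining $z_v$, then decompose the circulation into cycles. Each nontrivial cycle corresponds to an $x^3$-path from some $m$ (via $S\!\to\!m$) to some $p$ (via $p\!\to\!T$), yielding a transportation flow $f_{mp}\ge 0$ with marginals $\sum_p f_{mp}=y_m$ and $\sum_m f_{mp}=z_p$ and, since the attached arcs are free and any residual cycle is nonnegative, $\mathrm{cost}(\vec x^{\,3})\ge \sum_{m,p} f_{mp}\,\mathrm{SP}_3(m,p)$. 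Substituting $y_m=\sum_p f_{mp}$ and $z_p=\sum_m f_{mp}$ collapses the five bounds into a single sum weighted by $f_{mp}$, namely $\mathrm{LP}^\star\ge \sum_{m,p} f_{mp}\,Q(m,p)$; because $f_{mp}\ge 0$ and $\sum_{m,p} f_{mp}=1$ this is a convex combination, hence at least $\min_{m,p}Q(m,p)=\mathrm{ILP}^\star$. This establishes $\mathrm{LP}^\star=\mathrm{ILP}^\star$. I expect this circulation/transportation decomposition of $\vec x^{\,3}$ — the one step that correctly couples the fractional merging distribution $\vec y$ to the splitting distribution $\vec z$ — to be the \emph{main obstacle}; the other four flows are routine.

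Finally, for the $\mathcal O(M^2)$ recovery I would exploit tightness. At an LP optimum the convex-combination inequality is an equality, so every pair $(m,p)$ with $f_{mp}>0$ attains $Q(m,p)=\min_{m,p}Q(m,p)$ and is therefore an optimal merging/splitting choice. Thus it suffices to flow-decompose the optimal $\vec x^{\,*}$: extracting paths from a flow on $M$ edges produces at most $M$ paths, each traced in $\mathcal O(M)$ time, so the whole decomposition — which simultaneously exposes an active $(m,p)$ and the five integral sub-paths realizing it — runs in $\mathcal O(M^2)$. (Equivalently one may pick any $m$ with $y^{\,*}_m>0$ and any $p$ reachable from $m$ in the support of $\vec x^{\,3,*}$ with $z^{\,*}_p>0$, then compute the five shortest paths by Dijkstra within the same budget.) A minor remaining point is the box constraints $x_e^i\le 1$: the extracted paths are simple, so each recovered $x_e^i\in\{0,1\}$ respects them automatically, and the decomposition cost never exceeds the LP cost, certifying optimality of the recovered integral solution.
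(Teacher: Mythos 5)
Your proof is correct, but it takes a genuinely different route from the paper's. The paper proves the theorem by showing that every optimal LP solution is \emph{exactly} a convex combination of integer platooning solutions: it iteratively extracts the minimum positive $x_e^i$, traces out a full integer solution containing that edge via a forward/backward search over the five blocks, subtracts it off, and repeats; since each round zeroes at least one variable, the decomposition terminates in $\mathcal{O}(M)$ rounds of $\mathcal{O}(M)$ work each, and linearity of the objective then forces one of the extracted integer solutions to cost no more than the LP optimum. You instead prove only the inequality $\mathrm{LP}^\star\ge\mathrm{ILP}^\star$ directly, via standard flow decomposition of the five blocks plus the super-source/super-sink circulation trick on $\vec x^{\,3}$, which produces the transportation coupling $f_{mp}$ between the fractional merge distribution $\vec y$ and split distribution $\vec z$ and collapses everything into a convex combination of the $Q(m,p)$. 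What your route buys is that it leans only on textbook facts (nonnegativity of the weights $w_e^i$, which you rightly make explicit and which the paper uses only implicitly when discarding cycles, and path/cycle decomposition of flows), avoiding the paper's bespoke search procedure whose correctness (e.g., that a suitable in-edge with $x^2_{\tilde e_2}\ge x^1_{\tilde e}$ always exists) rests on flow-conservation arguments the paper leaves implicit; what the paper's route buys is the stronger structural statement (exact integral decomposability of the optimal face) and a recovery algorithm that is fully constructive by design. One small caveat on your recovery step: the parenthetical shortcut of pairing \emph{any} $m$ with $y^*_m>0$ to \emph{any} $p$ reachable from it in the support of $\vec x^{\,3,*}$ requires an extra rerouting/exchange argument to show that some valid decomposition puts positive mass on that particular pair $(m,p)$; your primary route via explicit path extraction does not have this issue and lands on the same $\mathcal{O}(M^2)$ bound as the paper.
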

\begin{proof}
    The idea is to show every optimal solution of the relaxed LP problem
    is a convex combination of integer platooning solutions in $\mathcal{P}$. 
    See Appendix \ref{proof:ILPGap} for the proof. 
\end{proof}
\revt{Thm. \ref{Thm_ILPGap} is an important result. It says we can obtain
the optimal solution of the path and platooning optimization problem
in $\eqref{eq:prob_ppo_ILP}$, by solving its relaxed LP problem,
significantly reducing the computational complexity.} Denote the optimal
solution of the ILP problem in $\eqref{eq:prob_ppo_ILP}$ as $\vec{x}^{\,*}(\vec{\lambda}),\vec{y}^{\,*}(\vec{\lambda}),\vec{z}^{\,*}(\vec{\lambda}).$

To this point, it is straightforward to verify that $\begin{aligned}\delta_{i}(\vec{t}^{\,*}(\vec{\lambda}),\vec{x}^{\,*}(\vec{\lambda}))\end{aligned}
$ is a subgradient of $D(\vec{\lambda})$ with respect to $\lambda_{i}$,
$i\in\mathbb{N}_{4}$. From the discussion above, it is clear that
both $\vec{t}^{\,*}(\vec{\lambda})$ and $\vec{x}^{\,*}(\vec{\lambda})$,
hence $\begin{aligned}\delta_{i}(\vec{t}^{\,*}(\vec{\lambda}),\vec{x}^{\,*}(\vec{\lambda}))\end{aligned}
$, can be computed in polynomial time. This allows us to develop an
efficient \revt{dual-subgradient} algorithm for solving the problem in $\eqref{equ:dualprob}$.

\subsection{The Dual-Subgradient Ascent Algorithm}

The dual-subgradient algorithm is an iterative algorithm for updating
the dual variables towards optimality. Specifically, given $\vec{\lambda}[k]$
as the dual variables after the $k$-th iteration (initially, $\vec{\lambda}[0]$
takes arbitrary non-negative values), the algorithm updates the dual
variables as follows: for all $i\in\mathbb{N}_{4}$,
\begin{align}
\lambda_{i}\left[k+1\right]= & \lambda_{i}\left[k\right]+\phi_{k}\left[\delta_{i}\left(\vec{t}^{\,*}\left(\vec{\lambda}\left[k\right]\right),\vec{x}^{\,*}\left(\vec{\lambda}\left[k\right]\right)\right)\right]_{\lambda_{i}\left[k\right]}^{+}, \label{equ:dualDyna}
\end{align}
where $\vec{t}^{\,*}\left(\vec{\lambda}\left[k\right]\right)$ and
$\vec{x}^{\,*}\left(\vec{\lambda}\left[k\right]\right)$ can be computed
in polynomial time by solving the speed optimization and path and platooning
optimization modules in $\eqref{equ:dualfunction}$, respectively. $\phi_{k}>0$ is the step size and function
$[\mu]_{\nu}^{+}$ is defined as 
\begin{align*}
    [\mu]_{\nu}^{+} & \triangleq\left\{ \begin{array}{ll}
    \mu, & \text{if }\nu>0;\\
    \max\{\mu,0\}, & \text{otherwise.}
    \end{array}\right.
\end{align*}

\begin{algorithm2e}[t]
\SetAlgoVlined
\caption{A Dual subgradient algorithm for optimizing two-truck platooning with deadlines.}
\label{alg:heuristic}
    Set $K$, $k\xleftarrow{}1$, and $\lambda_i[0] \xleftarrow{} 0,\forall i\in \mathbb{N}_4$\\
   {Set $k^u$ as $\mathrm{NULL}$}\\
    \While{ $k \le K$}{
        Compute $\vec{t}^{*}\left(\vec{\lambda}[k]\right)$, $\vec{x}^{*}\left(\vec{\lambda}[k]\right)$, $\vec{y}^*\left(\vec{\lambda}[k]\right)$, $\vec{z}^*\left(\vec{\lambda}[k]\right)$ by solving speed optimization, path and platooning optimization modules in $\eqref{equ:dualfunction}$.  \label{line:LP}\\
        Compute $\delta_i\left(\vec{t}^{*}\left(\vec{\lambda}[k]\right), \vec{x}^{*}\left(\vec{\lambda}[k]\right)\right)$ by $\eqref{equ:delta}$.\\
        \If{$\delta_{i}^{*}(\vec{\lambda}[k])=0,\forall i\in \mathbb{N}_4$}{
            \textbf{return} $\vec{t}^{*}\left(\vec{\lambda}[k]\right)$, $\vec{x}^{*}\left(\vec{\lambda}[k]\right)$, $\vec{y}^*\left(\vec{\lambda}[k]\right)$, $\vec{z}^*\left(\vec{\lambda}[k]\right)$ \label{line:optima_returned}\\
        }
        
        \If{$\delta_{i}^{*}(\vec{\lambda}[k])\leq0,\forall i\in \mathbb{N}_4$}{
             Compute the fuel cost $\rho[k]$ of the solution
             $\vec{t}^{*}\left(\vec{\lambda}[k]\right)$, $\vec{x}^{*}\left(\vec{\lambda}[k]\right)$, $\vec{y}^*\left(\vec{\lambda}[k]\right)$, $\vec{z}^*\left(\vec{\lambda}[k]\right)$ \\
             \If{$k^u$ is $\mathrm{NULL}$ \textbf{or} $\rho[k]\leq \rho[k^u]$}{
                $k^u \xleftarrow{} k$\\
             }
        }
        
        Update $\lambda_i[k+1]$ according to~$\eqref{equ:dualDyna}$, $\forall i\in \mathbb{N}_4$.
        $k \xleftarrow{} k + 1$\\
    }
     \uIf{$k^u$ is not $\mathrm{NULL}$}{
            $k^*\xleftarrow{}k^u$
            }
    \Else{
           $k^*\xleftarrow{}K$ 
    }
    { 
    \textbf{return} $\vec{t}^{*}\left(\vec{\lambda}[k^*]\right)$, $\vec{x}^{*}\left(\vec{\lambda}[k^*]\right)$, $\vec{y}^*\left(\vec{\lambda}[k^*]\right)$, $\vec{z}^*\left(\vec{\lambda}[k^*]\right)$. \label{line:near_optima_returned}
    }
\end{algorithm2e}

The pseudo-code of the dual-subgradient algorithm is presented in Alg. \ref{alg:heuristic}.
Intuitively, the dual variables can be interpreted as the delay price.
The dual function can be interpreted as a generalized
cost involving both the delay and fuel expenses. Higher delay prices lead to shorter truck traveling times, which help to capture deadlines. Meanwhile, higher delay prices also result in higher fuel expense, as seen from the structure of the cost function over individual edges, \emph{i.e.}, $g^i_e(\vec{\lambda},t^i_e)$. 
Alg.~\ref{alg:heuristic} can thus be understood as adjusting the dual variables according to the subgradients, towards the condition of meeting the deadlines and minimizing the fuel consumption. If Alg.~\ref{alg:heuristic} returns at line {\ref{line:optima_returned}} and all the deadline constraints are satisfied\footnote{We note that while the dual-subgradient algorithm converges to the dual optimal (as discussed in the next subsection), the deadline constraints may not be satisfied. Indeed, if all the deadline constraints are satisfied at the dual optimal solution, the duality gap is zero, and the dual optimal value is also primal optimal. In general, this  may not happen, and the duality gap is nonzero.}, then the corresponding primal solutions are also feasible. Otherwise Alg.~\ref{alg:heuristic} returns at line {\ref{line:near_optima_returned}}. {If some deadlines are not met}, we perform primal recovery by fixing the returned path planning and platooning solution $(\vec{x}^*(\vec{\lambda}(K)), \vec{y}^*(\vec{\lambda}(K)), \vec{z}^*(\vec{\lambda}(K)))$ and re-optimize the time to spent on individual edges. From the simulation results reported in Sec.~\ref{sec:simulation}, we observe that such a procedure can recover primary feasible solutions for most instances, with an average optimality gap within 1\% to the optimal. In the rare situation where the primal recovery fails, we resort to the separate-driving solution discussed in Sec.~\ref{sec:model}\ref{ssec:platooning.or.not}.


\subsection{Performance Analysis}\label{ssec:perf.analysis}
\subsubsection{Convergence Rate}

It is known that subgradient algorithms for solving fully-relaxed
convex dual problems converge at a rate of $\mathcal{O}\left(1/\sqrt{K}\right)$ \cite{boyd2008subgradient}.
A similar understanding holds for ours for
solving the partially-relaxed dual problem in $\eqref{equ:dualprob}$. 
\begin{theorem} \label{thm:dualConvRate} 
    Let $D^{*}$ be the optimal dual value and
    $\bar{D}_{K}$ be the maximum dual value observed until the K-th iteration
    by running Alg.~\ref{alg:heuristic}. Then with step sizes $\phi_{i}=1/\sqrt{{K}}$, $i\in\mathbb{N}_{K}$,
    there exists a constant $\xi>0$ such that 
    $
    D^{*}-\bar{D}_{k}\leq\xi/\sqrt{K}.
    $
\end{theorem}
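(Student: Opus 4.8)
The plan is to run the textbook projected-supergradient convergence analysis, specialized to our partially-relaxed dual \eqref{equ:dualprob}. Everything reduces to three ingredients: (i) $D(\vec\lambda)$ is concave; (ii) the iterates move along a genuine supergradient whose norm is uniformly bounded by some finite $G$; and (iii) there is a dual optimizer $\vec\lambda^{*}$ at bounded distance $R$ from the initialization $\vec\lambda[0]=\vec 0$. Given these, the standard telescoping argument delivers the $\mathcal{O}(1/\sqrt K)$ rate.

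First I would record concavity and the supergradient inequality. For each fixed $(\vec x,\vec y,\vec z,\vec t)$ the Lagrangian is affine in $\vec\lambda$, since the coupling term is $\sum_{i\in\mathbb{N}_4}\lambda_i\delta_i(\vec t,\vec x)$; hence $D(\vec\lambda)$, being a pointwise minimum of affine functions over $(\vec x,\vec y,\vec z)\in\mathcal{P},\vec t\in\mathcal{T}$, is concave on $\vec\lambda\geq 0$. As already observed just above \eqref{equ:dualDyna}, the vector $\vec g[k]\triangleq\big(\delta_i(\vec t^{\,*}(\vec\lambda[k]),\vec x^{\,*}(\vec\lambda[k]))\big)_{i\in\mathbb{N}_4}$ is a supergradient of $D$ at $\vec\lambda[k]$, so concavity gives
\begin{equation}
D(\vec\lambda)\leq D(\vec\lambda[k])+\big\langle \vec g[k],\,\vec\lambda-\vec\lambda[k]\big\rangle,\qquad\forall\,\vec\lambda\geq 0. \label{eq:superg}
\end{equation}
Next I would pin down the two constants. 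For the supergradient bound, \eqref{equ:delta} shows each $\delta_i$ is a sum of edge travel times $t_e^{j}$ along the selected sub-paths minus the fixed constants $T_s^{\cdot},T_d^{\cdot}$; since every $t_e^{j}\in[t_e^l,t_e^u]$ and each sub-path uses at most $M$ edges, each $|\delta_i|$ is bounded by an instance-dependent constant, so $\norm{\vec g[k]}\leq G<\infty$ for all $k$. For the distance bound, I would invoke a bounded dual optimizer $\vec\lambda^{*}\geq 0$, setting $R\triangleq\norm{\vec\lambda^{*}-\vec\lambda[0]}=\norm{\vec\lambda^{*}}$; its existence follows from a Slater-type point (a schedule meeting every deadline with strict slack, obtained by marginally slowing a strictly feasible plan), which certifies strong duality and a compact dual optimal set.

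With these in hand I would close the argument. A short coordinate-wise case analysis of the operator $[\cdot]^{+}_{\nu}$ in \eqref{equ:dualDyna} (using $\vec\lambda^{*}\geq 0$) shows the update obeys the key recursion
\begin{align*}
\norm{\vec\lambda[k+1]-\vec\lambda^{*}}^2 &\leq \norm{\vec\lambda[k]-\vec\lambda^{*}}^2+2\phi_k\big\langle \vec g[k],\,\vec\lambda[k]-\vec\lambda^{*}\big\rangle+\phi_k^2\norm{\vec g[k]}^2\\
&\leq \norm{\vec\lambda[k]-\vec\lambda^{*}}^2-2\phi_k\big(D^{*}-D(\vec\lambda[k])\big)+\phi_k^2 G^2,
\end{align*}
where the second line uses \eqref{eq:superg} at $\vec\lambda=\vec\lambda^{*}$. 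Telescoping over $k=1,\dots,K$, discarding the nonnegative terminal term, and using $D^{*}-D(\vec\lambda[k])\geq D^{*}-\bar D_K\geq 0$ yields
\[
\big(D^{*}-\bar D_K\big)\cdot 2\sum_{k=1}^K\phi_k \;\leq\; R^2+G^2\sum_{k=1}^K\phi_k^2.
\]
Substituting $\phi_k=1/\sqrt K$, so that $\sum_{k=1}^K\phi_k=\sqrt K$ and $\sum_{k=1}^K\phi_k^2=1$, gives $D^{*}-\bar D_K\leq (R^2+G^2)/(2\sqrt K)$, which is the claim with $\xi\triangleq(R^2+G^2)/2$.

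I expect the main obstacle to be ingredient (iii): securing a bounded dual optimizer $\vec\lambda^{*}$, equivalently a uniform $R$ on the distance traveled by the iterates. The subgradient bound $G$ is routine from \eqref{equ:delta}, but existence and boundedness of the dual optimum rest on a Slater/strict-feasibility argument for the relaxed deadline constraints, and making this rigorous for every feasible instance is the delicate point; the cleanest route is to argue that on any feasible platooning instance the deadlines can be met with strict inequality by marginally reducing speeds, which certifies Slater and hence compactness of the dual optimal set.
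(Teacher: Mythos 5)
Your proposal is correct and follows essentially the same route as the paper's proof: the standard projected-subgradient recursion, telescoped with constant step size $1/\sqrt{K}$, yielding $\xi=(R^2+G^2)/2$ (the paper writes $H$ for your $G$ and gives an explicit instance-dependent value for it). The existence and boundedness of the dual optimizer $\vec{\lambda}^{*}$ that you flag as the delicate point is simply assumed in the paper's proof without a Slater-type justification, so your treatment is, if anything, slightly more careful on that front.
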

\begin{proof}
    See Appendix~\ref{Proof:dualConvRate}. 
\end{proof}
{The constant $\xi$ depends on the network structure and the deadline constraints. 
We refer interested readers to the proof for more detailed discussions.
Thm.~\ref{thm:dualConvRate} shows that we can achieve $\mathcal{O}\left(1/\sqrt{K}\right)$
convergence rate by using a constant step size. In practice, one may achieve faster convergence by using adaptive step sizes~\cite{BAZARAA1981choice}.}


\subsubsection{Complexity}
The main computational complexity involved in Alg.~\ref{alg:heuristic}
lies in solving the LP in line \ref{line:LP}, \revt{which is $\mathcal{O}((N + M)^{2.5})$ by the algorithm in~\cite{Vaidya1989}. Note that $N$ is the number of nodes, $M$ is the number of edges in the graph, and $\mathcal{O}(N + M)$ is the number of variables in the LP.}
Since the dual-subgradient algorithm converges
at a rate of $\mathcal{O}(1/\sqrt{K})$, the overall time complexity
for Alg.~\ref{alg:heuristic} to generate a dual solution within accuracy $\epsilon>0$
to the optimal is $\mathcal{O}((N+M)^{2.5}/\epsilon^{2})$.

\subsubsection{Optimality Gap}
{We provide a posterior bound on the optimality loss when Alg.~\ref{alg:heuristic} returns a feasible primal solution. Recall that $\vec{\lambda}[k^*]$ consists of the dual variables upon Alg.~\ref{alg:heuristic} terminating after $K$ iterations and $\delta_i(\vec{t}^*(\vec{\lambda}[k^*], \vec{x}^*(\vec{\lambda}[k^*]))$ is the corresponding dual subgradient.}

\begin{theorem} \label{Thm_Optgap} 
    {If Alg.~\ref{alg:heuristic} returns a feasible solution $\bar{P}$, the optimality gap between the cost of $\bar{P}$, denoted as $c(\bar{P})$, and the optimal cost, denoted as $OPT$, is bounded as follows:}
    \begin{equation} 
        c\left(\bar{P}\right)-OPT\leq\begin{cases}
        0, & \text{if }\delta_i(\cdot)=0,\forall i\in \mathbb{N}_4 \mbox{ and } 
         \text{Alg. \ref{alg:heuristic}}\\
         & \text{ returns in line \ref{line:optima_returned}};\\
        B, & \mbox{otherwise}.
        \end{cases}        
    \end{equation}
    where {$B=-\sum_{i=1}^{4}\lambda_{i}[k^*] \delta_{i}(\vec{t}^{\,*}(\vec{\lambda}[k^*]),\vec{x}^{\,*}(\vec{\lambda}[k^*]))\geq0$.}
\end{theorem}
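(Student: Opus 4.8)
The plan is to derive the bound directly from weak Lagrangian duality, treating the returned solution as an \emph{exact} minimizer of the relaxed Lagrangian at the terminal multipliers $\vec{\lambda}[k^*]$. First I would record that the quadruple $\bar{P}=(\vec{x}^{\,*}(\vec{\lambda}[k^*]),\vec{y}^{\,*}(\vec{\lambda}[k^*]),\vec{z}^{\,*}(\vec{\lambda}[k^*]),\vec{t}^{\,*}(\vec{\lambda}[k^*]))$ is, by construction, an optimal solution of the inner minimization defining $D(\vec{\lambda}[k^*])$ in $\eqref{equ:dualfunction}$: each per-edge speed subproblem attains $w_e^i(\vec{\lambda}[k^*])$, and by Thm.~\ref{Thm_ILPGap} the path-and-platooning ILP in $\eqref{eq:prob_ppo_ILP}$ is solved to optimality through its LP relaxation. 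Consequently $L(\bar{P},\vec{\lambda}[k^*])=D(\vec{\lambda}[k^*])$.

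Next I would expand the Lagrangian at this point. Writing $c(\bar{P})$ for the fuel-cost objective of $\eqref{equ:objective}$ evaluated at $\bar{P}$, the definition of $L$ gives
\begin{equation*}
    D(\vec{\lambda}[k^*])=c(\bar{P})+\sum_{i\in\mathbb{N}_4}\lambda_i[k^*]\,\delta_i\left(\vec{t}^{\,*}(\vec{\lambda}[k^*]),\vec{x}^{\,*}(\vec{\lambda}[k^*])\right).
\end{equation*}
Invoking weak duality, $D(\vec{\lambda}[k^*])\leq OPT$, and rearranging yields
\begin{equation*}
    c(\bar{P})-OPT\leq-\sum_{i\in\mathbb{N}_4}\lambda_i[k^*]\,\delta_i\left(\vec{t}^{\,*}(\vec{\lambda}[k^*]),\vec{x}^{\,*}(\vec{\lambda}[k^*])\right)=B,
\end{equation*}
which already establishes the $B$ branch of the claimed inequality.

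It then remains to confirm $B\geq0$ and to sharpen the first case. Because $\bar{P}$ is feasible by hypothesis, every deadline slack obeys $\delta_i(\cdot)\leq0$, and since the dual iterates keep $\lambda_i[k^*]\geq0$, each product $\lambda_i[k^*]\delta_i(\cdot)\leq0$; summing and negating gives $B\geq0$. When Alg.~\ref{alg:heuristic} returns at line~\ref{line:optima_returned} with $\delta_i(\cdot)=0$ for all $i\in\mathbb{N}_4$, every summand vanishes, so $B=0$ and the bound reads $c(\bar{P})-OPT\leq0$; combined with $c(\bar{P})\geq OPT$ (as $\bar{P}$ is primal feasible) this forces $c(\bar{P})=OPT$, i.e.\ a zero gap.

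I expect the only delicate point to be justifying the first equality $L(\bar{P},\vec{\lambda}[k^*])=D(\vec{\lambda}[k^*])$, since the bound hinges on $\bar{P}$ being an exact Lagrangian minimizer. This in turn rests on the zero-integrality-gap guarantee of Thm.~\ref{Thm_ILPGap} for the combinatorial inner problem together with each convex per-edge speed subproblem being solved to optimality; once exactness of the inner solve is secured, the remaining steps are a routine weak-duality manipulation, with the sign of $B$ following immediately from primal feasibility and dual nonnegativity.
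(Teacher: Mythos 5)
Your proposal is correct and follows essentially the same route as the paper: both expand the dual value at $\vec{\lambda}[k^*]$ as $c(\bar{P})+\sum_{i}\lambda_i[k^*]\delta_i(\cdot)$ (using that $\bar{P}$ exactly minimizes the inner Lagrangian, via the per-edge convex solves and Thm.~\ref{Thm_ILPGap}), invoke weak duality, and conclude $c(\bar{P})-OPT\leq-\sum_i\lambda_i[k^*]\delta_i(\cdot)=B\geq0$, with the zero-gap case following when all $\delta_i$ vanish. Your version is, if anything, slightly more streamlined since it bounds against $OPT$ directly rather than passing through the optimal dual value $D^*$ as the paper does.
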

\begin{proof}
    The proof is presented in Appendix \ref{Thm:PosteriorBound}. 
\end{proof}
According to Thm. \ref{Thm_Optgap}, if $\delta_i(\cdot)=0,\forall i\in \mathbb{N}_4$, i.e., the deadlines for individual trucks are all met exactly, and thus Alg. \ref{alg:heuristic} returns
in the line \ref{line:optima_returned}, the obtained dual solution is
optimal and the corresponding primal solution is feasible and optimal. Otherwise, Alg. \ref{alg:heuristic} returns in the line
\ref{line:near_optima_returned} and we can compute a posterior bound if the corresponding primal solution is feasible. 
From the simulation results reported in Sec.~\ref{sec:simulation}, we observe that for the cases that our algorithm returns feasible primal solution, the computed posterior bounds are within $0.3\%$ to the optimal, suggesting strong empirical performance of the proposed algorithm. Meanwhile, it is clear that the bound holds for each iteration and thus can be used to terminate the Alg.~\ref{alg:heuristic} earlier \revt{upon reaching a target accuracy threshold.} Specifically, one can calculate a posterior bound after each iteration and terminate Alg. \ref{alg:heuristic} if the bound is already lower than the target threshold.

\section{Performance Evaluation}
\label{sec:simulation}

We use real-world traces to evaluate the performance of our solutions. \revt{We implement our algorithm in python and carry out numerical experiments on a server cluster with 42 pieces of 2.4 GHz -- 3.4 GHz Intel/AMD processors, each equipped with 16GB -- 96GB memory.} We represent an instance of our platooning problem by a tuple $(s_1, d_1, T_s^1, T_d^1, s_2, d_2, T_s^1, T_d^2)$. We follow the flow chart in Fig.~\ref{fig:sysFlow} to obtain the solution for each instance.


\emph{Transportation network and heavy-duty truck}.
We construct the US national highway network using the data from the Clinched Highway Mapping Project~\cite{chmproject}. {We} focus on the eastern US section with $38,213$ nodes and $82,781$ road segments. {The average length of road segments is $3.1$ miles.} Our simulated trucks are both class-8 heavy-duty {trucks} Kenworth T800, {each} with a 36-ton full load~\cite{kent800}. 

\emph{Fuel consumption rate function}. First, we obtain the grade of each road segment by querying the elevations provided by the Elevation Point Query Service~\cite{usgs}. We then use the \textsf{ADVISOR} simulator~\cite{markel2002advisor} to collect fuel consumption rate data {for each road segment}. Finally, we use the curve fitting toolbox in \textsf{MATLAB} to fit the fuel consumption rate function of speed by a 3-order polynomial function. The \revt{same polynomial fuel consumption model} was also used in other literature of energy-efficient timely truck transportation; see, \revt{e.g.,~\cite{Deng2016EnergyefficientTT, xu2019ride, gao2007modeling, nie2013eco, demir2011comparative, lajunen2014energy, Liu2020TITS_energy_truck}.}

\emph{Platooning fuel consumption}.
For the platooning problem, {existing }studies on path-only optimization usually assume that the fuel-saving ratio is constant \cite{Larson2016CoordinatedPR, LUO2018213}. Studies on speed-only planning often assume the aerodynamic coefficient in the fuel consumption model is reduced at a constant rate \cite{Vandehoef2015, Vandehoef201509, Vandehoef2018}. In our simulation, similar to \cite{Larson2016CoordinatedPR, LUO2018213}, \revt{we assume a speed-independent \textit{average} fuel saving rate for the two trucks, denoted as $\eta$ in~$\eqref{platoonfunction}$. This is based on the understanding that (i) the fuel saving rates for the leading and trailing vehicle can be different and we are working with their average and (ii) existing studies, e.g.,~\cite{mcauliffe2017fuel}, show that the impact of speed on platooning fuel saving rate is minor. Further, our algorithm is applicable without any change as long as the platooning fuel consumption function is convex over the speed on each edge.} We use $\eta = 0.1$ in simulations. 

\emph{Origin-destination pairs}. {There are few platooning traces available in the public domain. We thus generate synthetic platooning instances from real-world statistics as follows}. We first obtain the real-world freight flow statistics in the US~{,}from Freight Analysis Framework~(FAF)~\cite{FAF4}. 
{For each origin or destination of the recorded freight flow in FAF, }
we find the closest highway node and manually set this node as the representative truck origin or destination in our simulation.
We then randomly sample the first truck origin-destination $(s_1, d_1)$ by the freight weight from the origin to the destination. We then calculate the shortest path's distance $l_1$ from $s_1$ to $d_1$. We set a radius lower bound ratio $\gamma_l$ and a radius upper bound {ratio} $\gamma_u$. We then set the radius lower bound $r_l=l_1\gamma_l$ and the radius upper bound $r_u=l_1\gamma_u$. We then sample $s_2$~($d_2$ resp.) from the nodes with distance between $r_l$ to $r_u$ from the $s_1$~($d_1$ resp.), {uniformly at random}. \revt{We set $\gamma_l=0.1$ and $\gamma_u=0.5$. In our simulation, the average distances from $s_1$ to $s_2$ and from $d_1$ to $d_2$ are  195 miles and 188 miles, respectively.} 

\begin{table*}
   \caption{Fuel saving performance of various platooning solutions, under different \emph{shortest path overlapping ratios}. } 
    \label{tab:ORImpactPlatooning}
    
    \centering
    \begin{tabularx}{\textwidth}{|X|c|c|c|c|c|}
     
    \hline
    \textbf{Shortest path overlapping ratio} & \textbf{(0.0, 0.2)} & \textbf{(0.2, 0.4)} & \textbf{(0.4, 0.6)} & \textbf{(0.6, 0.8)}&\textbf{(0.8, 1.0)} \\
    \hline  
    P-Platooning platooning ratio (\%) & 45.3 & 56.6 &  63.8 & 76.0 & 85.2 \\
    \hline 
    S-Platooning platooning ratio (\%) & 3.3 & 29.6 & 50.2 & 69.7 & 84.7 \\
    \hline
    Our Solution platooning ratio (\%)  & 43.9 & 56.8 & 63.4 & 75.4 & 85.2 \\
    \hline 
    Fuel Saving compared to P-Platooning (\%)& 22.4 & 22.8 & 28.2 & 25.3&20.2 \\
    \hline
    Fuel Saving compared to S-Platooning (\%) & 4.3& 2.6 &1.2 & 0.8& 0.2\\    
    \hline
  {Upper bound of our solution's optimality gap} (\%) &0.2 &0.3 & 0.5&0.5 & 0.5\\  
    \hline 
    \end{tabularx}

\end{table*}

\subsection{Two Platooning Baselines}
We compare the performance of our dual-subgradient algorithm to two baseline methods.   
\begin{itemize}
    \item \emph{Path planning with platooning (P-Platooning).}
    The first one is based on path planning with the speed of trucks on each edge fixed \cite{Larson2016CoordinatedPR, LUO2018213}. The platooning optimization can be formulated as an integer linear programming problem (ILP) and solved {by the state-of-the-art solver Gurobi}. Specific to our setting, we implement the {P-Platooning} baseline by setting the speed as the fastest speed and then optimizing the {driving paths with platooning consideration}. 
    
    \item \emph{Speed planning with platooning (S-Platooning).}
    The second one is based on speed planning. Similar to  \cite{Vandehoef2015, Vandehoef201509, Vandehoef2018}, we assume the two trucks travel on their shortest paths. We consider the two trucks may platoon over the overlapping road segments of the two shortest paths. We then optimize speed planning with platooning in consideration subject to deadline constraints, by fixing paths in our formulation in~$\eqref{equ:objective}$ and using Gurobi to solve the revised problem. 
\end{itemize}
We use the largest dual function value obtained from our dual subgradient algorithm as a non-trivial lower bound to estimate the minimum fuel consumption for evaluating optimality loss.

\subsection{Performance Comparison to Platooning Baselines} 
Intuitively, by jointly exploiting the design space of path planning, speed planning, {and platooning configuration}, we can achieve better performance than only {optimizing the path planning or the speed planning.} 
{We} sample the second origin-destination pair within $0.1-0.5$ times the shortest path distance of the first distance pair. {We define the platooning ratio as the average fraction of platooning in driving distance, and the \emph{Shortest Path Overlapping Ratio{ (S.P.O.R.)}} as the ratio between the length of overlapping road segments in shortest paths and the {total} length of two {trucks'} shortest paths. We then group the results according to the range of \emph{{the overlapping ratio}.}}  

We also introduce a new evaluating metric \emph{platooning fuel saving achieving ratio $P$} defined as {the ratio between $\mathrm{OPT(Path, Speed)}-\textrm{Fuel(Platoon  Solution)}$ and $\mathrm{OPT(Path, Speed)}-\mathrm{OPT(Path, Speed, Platoon)}$}.
{$\mathrm{OPT(Path, Speed)}$ is the optimal fuel cost of path planning and speed planning without platooning, {\textrm{Fuel(Platoon Solution)} is the fuel cost of the platoon solution to be evaluated} and $\mathrm{OPT(Path, Speed, Platoon)}$ is the optimal fuel cost of path planning and speed planning with platooning.} 
{We obtain estimates of the optimal values by using {the largest dual value obtained from the dual subgradient algorithm}.} 
\footnote{{{Specifically, we obtain the approximated $\mathrm{OPT(Path, Speed})$ by using the dual based method in~\cite{Deng2016EnergyefficientTT}. We use the largest dual value obtained in our solution, which is within $0.8\%$ to $\mathrm{OPT(Path, Speed, Platoon)}$, to approximate $\mathrm{OPT(Path, Speed, Platoon)}$.}}}
\begin{figure}
    \centering
    \includegraphics[width=\columnwidth]{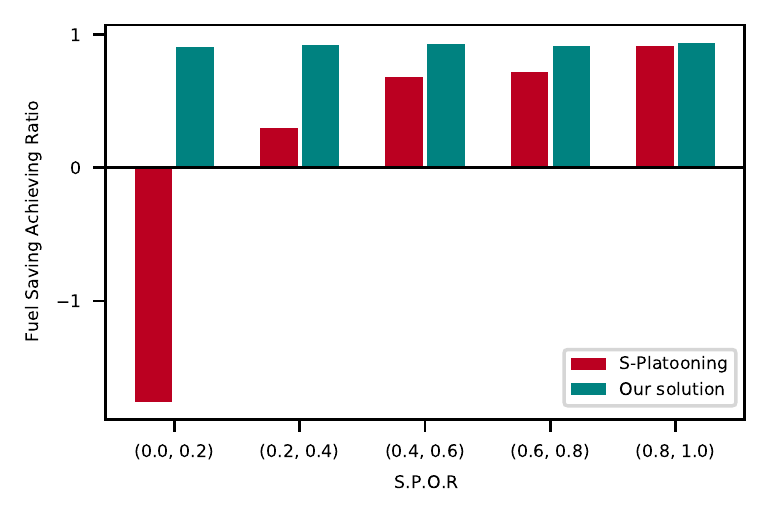}
    \caption{\emph{Platooning fuel saving achieving ratio} with different \emph{Shortest Path Overlapping Ratio}. The achieving ratio of \revt{S-Platooning} decreases to negative, as the overlapping ratio decreases. In contrast, our solution attains an achieving ratio close to one consistently.} 
    \label{fig:diff_overlap_achieve_rate}
\end{figure}

We simulate $1,495$ instances in total. Simulation results in Tab.~\ref{tab:ORImpactPlatooning} show that on average, our solution saves $24\%$~(resp. $3\%$) fuel compared to the P-Platooning~(resp. S-Platooning) baselines. We observe that, empirically, our solution's fuel {consumption} is within less than $0.8\%$ {to} the lower bound of {minimum}. The optimality loss {of} our solution is {thus} minor.

Furthermore, Fig.~\ref{fig:diff_overlap_achieve_rate} shows that when \emph{Shortest Path Overlapping Ratio} is small, {the achieving ratio of S-Platooning is minor or even negative.} {In} contrast, our solution always achieve a high ratio close to {one}. This result shows that our solution achieves almost the maximum fuel saving offered by platooning.

\subsection{The Benefits of \rev{Departure Coordination}}

In this experiment, we compare the solution with departure coordination to that without departure coordination. Similar to the opportunistic driving idea proposed in~\cite{xu2019ride}, allowing trucks {to wait at the origins can create more favorable platooning opportunity at a later time and thus benefit fuel saving.} This {improves the fuel economy} as long as the two trucks can still catch their deadlines. {To obtain the solution without departure coordination, {we require} the two trucks to arrive at the merging point exactly the same time, \emph{i.e.}, $T_s^1 + \sum_{e \in E} t_e^1x_e^1 = T_s^2 + \sum_{e \in E} t_e^2x_e^2$, and the other constraints are the same as in $\eqref{equ:deadlines}$. As for the solution with opportunistic driving, we do not require arrival at the merging point exactly the same time.} We use similar dual-based procedures to optimize the path and speed {planning for the case without departure coordination}.


Simulation results show that, {with departure coordination, we save $5\%$
more fuel as compared to the case without departure coordination. We also
observe that about $25\%$ of the instances that are feasible with departure coordination
become infeasible without departure coordination.
{Fig.~\ref{fig:DFImpactRatioOPandWaitTime} shows that the normalized fuel cost with departure coordination is significantly
lower and decreases faster than that without departure coordination.} Meanwhile, we define the waiting time ratio as the ratio of waiting time to the total trip time for the truck that {performs departure coordination} to measure the extent of departure coordination. As shown in Fig.~\ref{fig:DFImpactRatioOPandWaitTime}, the waiting time ratio increases as the delay factor increases. This set of results highlight that as the deadline increases, the time schedule is more flexible for the trucks to form a favorable platoon for fuel saving, and departure coordination is effective in capitalizing the fuel-saving potential.}

\begin{figure}[tbp]
\centering
\includegraphics[width=\columnwidth]{./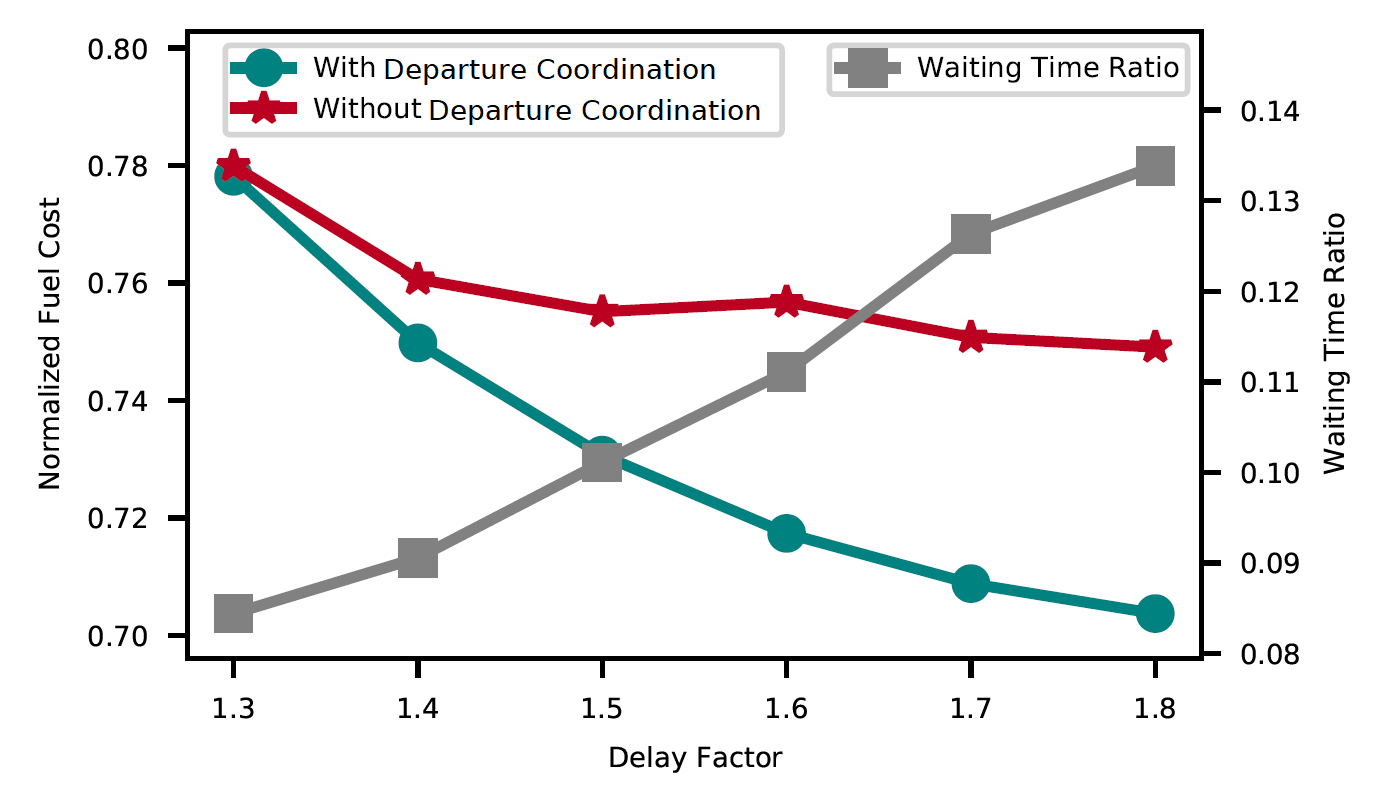}
\vspace{-1ex}
\caption{ The average fuel cost {normalized by the fuel cost of fastest path solution} decreases as the delay factor increases. The average waiting time ratio for the waiting truck increases as delay factor increases.  
}
\label{fig:DFImpactRatioOPandWaitTime}
\end{figure}




\subsection{Impact of Deadline}
We {now} investigate the impact of deadlines. We define delay factor $\beta$ to be the ratio between $\textrm{Deadline}$ and $\textrm{Fastest Path's Time Cost}$. We fix the radius ratio lower bound to be $\gamma_l=0.1$ and the radius ratio upper bound to be $\gamma_u=0.5$. Fig.~\ref{fig:DFImpact} gives the fuel-saving performance of different algorithms as the delay factor varies. It can be seen that our solution {saves a} significant amount of fuel {as compared to} other solutions. It can also be seen that as the delay factor {increases}, the fuel-saving contributed by speed planning becomes larger. {This matches our intuition that a larger delay factor allows a bigger design space and thus larger fuel-saving potential for speed planning.} 
From Tab.~\ref{tab:DFImpactPlatooningTimeRatio}, we also observe that as the delay factor increases, fuel-saving increases significantly but the ratio of platooning time to total driving time {does not increase much}. It {implies} that the increased fuel saving is mainly contributed by speed planning.

\begin{figure}[tbp]
\centering
    \includegraphics[width=\columnwidth]{./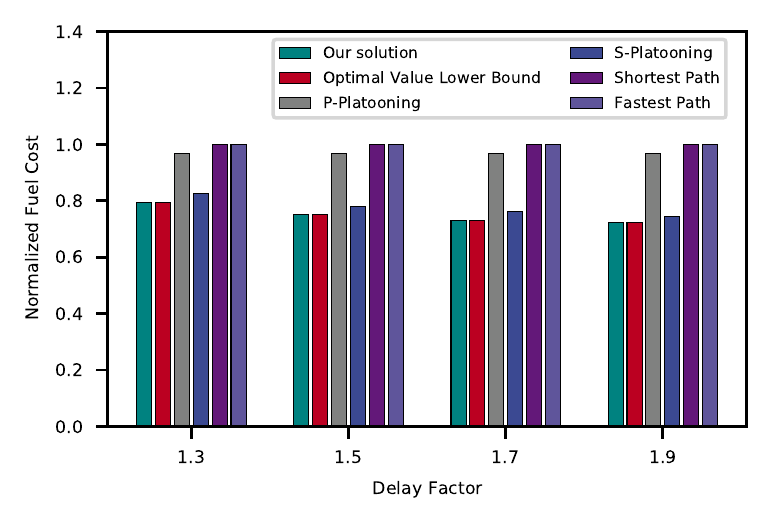} 
\vspace{-1ex}
    \caption{The average fuel cost {normalized by the fuel cost of fastest path solution} of different algorithms with different delay factor. It is again verified that our solution is very close to optimal solution. 
}
\label{fig:DFImpact}
\end{figure}


\begin{table}[t]
    \caption{Comparison of different solutions' fuel saving percentage with different delay factors. The row "Platooning" denotes the Platooning Time Percentage of our solution. It can be seen that there is a significant fraction of platooning time (more than $20\%$), which shows that platooning indeed contributes to fuel saving.}
    \label{tab:DFImpactPlatooningTimeRatio}
    
    \centering
    \begin{tabular}{|c|c|c|c|c|}
    \hline
     \textbf{Delay Factor} & \textbf{1.3}  & \textbf{1.5}   & \textbf{1.7}   & \textbf{1.9}   \\
     \hline
     P-Platooning~(\%) & 3.4 & 3.4 & 3.4 & 3.4\\
     \hline
        S-Platooning~(\%) & 17.9 & 22.7 & 25.0 & 25.9\\
     \hline
     Our solution~(\%) & 20.5 & 25.0 & 27.2 & 28.0\\
     \hline
     Platooning~(\%) & 35.5 & 37.3 & 38.1 & 37.8 \\
    \hline 
    \end{tabular}

\end{table}

    


{We also compare our solution's performance to {S-Platooning} in terms of achieving platooning fuel saving potential, we also compare different solutions' approximated \emph{platooning fuel saving achieving ratio} as shown in Fig.~\ref{fig:diff_df_achieve_rate}. {We observe that S-Platooning's achieving ratio is rather low, sometimes even negative, indicating that the solution is worse than the baseline without platooning. In contrast, our solution always achieve fuel saving ratio close to 1.}} 
\begin{figure}[t]
    \centering
    \includegraphics[width=\columnwidth]{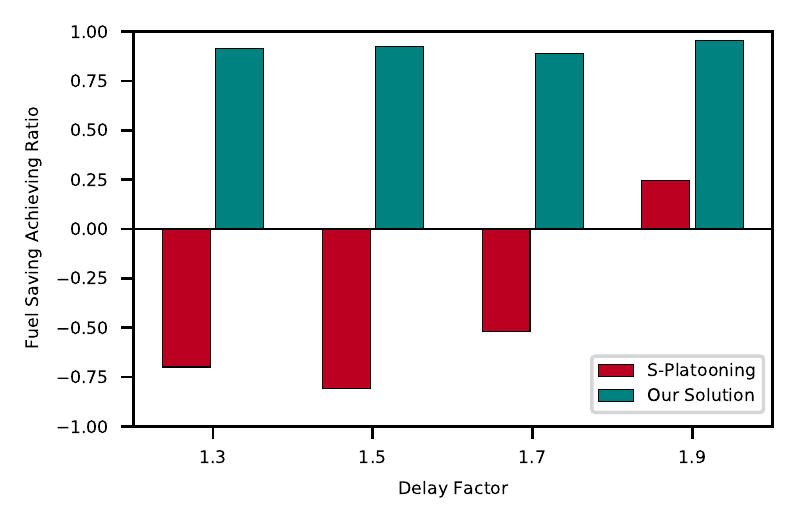}
    \caption{\emph{Platooning fuel saving achieving ratio} under different delay factors. Our solution consistently achieves a ratio close to one, while S-platooning perform badly, even worse than the baseline without platooning.}
    \label{fig:diff_df_achieve_rate}
\end{figure}

\revt{
\section{Discussion}
\subsection{Computation time}
We run all the instances (about 1500 in total) on a server cluster with 42 pieces of 2.4 GHz -- 3.4 GHz Intel/AMD processors, each equipped with 16GB -- 96 GB memory. The total solving time is about $6$ hours, less than 15 seconds for one instance even by the most conservative estimate. On a personal computer with \textsf{Intel(R) Core(TM) i7-10610U} processor and 16 GB RAM, it takes 10-20 minutes to solve one instance.  

In Sec.~\ref{ssec:perf.analysis}, we analyze the time complexity of our proposed algorithm, which is $O((N+M)^{2.5}/\epsilon^2)$ where $N$ is the number of nodes and $M$ is the number of edges in the graph. In simulation, we observe that a large fraction of the computation time is used in obtaining the dual value by solving the relaxed problem in line~\ref{line:LP} of Alg.~\ref{alg:heuristic}, 
which contains two parts. The first is solving a convex optimization problem for each edge. The second is a linear programming with variable dimension proportional to the sum of number of nodes and number of edges. Thus the computation time depends heavily on the size of the graph. As such, one way to reduce the computation time is to prune the nodes and edges that are far away from the origins and destinations of the trucks, which are unlikely to appear in the optimal or near-optimal solution.

\subsection{Perfect and imperfect knowledge of traffic conditions}

As the first step to optimize fuel consumption for two-truck platooning under individual deadline constraints, we solve the fuel minimization problem under the setting where traffic conditions are static and known. The obtained solution is feasible and close to optimal when the static traffic condition can be accurately estimated. They can also serve as benchmarks for existing or future two-truck platooning solutions with imperfect static traffic information. 

Meanwhile, the real-world traffic conditions can also be time-varying. The current formulation does not consider such variable traffic conditions. A viable workaround is to recompute the solution periodically, e.g., once every hour, using the latest traffic information. It is also conceivable to employ the phase-based traffic model in~\cite{xu2019ride} to incorporate variable traffic conditions into our formulation. The obtained results can be benchmarks for evaluating two-truck platooning solutions with imperfect time-varying traffic information.
}
\section{Conclusion and Future Work}
\label{sec:conclusion}
\rev{This paper studies the two-truck platooning problem to minimize the total fuel consumption while meeting individual deadlines by jointly optimizing path planning, speed planning, and platooning configuration.}
We show that any feasible two-truck platooning problem has an optimal solution in which the two trucks platoon only once. With this understanding, we develop a new formulation and show that the two-truck platooning problem is only weakly NP-hard and admits an FPTAS. This \rev{contrasts with} the general multi-truck platooning problem, which is known to be APX-hard and thus repels any FPTAS. We further design a dual-subgradient algorithm for solving large-/national- scale problem instances. \rev{It is an iterative algorithm that always converges. Each iteration involves solving a \revt{non-trivial} integer linear problem optimally, which we prove incurs only polynomial time complexity.} We characterize a sufficient condition under which the algorithm generates an optimal solution.
We characterize a posterior bound on the optimality gap when the condition is not met. We use the real-world traces over the US national highway system to demonstrate the performance of \rev{our algorithm}. As compared to the path-/speed- only baselines adapted from state-of-the-art schemes, our algorithm achieves up to $24\%$ fuel saving \rev{(from path, speed planning and platooning)}, with an average optimality gap no larger  than $0.8\%$. The results demonstrate the effectiveness of our approach. \revt{It is an interesting direction to investigate how the approach in this paper can be extended to the general k-truck platooning settings, by for example dividing the k trucks into multiple groups of two and separately applying the approach to each group for fuel consumption minimization. }

\printbibliography

@electronic{GDPranking2019,
  author = {\text{World Bank}},
  title = {Gross domestic product 2019},
  url = {https://databank.worldbank.org/data/download/GDP.pdf},
  year = {2019}
}

@inproceedings{ATA,
  title={A\revt{TA} american trucking trends},
  author={\text{American Truck Association}},
  url  = {https://www.trucking.org/news-insights/ata-american-trucking-trends-2020},
  publisher={American Trucking Associations},
  year={2020}
}

@article{OperCost2020analysis,
  title={An Analysis of the Operational Costs of Trucking: 2020 Update},
  author={N. Williams and D. Murray},
  url = {https://truckingresearch.org/wp-content/uploads/2020/11/ATRI-Operational-Costs-of-Trucking-2020.pdf},
  year={2020}
}

@book{futuretruck,
   author = {\text{International Energy Agency}},
   title = {The Future of Trucks},
   year = {2017},
%   pages = {167},
  url = {https://www.oecd-ilibrary.org/content/publication/9789264279452-en}
}

@INPROCEEDINGS{Alam2013LCC, 
  author={A. {Alam} and J. {Mårtensson} and K. H. {Johansson}}, 
  booktitle={16th International IEEE Conference on Intelligent Transportation Systems (ITSC 2013)}, 
  title={Look-ahead cruise control for heavy duty vehicle platooning}, 
  year={2013}, 
  volume={}, 
  number={}, 
  pages={928-935}
%   url={http://doi.org/10.1109/ITSC.2013.6728351}
}

@INPROCEEDINGS{Besselink2016CyberP,
  author={B. {Besselink} and V. {Turri} and S. H. {van de Hoef} and K. {Liang} and A. {Alam} and J. {Mårtensson} and K. H. {Johansson}}, 
  booktitle={Proceedings of the IEEE}, 
  title={Cyber–Physical Control of Road Freight Transport}, 
  year={2016}, 
  volume={104}, 
  number={5}, 
  pages={1128-1141}
%   url={http://doi.org/10.1109/JPROC.2015.2511446}
}

@ARTICLE{Turri2017CooperativeL, 
author={V. {Turri} and B. {Besselink} and K. H. {Johansson}}, 
journal={IEEE Transactions on Control Systems Technology}, 
title={Cooperative Look-Ahead Control for Fuel-Efficient and Safe Heavy-Duty Vehicle Platooning}, 
year={2017}, 
volume={25}, 
number={1}, 
pages={12-28}
% url={http://doi.org/10.1109/TCST.2016.2542044}
}

@inproceedings{2000-01-3056,
author="C. Bonnet and H. Fritz",
journal="", 
publisher="SAE International",
title="Fuel Consumption Reduction in a Platoon: 
Experimental Results with two Electronically Coupled Trucks at Close Spacing", 
booktitle="SAE Technical Paper",
year="2000",
month="08",
volume="",
% url="https://doi.org/10.4271/2000-01-3056",
pages="",
number=""
}

@article{2014-01-2438,
author="M. P. Lammert and A. Duran and J. Diez and K. Burton and A. Nicholson",
journal="SAE Int. J. Commer. Veh.", 
publisher="SAE International",
title="Effect of Platooning on Fuel Consumption of Class 8 Vehicles Over a Range of Speeds, Following Distances, and Mass", 
year="2014",
month="09",
volume="7",
% url="https://doi.org/10.4271/2014-01-2438",
pages="626-639",
number="",
}

@ARTICLE{Alam2015Sustainable, 
author={A. {Alam} and B. {Besselink} and V. {Turri} and J. {Martensson} and K. H. {Johansson}}, 
journal={IEEE Control Systems Magazine}, 
title={Heavy-Duty Vehicle Platooning for Sustainable Freight Transportation: A Cooperative Method to Enhance Safety and Efficiency}, 
year={2015}, 
volume={35}, 
number={6}, 
pages={34-56}
% url={http://doi.org/10.1109/MCS.2015.2471046}
}

@article{AeroTruck,
author = {P. Vegendla and T. Sofu and R. Saha and M. Kumar and L. Hwang},
year = {2015},
pages = {},
title = {Investigation of Aerodynamic Influence on Truck Platooning},
journal = {SAE Technical Paper: 2015-01-2895}
% url = {http://doi.org/10.4271/2015-01-2895}
}

@ARTICLE{Tsugawa2016, 
author={S. {Tsugawa} and S. {Jeschke} and S. E. {Shladover}}, 
journal={IEEE Transactions on Intelligent Vehicles}, 
title={A Review of Truck Platooning Projects for Energy Savings},
year={2016}, 
volume={1}, 
number={1}, 
pages={68-77}
% url={http://doi.org/10.1109/TIV.2016.2577499}
}

@ARTICLE{vanArem2006, 
author={B. {van Arem} and C. J. G. {van Driel} and R. {Visser}}, 
journal={IEEE Transactions on Intelligent Transportation Systems}, 
title={The Impact of Cooperative Adaptive Cruise Control on Traffic-Flow Characteristics}, 
year={2006}, 
volume={7}, 
number={4}, 
pages={429-436}
% url={http://doi.org/10.1109/TITS.2006.884615}
}

@inproceedings{tsugawa2011automated,
  author={S. {Tsugawa} and S. {Kato} and K. {Aoki}},
  booktitle={2011 IEEE/RSJ International Conference on Intelligent Robots and Systems}, 
  title={An automated truck platoon for energy saving}, 
  year={2011},
  volume={},
  number={},
  pages={4109-4114}
%   url={http://doi.org/10.1109/IROS.2011.6094549}
}

@inproceedings{Tunnell2011EstimatingTF,
  title={Estimating Truck-Related Fuel Consumption and Emissions in Maine: A Comparative Analysis for Six-Axle, 100,000 Pound Vehicle Configuration},
  author = {M. A. {Tunnell}},
  booktitle={Proc. TRB Annual Meeting},
  year={2011}
%   url = {https://www.maine.gov/mdot/ofbs/docs/atrimainereport.pdf}
}

@inproceedings{amazon,
 author = {Amazon},
 title={Place an Order With Guaranteed Delivery},
  url = {https://www.amazon.com/gp/help/customer/display.html/ref=hp_left_v4_sib?ie=UTF8&nodeId=201117390}
}

@book{ashby2006protecting,
  title={Protecting Perishable Foods during Transport by Truck},
  author={B. H. Ashby},
  year={2006},
  publisher={U.S. Department of Agriculture},
  url={https://www.ams.usda.gov/sites/default/files/media/TransportPerishableFoodsbyTruck\%5B1\%5D.pdf}
}

@book{mallett2006freight,
  title={Freight Performance Measurement: Travel Time in Freight-Significant Corridors},
  author={W. Mallett},
  year={2006},
  publisher={U.S. Federal Highway Administration},
  url={https://ops.fhwa.dot.gov/freight/freight_analysis/perform_meas/fpmtraveltime/index.htm}
}

@article{DRAKE200415,
title = {On approximation algorithms for the terminal \revt{{S}}teiner tree problem},
journal = {Information Processing Letters},
volume = {89},
number = {1},
pages = {15-18},
year = {2004},
% issn = {0020-0190},
% doi = {https://doi.org/10.1016/j.ipl.2003.09.014},
% url = {https://www.sciencedirect.com/science/article/pii/S0020019003004526},
author = {Doratha E. Drake and Stefan Hougardy},
keywords = {Approximation algorithms, Steiner tree},
abstract = {The terminal Steiner tree problem is a special version of the Steiner tree problem, where a Steiner minimum tree has to be found in which all terminals are leaves. We prove that no polynomial time approximation algorithm for the terminal Steiner tree problem can achieve an approximation ratio less than (1âˆ’o(1))lnn unless NP has slightly superpolynomial time algorithms. Moreover, we present a polynomial time approximation algorithm for the metric version of this problem with a performance ratio of 2Ï, where Ï denotes the best known approximation ratio for the Steiner tree problem. This improves the previously best known approximation ratio for the metric terminal Steiner tree problem of Ï+2.}
}

@inproceedings{Larson2016CoordinatedPR,
  title={Coordinated platoon routing in a metropolitan network},
  author={Larson, Jeffrey and Munson, Todd and Sokolov, Vadim},
  booktitle={2016 Proceedings of the Seventh SIAM Workshop on Combinatorial Scientific Computing},
  pages={73--82},
  year={2016}
%   organization={SIAM}
  % url = {http://doi.org/10.1137/1.9781611974690.ch8}
}

@INPROCEEDINGS{Vandehoef2015, 
author={S. {van de Hoef} and K. H. {Johansson} and D. V. {Dimarogonas}}, 
booktitle={2015 American Control Conference (ACC)}, 
title={Fuel-optimal centralized coordination of truck platooning based on shortest paths}, 
year={2015}, 
volume={}, 
number={}, 
pages={3740-3745}
% url={http://doi.org/10.1109/ACC.2015.7171911}
}

@INPROCEEDINGS{Vandehoef201509,
  author={Van De Hoef, Sebastian and Johansson, Karl Henrik and Dimarogonas, Dimos V.},
  booktitle={2015 IEEE 18th International Conference on Intelligent Transportation Systems}, 
  title={Coordinating Truck Platooning by Clustering Pairwise Fuel-Optimal Plans}, 
  year={2015},
  volume={},
  number={},
  pages={408-415}
%   url={https://10.1109/ITSC.2015.75}
 }

@ARTICLE{Vandehoef2018, 
author={S. {van de Hoef} and K. H. {Johansson} and D. V. {Dimarogonas}}, 
journal={IEEE Transactions on Intelligent Transportation Systems}, 
title={Fuel-Efficient En Route Formation of Truck Platoons},
year={2018}, 
volume={19}, 
number={1}, 
pages={102-112}
% url={http://doi.org/10.1109/TITS.2017.2700021}
}

@ARTICLE{Liang2016, 
author={K. {Liang} and J. {Mårtensson} and K. H. {Johansson}}, 
journal={IEEE Transactions on Intelligent Transportation Systems}, 
title={Heavy-Duty Vehicle Platoon Formation for Fuel Efficiency}, 
year={2016}, 
volume={17}, 
number={4}, 
pages={1051-1061} 
% url={http://doi.org/10.1109/TITS.2015.2492243}
}

@article{ZHANG20171,
author = "W. Zhang and E. Jenelius and X. Ma",
title = "Freight transport platoon coordination and departure time scheduling under travel time uncertainty",
journal = "Transportation Research Part E: Logistics and Transportation Review",
volume = "98",
pages = "1 - 23",
year = "2017"
% url = "https://doi.org/10.1016/j.tre.2016.11.008"
}

@article{LUO2018213,
title = "Coordinated platooning with multiple speeds",
journal = "Transportation Research Part C: Emerging Technologies",
volume = "90",
pages = "213 - 225",
year = "2018",
% url = "https://doi.org/10.1016/j.trc.2018.02.011",
author = "F. Luo and J. Larson and T. Munson"
}

@inproceedings{Nourmohammadzadeh2016TheFP,
author={A. Nourmohammadzadeh and S. Hartmann},
year = {2016},
pages = {46-57},
title = {The Fuel-Efficient Platooning of Heavy Duty Vehicles by Mathematical Programming and Genetic Algorithm},
% url = {http://doi.org/10.1007/978-3-319-49001-4_4},
booktitle="Theory and Practice of Natural Computing"
}

@article{Nourmohammadzadeh2018,
author={Abtin Nourmohammadzadeh and Sven Hartmann},
year = {2018},
month = {09},
pages = {1-14},
title = {Fuel-efficient truck platooning by a novel meta-heuristic inspired from ant colony optimisation},
volume = {23},
journal = {Soft Computing}
% url = {http://doi.org/10.1007/s00500-018-3518-x}
}

@INPROCEEDINGS{Larson2013, 
author={J. {Larson} and C. {Kammer} and K. {Liang} and K. H. {Johansson}}, 
booktitle={16th International IEEE Conference on Intelligent Transportation Systems (ITSC 2013)}, 
title={Coordinated route optimization for heavy-duty vehicle platoons}, 
year={2013}, 
volume={}, 
number={}, 
pages={1196-1202}
% url={http://doi.org/10.1109/ITSC.2013.6728395}
}

@article{LARSSON2015258,
title = "The vehicle platooning problem: Computational complexity and heuristics",
journal = "Transportation Research Part C: Emerging Technologies",
volume = "60",
pages = "258 - 277",
year = "2015",
% url = "https://doi.org/10.1016/j.trc.2015.08.019",
author = "E. Larsson and G. Sennton and J. Larson"
}

@ARTICLE{Deng2016EnergyefficientTT, 
author={L. {Deng} and M. H. {Hajiesmaili} and M. {Chen} and H. {Zeng}}, 
journal={IEEE Transactions on Intelligent Transportation Systems}, 
title={Energy-Efficient Timely Transportation of Long-Haul Heavy-Duty Trucks}, 
year={2018}, 
volume={19}, 
number={7}, 
pages={2099-2113}
% url={http://doi.org/10.1109/TITS.2017.2749262}
}

@ARTICLE{Liu2020TITS_energy_truck,
  author={Liu, Qingyu and Zeng, Haibo and Chen, Minghua},
  journal={IEEE Transactions on Intelligent Transportation Systems}, 
  title={Energy-Efficient Timely Truck Transportation for Geographically-Dispersed Tasks}, 
  year={2020},
  volume={21},
  number={12},
  pages={5148-5159}
%   url={https://10.1109/TITS.2019.2949267}
 }

@article{BERN1989171,
title = "The \revt{S}teiner problem with edge lengths 1 and 2",
journal = "Information Processing Letters",
volume = "32",
number = "4",
pages = "171 - 176",
year = "1989",
% url = "https://doi.org/10.1016/0020-0190(89)90039-2",
author = "M. Bern and P. Plassmann"
}

@book{DBLP:Approx_alg,
  author    = {V. Vazirani},
  title     = {Approximation algorithms},
  year      = {2001},
%   isbn      = {978-3-540-65367-7},
  publisher = {Springer},
  url = {http://doi.org/10.1007/978-3-662-04565-7}
}

@article{boyd2008subgradient,
  title={Subgradient Methods},
  author={S. {Boyd} and L. {Xiao} and A. {Mutapcic}},
  journal={\revt{L}ecture \revt{N}otes of EE392o, Stanford University, Autumn Quarter},
%   volume={2004},
%   pages={2004--2005},
  year={2003},
  url = {https://web.stanford.edu/class/ee392o/subgrad_method.pdf}
}

@inproceedings{Vaidya1989,
author = {Vaidya, P.},
title = {Speeding-up Linear Programming Using Fast Matrix Multiplication},
year = {1989},
publisher = {IEEE Computer Society},
address = {USA},
% url = {https://doi.org/10.1109/SFCS.1989.63499},
booktitle = {Proceedings of the 30th Annual Symposium on Foundations of Computer Science},
pages = {332–337},
numpages = {6},
series = {SFCS ’89}
}

@article{BAZARAA1981choice,
title = "On the choice of step size in subgradient optimization",
journal = "European Journal of Operational Research",
volume = "7",
number = "4",
pages = "380 - 388",
year = "1981",
% url = "https://doi.org/10.1016/0377-2217(81)90096-5",
author = "M. Bazaraa and H. D. Sherali"
}

@article{demir2011comparative,
  title={A comparative analysis of several vehicle emission models for road freight transportation},
  author={D., Emrah and B., Tolga and Laporte, Gilbert},
  journal={Transportation Research Part D: Transport and Environment},
  year={2011},
  publisher={Elsevier}
}

@techreport{pelkmans2004development,
  title={Development of a simulation tool to calculate fuel consumption and emissions of vehicles operating in dynamic conditions},
  author={L. Pelkmans and P. Debal and T. Hood and G. Hauser and M. Delgado},
  year={2004},
  institution={SAE Technical Paper}
%   url = {https://doi.org/10.4271/2004-01-1873}
}

@article{hellstrom2009look,
  title={Look-ahead control for heavy trucks to minimize trip time and fuel consumption},
  author={E. Hellstr{\"o}m and M. Ivarsson and J. {\AA}slund and L. Nielsen},
  journal={Control Engineering Practice},
  volume={17},
  number={2},
  pages={245--254},
  year={2009},
%   issn = {0967-0661},
%   doi = {https://doi.org/10.1016/j.conengprac.2008.07.005},
%   url = {https://www.sciencedirect.com/science/article/pii/S0967066108001251},
  publisher={Elsevier}
}

@article{zhou2016review,
  title={A review of vehicle fuel consumption models to evaluate eco-driving and eco-routing},
  author={M. Zhou and H. Jin and W. Wang},
  journal={Transportation Research Part D: Transport and Environment},
  volume={49},
  pages={203--218},
  year={2016},
%   url = {https://doi.org/10.1016/j.trd.2016.09.008},
  publisher={Elsevier}
}

@techreport{cachon2007fuel,
  title={Fuel consumption simulation model of a CNG vehicle based on real-world emission measurement},
  author={Cach{\'o}n, L and Pucher, E},
  year={2007},
  institution={SAE Technical Paper}
%   url = {https://doi.org/10.4271/2007-24-0114}
}

@book{heywood2018internal,
  title={Internal combustion engine fundamentals},
  author={J. B. Heywood},
  year={2018},
  publisher={McGraw-Hill Education},
  url = {https://www.accessengineeringlibrary.com/content/book/9781260116106}
}

@article{rakha2011virginia,
  title = {Virginia Tech Comprehensive Power-Based Fuel Consumption Model: Model development and testing},
  journal = {Transportation Research Part D: Transport and Environment},
  volume = {16},
  number = {7},
  pages = {492-503},
  year = {2011},
%   issn = {1361-9209},
%   doi = {https://doi.org/10.1016/j.trd.2011.05.008},
%   url = {https://www.sciencedirect.com/science/article/pii/S1361920911000782},
  author = {Hesham A. Rakha and Kyoungho Ahn and Kevin Moran and Bart Saerens and Eric Van den Bulck},
}

@article{moskwa1992modeling,
  title={Modeling and validation of automotive engines for control algorithm development},
  author={J. J. Moskwa and J. K. Hedrick},
  year={1992},
  journal = {Journal of Dynamic Systems, Measurement, and Control}
%   url = {https://doi.org/10.1115/1.2896525}
}

@article{saerens2013assessment,
  title={Assessment of alternative polynomial fuel consumption models for use in intelligent transportation systems applications},
  author={B. Saerens and H. Rakha and K. Ahn and E. Van Den Bulck},
  journal={Journal of Intelligent Transportation Systems},
  volume={17},
  number={4},
  pages={294--303},
  year={2013},
%   url = {https://10.1080/15472450.2013.764801},
  publisher={Taylor \& Francis}
}

@article{post1984fuel,
  title={Fuel consumption and emission modelling by power demand and a comparison with other models},
  author={K. Post and J. H. Kent and J. Tomlin and N. Carruthers},
  journal={Transportation Research Part A: General},
  volume={18},
  number={3},
  pages={191--213},
  year={1984},
%   doi = {https://doi.org/10.1016/0191-2607(84)90126-2},
%   url = {https://www.sciencedirect.com/science/article/pii/0191260784901262},
  publisher={Elsevier}
}

@inproceedings{piccoli2013estimating,
  title={Estimating fuel consumption and emissions via traffic data from mobile sensors},
  author={B. Piccoli and K. Han and T. L. Friesz and T. Yao},
  booktitle={2013 51st Annual Allerton Conference on Communication, Control, and Computing (Allerton)},
  pages={472--477},
  year={2013}
%   organization={IEEE}
%   url={https://10.1109/Allerton.2013.6736562}
}

@incollection{leung2000modelling,
  title={Modelling of motor vehicle fuel consumption and emissions using a power-based model},
  author={D. Y. C. Leung and D. J. Williams},
  booktitle={Urban Air Quality: Measurement, Modelling and Management},
  pages={21--29},
  year={2000},
  publisher={Springer}
%   url = {https://doi.org/10.1023/A:1006498328936}
}

@article{ahn2002estimating,
  title={Estimating vehicle fuel consumption and emissions based on instantaneous speed and acceleration levels},
  author={K. Ahn and H. Rakha and A. Trani and M. Van Aerde},
  journal={Journal of \revt{T}ransportation \revt{E}ngineering},
  volume={128},
  number={2},
  pages={182--190},
  year={2002},
  publisher={American Society of Civil Engineers}
%   doi = {10.1061/(ASCE)0733-947X(2002)128:2(182)},
%   URL = {https://ascelibrary.org/doi/abs/10.1061/\%28ASCE\%290733-947X\%282002\%29128\%3A2\%28182\%29}
}

@article{faris2011vehicle,
  title={Vehicle fuel consumption and emission modelling: an in-depth literature review},
  author={W. F. Faris and H. A. Rakha and R. I. Kafafy and M. Idres and S. Elmoselhy},
  journal={International Journal of Vehicle Systems Modelling and Testing},
  volume={6},
  number={3-4},
  pages={318--395},
  year={2011},
  publisher={Inderscience Publishers Ltd}
%   url = {https://10.1504/IJVSMT.2011.044232}
}

@inproceedings{park2010development,
  title={Development of fuel and emission models for high speed heavy duty trucks, light duty trucks, and light duty vehicles},
  author={Park, Sangjun and Rakha, Hesham and Farzaneh, Mohamadreza and Zietsman, Josias and Lee, Doh-Won},
  booktitle={13th International IEEE Conference on Intelligent Transportation Systems},
  pages={25--32},
  year={2010}
%   organization={IEEE}
% %   url = {https://doi.org/10.1109/ITSC.2010.5624972}
}

@article{yang2016truck,
title = {Truck acceleration behavior study and acceleration lane length recommendations for metered on-ramps},
journal = {International Journal of Transportation Science and Technology},
volume = {5},
number = {2},
pages = {93-102},
year = {2016},
% issn = {2046-0430},
% doi = {https://doi.org/10.1016/j.ijtst.2016.09.006},
% url = {https://www.sciencedirect.com/science/article/pii/S204604301630034X},
author = {Guangchuan Yang and Hao Xu and Zhongren Wang and Zong Tian},
keywords = {Ramp metering, Truck, Acceleration behavior, Acceleration length},
abstract = {This paper investigated the actual truck acceleration capability at metered on-ramps. Truck acceleration performance data were collected through a video-based data collection method. A piecewise constant acceleration model was employed to capture truck acceleration characteristics. It was found that the existing acceleration length will affect truck drivers’ acceleration behavior. At the taper type ramp that has limited acceleration distance, acceleration profile indicated a decreasing trend with distance. While for the ramp with an auxiliary lane that has sufficient acceleration distance, it was found that the acceleration behavior is to have a high acceleration rate in the beginning, then acceleration rate decrease with speed increase, and high acceleration rate again as drivers approach the merging area. Field data show that the truck acceleration performance data documented in the ITE’s (Institute of Transportation Engineers) “Traffic Engineering Handbook” are much lower than the field collected data. Also, based on the regression analysis of speed versus distance profiles, it was found that the AASHTO’s (American Association of State Highway and Transportation Officials) Green Book acceleration length design guidance is insufficient to accommodate trucks at metered on-ramps. The required acceleration lengths for medium and heavy trucks are approximately 1.3 and 1.6times of the Green Book design guideline, respectively.}
}

@misc{wiki:UShighways,
  author = "Wikipedia contributors",
  title = "United States Numbered Highway System --- {Wikipedia}{,} The Free Encyclopedia",
  year = "2019",
  url = "https://en.wikipedia.org/wiki/United_States_Numbered_Highway_System"
}

@electronic{truckFreightMap,
 title={Average Daily Long-Haul Freight Traffic on the National Highway System: 2015 Map},
 author = {\text{U.S. Department of Transportation}},
 url = {https://ops.fhwa.dot.gov/freight/freight_analysis/nat_freight_stats/nhsavglhft2015.htm},
 year = {2015}
}

@inproceedings{chmproject,
 title={The Clinched Highway Mapping (\revt{CHM}) Project},
 author = {J. D. Teresco},
 url = {http://cmap.m-plex.com/}
}

@inproceedings{kent800,
 author = {Kenworth},
 title={Kenworth \revt{T}800 vehicle},
 url = {http://www.kenworth.com/trucks/t800}
}

@inproceedings{usgs,
 author = {\text{U.S. Geological Survey}},
 title={\revt{USGS} Elevation Point Query Service},
 url = {http://nationalmap.gov/epqs/}
}

@article{markel2002advisor,
  title={\revt{ADVISOR}: A SYSTEMS ANALYSIS TOOL FOR ADVANCED VEHICLE MODELING},
  author={T. Markel and A. Brooker and T. Hendricks and V. H. Johnson and K. Kelly and B. Kramer and M. O’Keefe and S. Sprik and K. Wipke},
  journal={Journal of Power Sources},
  year={2002},
  volume={110},
  pages={255-266},
%   url = {https://doi.org/10.1016/S0378-7753(02)00189-1},
  publisher={Elsevier}
}

@inproceedings{xu2019ride,
title={Ride the Tide of Traffic Conditions: Opportunistic Driving Improves Energy Efficiency of Timely Truck Transportation},
author = {W. Xu and Q. Liu and M. Chen and H. Zeng},
booktitle={Proceedings of the 6th ACM International Conference on Systems for Energy-Efficient Buildings, Cities, and Transportation},
pages={169--178},
% url = {https://doi.org/10.1145/3360322.3360838},
year={2019}
}

@inproceedings{FAF4,
  author = {\text{Oak Ridge National Laboratory}},
  title         = "Freight Analysis Framework Version 5",
  url  =  {https://faf.ornl.gov/faf5/}
}

@article{wiki:DijkAlg,
author = {Dijkstra, E. W.},
title = {A Note on Two Problems in Connexion with Graphs},
year = {1959},
issue_date = {December  1959},
publisher = {Springer-Verlag},
address = {Berlin, Heidelberg},
volume = {1},
number = {1},
% url = {https://doi.org/10.1007/BF01386390},
journal = {Numer. Math.},
pages = {269–271},
numpages = {3}
}

@article{barbehenn1998note,
  title={A note on the complexity of \revt{D}ijkstra's algorithm for graphs with weighted vertices},
  author={M. {Barbehenn}},
  journal={IEEE \revt{T}ransactions on \revt{C}omputers},
  volume={47},
  number={2},
  pages={263},
  year={1998}
%   url={http:/doi.org/10.1109/12.663776}}

@article{Hassin1992RSP,
%  URL = {http://www.jstor.org/stable/3689891},
 journal = {Mathematics of Operations Research},
 number = {1},
 pages = {36--42},
 publisher = {INFORMS},
 title = {Approximation Schemes for the Restricted Shortest Path Problem},
 author = {R. Hassin},
 volume = {17},
 year = {1992}
}

@article{UKreport2014,
  title={HEAVY VEHICLE PLATOONS ON \revt{UK} ROADS},
  author={Ricardo and TRL and TTR},
  URL={https://assets.publishing.service.gov.uk/government/uploads/system/uploads/attachment_data/file/637361/truck-platooning-uk-feasibility-study.pdf},
  year={2014}
}

@article{Floridareport2018,
  title={Driver Assistive Truck Platooning: Considerations for \revt{F}lorida State Agencies},
  author={Carl Crane and Jennifer Bridge and Richard Bishop},
  url = {https://fdotwww.blob.core.windows.net/sitefinity/docs/default-source/content/legislative/documents/datp.pdf?sfvrsn=971162d9_0},
  year={2018}
}

@article{NACFEReport2016,
  title={CONFIDENCE	REPORT: Two-Truck Platooning},
  author={Jack Roberts and Rick Mihelic and Mike Roeth},
  url = {https://nacfe.org/wp-content/uploads/2018/02/TE-Platooning-CR-FINAL-_0.pdf},
  year={2016}
}

@inproceedings{Janssen2015TruckPD,
  title={Truck platooning: driving the future of transportation},
  author={G. Janssen and J. Zwijnenberg and I. Blankers and J. D. Kruijff},
  url = {http://resolver.tudelft.nl/uuid:778397eb-59d3-4d23-9185-511385b91509},
  year={2015}
}

@inproceedings{2017StudyOC,
  title={Study of communication needs in interaction between trucks and surrounding traffic in platooning},
  author={J. Andersson and C. Englund and A. Voronov},
  url = {https://www.drivesweden.net/sites/default/files/content/final_report_-_40016_submitted.pdf},
  year={2017}
}

@article{Ardenkani2001Traffic,
author = {Ardenkani, S. and Hauer, Ezra and Jamei, B.},
year = {2001},
month = {01},
pages = {7-24},
title = {Traffic impact models},
volume = {7},
journal = {Traffic Flow Theory: A State-of-the Art Report},
% URL = {https://rosap.ntl.bts.gov/view/dot/35775}
}

@article{An1993MODELOF,
  title={MODEL OF FUEL ECONOMY WITH APPLICATIONS TO DRIVING CYCLES AND TRAFFIC MANAGEMENT},
  author={F. An and M. Ross},
  journal={Transportation Research Record},
  year={1993},
%  URL = {http://onlinepubs.trb.org/Onlinepubs/trr/1993/1416/1416-013.pdf}
}

@inproceedings{Yue2008MESOSCOPICFC,
  title={MESOSCOPIC FUEL CONSUMPTION AND EMISSION MODELING},
  author={Huanyu Yue},
  year={2008},
  URL = {http://hdl.handle.net/10919/26695}
}

@article{nam2005fuel,
  title={Fuel consumption modeling of conventional and advanced technology vehicles in the physical emission rate estimator (\revt{PERE})},
  author={Nam, Edward K and Giannelli, Robert},
  journal={US \revt{E}nvironmental \revt{P}rotection \revt{A}gency},
  year={2005}
%  URL = {https://nepis.epa.gov/Exe/ZyPURL.cgi?Dockey=P1001D6I.txt}
}

@article{XIONG2021482,
title = {Optimizing coordinated vehicle platooning: An analytical approach based on stochastic dynamic programming},
journal = {Transportation Research Part B: Methodological},
volume = {150},
pages = {482-502},
year = {2021},
% issn = {0191-2615},
% doi = {https://doi.org/10.1016/j.trb.2021.06.009},
% url = {https://www.sciencedirect.com/science/article/pii/S019126152100120X},
author = {Xi Xiong and Junyi Sha and Li Jin},
keywords = {Connected and autonomous vehicles, Vehicle platooning, Dynamic programming, Bellman equation},
abstract = {Platooning connected and autonomous vehicles (CAVs) can improve traffic and fuel efficiency. However, scalable platooning operations require junction-level coordination, which has not been well studied. In this paper, we study the coordination of vehicle platooning at highway junctions. We consider a setting where CAVs randomly arrive at a highway junction according to a general renewal process. When a CAV approaches the junction, a system operator determines whether the CAV will merge into the platoon ahead according to the positions and speeds of the CAV and the platoon. We formulate a Markov decision process to minimize the discounted cumulative travel cost, i.e., fuel consumption plus travel delay, over an infinite time horizon. We show that the optimal policy is threshold-based: the CAV will merge with the platoon if and only if the difference between the CAV’s and the platoon’s predicted times of arrival at the junction is less than a constant threshold. We also propose two ready-to-implement algorithms to derive the optimal policy. Comparison with the classical value iteration algorithm implies that our approach explicitly incorporating the characteristics of the optimal policy is significantly more efficient in terms of computation. Importantly, we show that the optimal policy under Poisson arrivals can be obtained by solving a system of integral equations. We also validate our results in simulation with a Real-time Strategy (RTS) using real traffic data. The simulation results indicate that the proposed method yields better performance compared with the conventional method.}
}

@article{BOYSEN201826,
title = {The identical-path truck platooning problem},
journal = {Transportation Research Part B: Methodological},
volume = {109},
pages = {26-39},
year = {2018},
% issn = {0191-2615},
% doi = {https://doi.org/10.1016/j.trb.2018.01.006},
% url = {https://www.sciencedirect.com/science/article/pii/S0191261517305970},
author = {Nils Boysen and Dirk Briskorn and Stefan Schwerdfeger},
keywords = {Transportation, Green logistics, Platooning, Vehicle scheduling},
abstract = {Platooning has been identified as a promising way to reduce the carbon footprint and fuel consumption of freight transportation. Recent technological developments connecting a platoon via digital data transmission even allow that the driver of the front truck controls all (unmanned) follower vehicles. Existing research mainly focuses on the technological and safety aspects of controlling the trucks and their distances. However, the efficiency of platooning is not only dependent on the aerodynamic drag, which considerably reduces with decreasing inter-vehicle distance; it is also influenced by the platoon formation process. To explore the impact of this and other neglected aspects on the efficiency of platooning (i.e., the diffusion of platooning technology, maximum platoon lengths, and the trucks’ willingness-to-wait for partners) a basic scheduling problem for the platoon building process along a single path is investigated. By differentiating problem characteristics, e.g., the objective function, we derive different problem settings for which a detailed analysis of computational complexity is provided. Efficient algorithms are derived and applied to explore the impact of the diffusion of platooning technology, the maximum platoon length, and the tightness of time windows. Our results show that these factors can considerably reduce the positive effects of truck platooning, and, thus, the benefit may diminish.}
}

@article{LEE2021102664,
title = {The optimal eco-friendly platoon formation strategy for a heterogeneous fleet of vehicles},
journal = {Transportation Research Part D: Transport and Environment},
volume = {90},
pages = {102664},
year = {2021},
% issn = {1361-9209},
% doi = {https://doi.org/10.1016/j.trd.2020.102664},
% url = {https://www.sciencedirect.com/science/article/pii/S136192092030849X},
author = {Woo Jin Lee and Sung Il Kwag and Young Dae Ko},
keywords = {Bell-shaped pattern, Platooning, Electric vehicle, Energy saving, Mathematical model, Profit sharing},
abstract = {This paper proposes an eco-friendly platoon formation strategy for a heterogeneous fleet of electric vehicles for a single route. To maximize advantages of platooning including energy saving and CO2 emission reduction, a mathematical model-based optimization technique is used to determine the optimal number and configuration of platoons. The energy savings of each vehicle is different due to its size and the size of adjacent vehicles. For example, even when the size of vehicles in front and behind a vehicle is large, the interior vehicle’s size affects its own energy consumption and the energy saving rate due to the vehicles in front and behind. To observe the energy saving effect of platooning, numerical experiments with heterogeneous sets of vehicles are performed. Since the type of the adjacent vehicle influences the energy saving of each vehicle, it is found that platoon configuration with a bell-shaped pattern is generally effective for energy saving.}
}

@article{BHOOPALAM2018212,
title = {Planning of truck platoons: A literature review and directions for future research},
journal = {Transportation Research Part B: Methodological},
volume = {107},
pages = {212 - 228},
year = {2018},
% doi = {https://doi.org/10.1016/j.trb.2017.10.016},
% url = {https://www.sciencedirect.com/science/article/pii/S0191261517305246},
author = {Anirudh Kishore Bhoopalam and Niels Agatz and Rob Zuidwijk}
}

@article{SALA2021116,
title = {Capacity of a freeway lane with platoons of autonomous vehicles mixed with regular traffic},
journal = {Transportation Research Part B: Methodological},
volume = {147},
pages = {116-131},
year = {2021},
% issn = {0191-2615},
% doi = {https://doi.org/10.1016/j.trb.2021.03.010},
% url = {https://www.sciencedirect.com/science/article/pii/S0191261521000540},
author = {Marcel Sala and Francesc Soriguera},
keywords = {Platooning, Cooperative traffic, Autonomous vehicles, Connected vehicles, Freeway capacity, Macroscopic model},
abstract = {In the near future, autonomous vehicles (AVs) will travel sharing the current freeways with human driven vehicles. The efficiency of this mixed traffic scenario will depend on the ability of AVs to behave cooperatively. Otherwise, the introduction of uncoordinated AVs might lead to capacity reductions. Connected AVs (CAVs) traveling in platoons represents a promising management strategy to get the most from the AVs’ technological revolution. Most of previous research has used traffic microsimulation tools to assess platooning and other CAVs’ cooperative driving strategies, achieving good results. However, the robust macroscopic modeling alternative, which typically yields the necessary insights and fundamental knowledge to set the foundations for the development of management and control strategies, remains almost unexplored. This paper contributes to fill this research gap by providing a generalized macroscopic model to estimate the average CAVs platoon length for a given traffic demand and penetration rate of CAVs. Two different platooning schemes are compared (i.e. cooperative and opportunistic) representing the best and worst case scenarios. The estimation of CAVs platoon length is of much importance as it is the main factor driving capacity improvements on freeways, which under the appropriate conditions could exceed 10.000 vehicles per hour and lane.}
}

@article{ZHOU2021102882,
title = {Analytical analysis of the effect of maximum platoon size of connected and automated vehicles},
journal = {Transportation Research Part C: Emerging Technologies},
volume = {122},
pages = {102882},
year = {2021},
% doi = {https://doi.org/10.1016/j.trc.2020.102882},
% url = {https://www.sciencedirect.com/science/article/pii/S0968090X20307828},
author = {Jiazu Zhou and Feng Zhu},
publisher = {Elsevier}
}

@article{LARSEN2019249,
title = {Hub-based truck platooning: Potentials and profitability},
journal = {Transportation Research Part E: Logistics and Transportation Review},
volume = {127},
pages = {249-264},
year = {2019},
% issn = {1366-5545},
% doi = {https://doi.org/10.1016/j.tre.2019.05.005},
% url = {https://www.sciencedirect.com/science/article/pii/S136655451830944X},
author = {Rune Larsen and Jeppe Rich and Thomas Kjær Rasmussen},
keywords = {Truck platooning, Logistic planning, Discrete event simulation, Logistic optimisation},
abstract = {This paper presents a model for optimising truck platoons formed at a platooning hub. Different planning and dispatching strategies, from static to dynamic, are investigated with respect to profitability and fuel savings across a range of input variables. The problem is solved using a dynamic programming based local search heuristic. As a case study, a virtual platooning hub close to the German Elb Tunnel is examined using data from a large European transport network model. It is concluded that profitability crucially depends on; (i) dynamic outlook and (ii) if chauffeurs are allowed to rest while driving in platoons.}
}

@ARTICLE{Guo2019,
  author={Guo, Ge and Wang, Qiong},
  journal={IEEE Transactions on Intelligent Transportation Systems}, 
  title={Fuel-Efficient En Route Speed Planning and Tracking Control of Truck Platoons}, 
  year={2019},
  volume={20},
  number={8},
  pages={3091-3103}
%   url={https://10.1109/TITS.2018.2872607}
}

@article{Luo2021repeated,
  title={A repeated route-then-schedule approach to coordinated vehicle platooning: Algorithms, valid inequalities and computation},
  author={Luo, Fengqiao and Larson, Jeffrey},
  journal={Operations Research},
  year={2021},
  publisher={INFORMS}
}

@inproceedings{Nourmohammadzadeh2018fuel,
  title={Fuel Efficient Truck Platooning with Time Restrictions and Multiple Speeds Solved by a Particle Swarm Optimisation},
  author = {A. Nourmohammadzadeh and S. Hartmann},
  booktitle={International Conference on Theory and Practice of Natural Computing},
  pages={188--200},
  year={2018}
%   organization={Springer}
}

@article{gao2007modeling,
  title={Modeling and simulation of electric and hybrid vehicles},
  author={Gao, David Wenzhong and Mi, Chris and Emadi, Ali},
  journal={Proceedings of the IEEE},
  volume={95},
  number={4},
  pages={729--745},
  year={2007},
  publisher={IEEE}
}

@article{lajunen2014energy,
  title={Energy consumption and cost-benefit analysis of hybrid and electric city buses},
  author={Lajunen, Antti},
  journal={Transportation Research Part C: Emerging Technologies},
  volume={38},
  pages={1--15},
  year={2014},
  publisher={Elsevier}
}

@article{mcauliffe2017fuel,
  title={Fuel-economy testing of a three-vehicle truck platooning system},
  author={McAuliffe, Brian and Croken, Mark and Ahmadi-Baloutaki, Mojtaba and Raeesi, Arash},
  year={2017}
}

@article{nie2013eco,
  title={An eco-routing model considering microscopic vehicle operating conditions},
  author={Nie, Yu Marco and Li, Qianfei},
  journal={Transportation Research Part B: Methodological},
  volume={55},
  pages={154--170},
  year={2013},
  publisher={Elsevier}
}

\section{Appendix}
\subsection{Platooning Example}
\label{Platooning_example}
Here, we introduce a simple example to illustrate the benefits of considering all the design spaces in this paper. All the parameters of this simple example are presented in Fig. \ref{fig:instance}.  The comparison is summarized in Tab. \ref{tab:2truck_optimal}.
\begin{figure}[h]
    \centering
    \includegraphics[width = \columnwidth]{./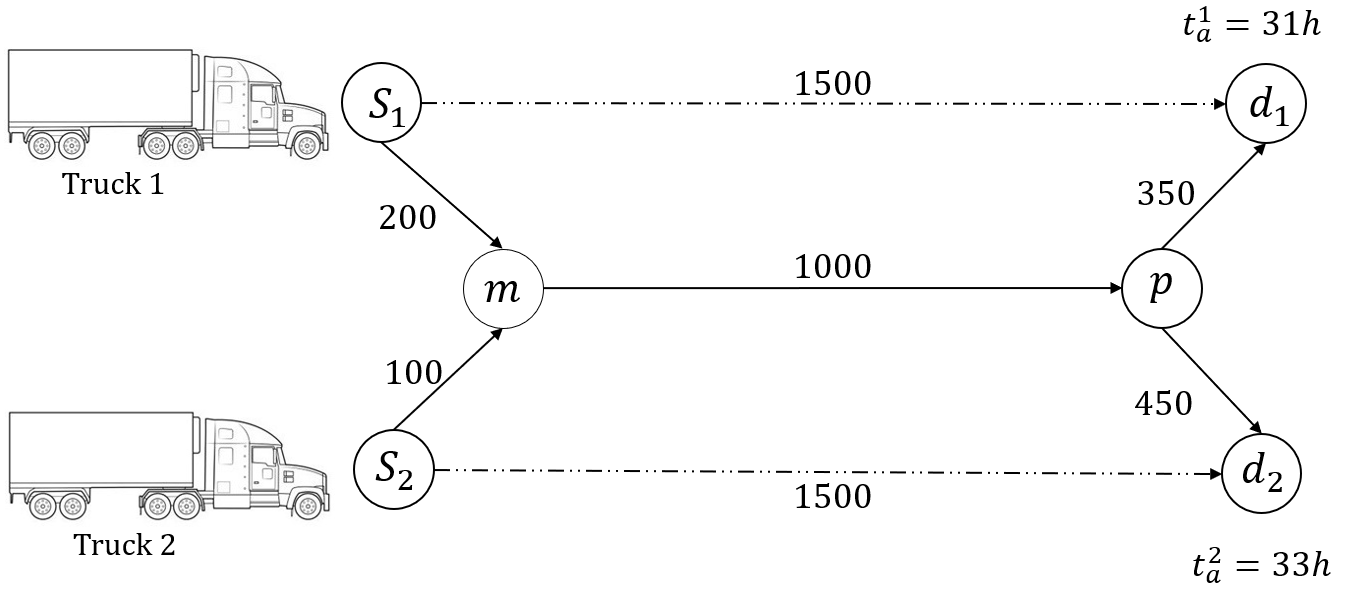}
    \caption{A two-truck platooning instance with \emph{instant fuel consumption rate defined as fuel consumption per unit time function} $f_e(v) = 4\cdot 10^{-4}v^2 - 6\cdot 10^{-3}v + 1$, where $v$ is the driving speed on road segment $e$. \emph{Then total fuel cost function} for each edge is $c_e(t) = 4\cdot 10^{-4}(D_e/t)^2 - 6\cdot 10^{-3}(D_e/t) + 1$, where $D_e$ is the length of the road segment $e$ and $t$ is the driving time spent on $e$. The optimal speed on each edge is 50 mile per hour. The solid lines indicate the optimal platooning solution.}
    \label{fig:instance}
\end{figure}
 \begin{table*}[h]
    \centering
    \begin{tabular}{|c|c|c|c|}
    \hline
      &  Shortest Path & Non-Departure Coordination & Departure Coordination \\
    \hline
        $\mathcal{P}_1$   & $s_1 \rightarrow d_1$& $s_1 \rightarrow m \rightarrow p \rightarrow d_1$ & $s_1 \rightarrow m \rightarrow p \rightarrow d_1$\\
    \hline
        $\mathcal{P}_2$ & $s_2 \rightarrow d_2$& $s_2 \rightarrow m \rightarrow p \rightarrow d_2$ & $s_2 \rightarrow m \rightarrow p \rightarrow d_2$\\
    \hline
        $\mathcal{V}_1$ & 50 & 63.25 $\rightarrow$ 50 $\rightarrow$ 50 & 50 $\rightarrow$ 50 $\rightarrow$ 50 \\
    \hline
        $\mathcal{V}_2$ & 50 & 31.62 $\rightarrow$ 50 $\rightarrow$ 50 & 50 $\rightarrow$ 50 $\rightarrow$ 50 \\
    \hline
        $\mathcal{T}_1$ & 30 &  $3.16 \rightarrow 20 \rightarrow 7 $&  $4 \rightarrow 20 \rightarrow 7$\\
    \hline
        $\mathcal{T}_2$ & 30 & $3.16 \rightarrow 20 \rightarrow 9$ & wait 2 hours $\rightarrow 2 \rightarrow 20 \rightarrow 9$ \\
    \hline
        Average fuel cost & 51.00
 & 49.62 & 49.30\\
    \hline
        Total fuel saving rate & N.A. & 2.7 \% &3.3\%\\
    \hline
    \end{tabular}
    \caption{Optimal solutions for the two-truck platooning instance under different scenarios. $\mathcal{P}_1$ (respectively $\mathcal{P}_2$) refers to optimal paths for truck 1 (respectively truck 2); $\mathcal{V}_1$ (respectively $\mathcal{V}_2$) refers to optimal speeds on each edge for truck 1 (respectively truck 2); $\mathcal{T}_1$ (respectively $\mathcal{T}_2$) refers to optimal travel time on each edge for truck 1 (respectively truck 2).  Compared to the shortest path solution, the total fuel-saving rate for each truck is 3.3\% when the platooning-fuel-saving rate in a platoon is 10\%. {However,} the total fuel-saving rate for each truck is only approximately {2.7\%} when departure coordination is not allowed.}
    \label{tab:2truck_optimal}
\end{table*}

The average daily long-haul traffic intensity on the US national highway in~\ref{fig:22region} shows that most traffic concentrate on some main highways. Platooning is likely to happen between these vehicles, therefore saves fuel. 
\begin{figure}[htbp!]
    \begin{subfigure}{\columnwidth}
    \centering
    \includegraphics[width=\columnwidth]{./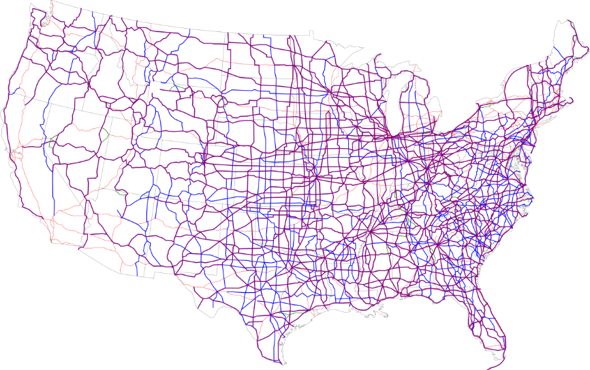}
    \vspace{-1ex}
    \subcaption{$\;$}
    \label{fig:RestAreas}
    \end{subfigure}
    \begin{subfigure}{\columnwidth}
    \centering
    \includegraphics[width=\columnwidth]{./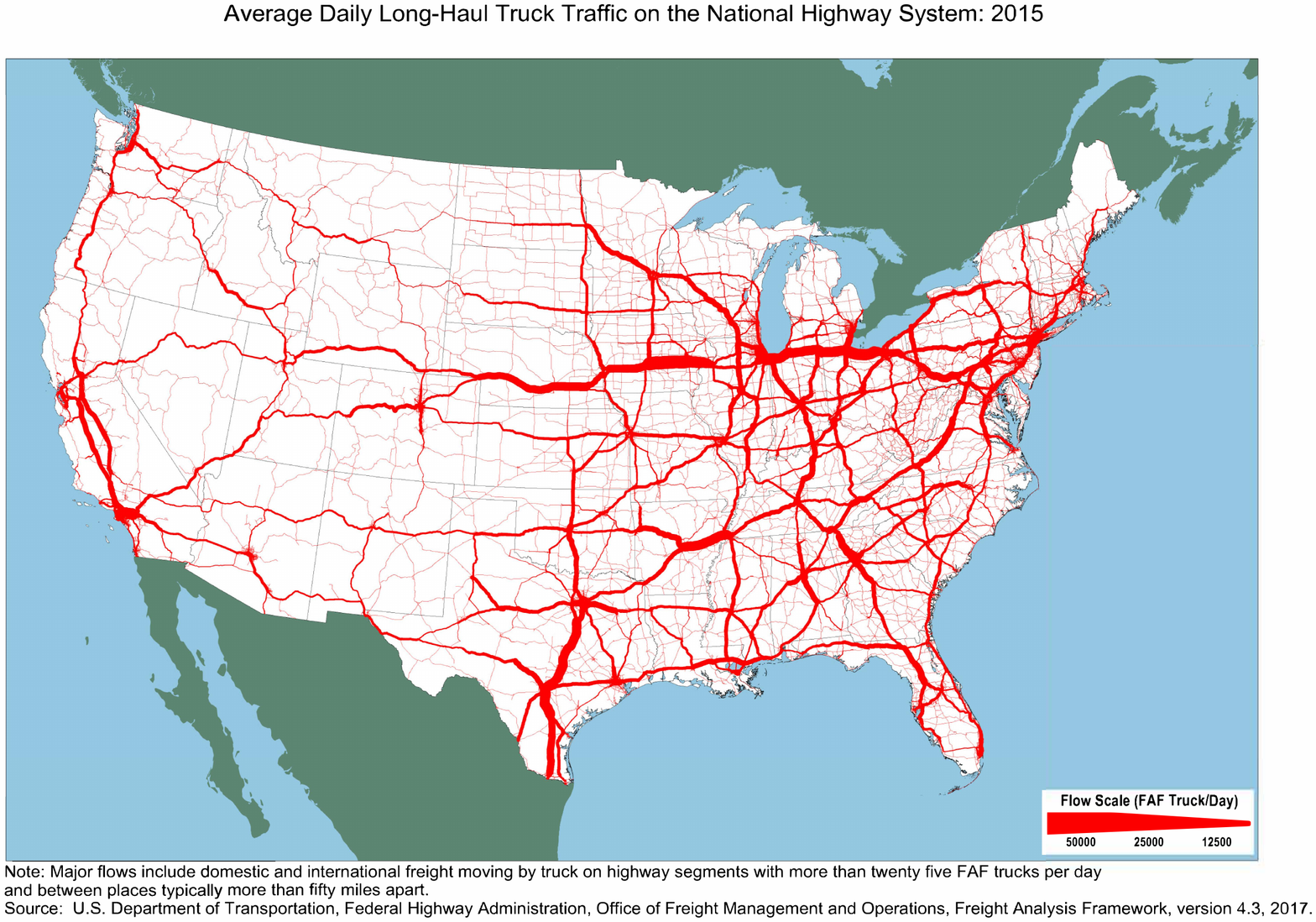}
    \vspace{-1ex}
    \subcaption{$\;$}
    \label{fig:22region}
    \end{subfigure}
\caption{{(a) Illustration of the US national highway network~\cite{wiki:UShighways}. {(b) Average daily long-haul traffic on the US national highway~\cite{truckFreightMap}. Thickness of the lines indicates the traffic flow intensity.}}}
\end{figure}

\subsection{Proof of Lemma \ref{lem:platoonOnce}}
\label{lem:platoonOnceproof}
\begin{proof}
\begin{figure} [hbt]
    \centering
    \includegraphics[width = \columnwidth]{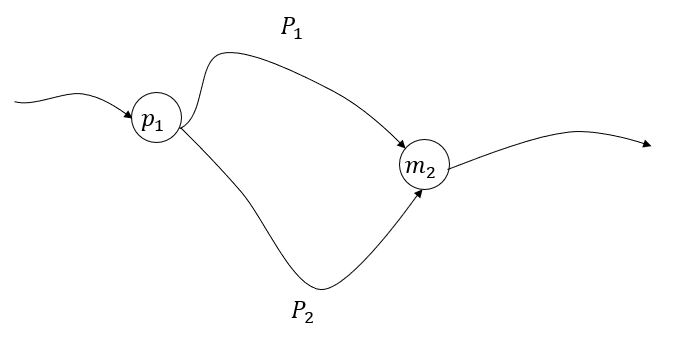}
    \caption{Optimal solution where the two trucks platoon twice.}
    \label{fig:my_platoon_once}
\end{figure}
\rev{For simplicity of presentation, we use $P$ to denote a path, and $C(P)$ to denote the total fuel cost on the path.} As shown in Fig \ref{fig:my_platoon_once}, let $P^*$ be an optimal platoon routing with fuel cost $C(P^*)$ in which there are two sub-paths $P_1$ and $P_2$, and node $p_1$ is the splitting point of the first platoon and node $m_2$ is the merging point of second platoon.
We assume the cost for traversing $P_1$ is $C(P_1)$, and traversing $P_2$ is $C(P_2)$. 
Without loss of generality, we may assume $C(P_2) \leq C(P_1)$. 
In order to re-platoon at $m_2$, the time cost of traversing $P_1$ and $P_2$ for the two trucks should be the same. 
So we can move the truck running on $P_1$ to $P_2$ to form a platoon from $p_1$ to $m_2$. This forms a new solution $P'$.
Then, we have
$$
C(P') = C(P^*) - \eta C(P_1) + \eta C(P_2) \leq C(S),
$$
since $C(P_2) \leq C(P_1)$. Since $P$ is the optimal platooning solution, $C(P^*) \leq C(P')$, therefore, $C(P') = C(P^*)$. This implies that if the two-truck platooning problem is feasible, there exists an optimal platooning solution in which the two trucks platoon only once, and never split and merge for a second time.
\end{proof}

\subsection{Proof of Theorem \ref{Thm:NP}}
\label{Thm:NP_Proof}
\begin{proof} 

    Consider the setting where two trucks share the same
    origin, destination, and pickup and delivery windows. Under this setting, it is clear  that two trucks platooning from the origin to the destination is optimal. The two-truck platooning problem then reduces to the single truck fuel minimization problem~\cite{Deng2016EnergyefficientTT},
    which covers the classical restricted shortest path problem (RSP)
    as a special case. Since RSP is NP-hard, the two-truck platooning
    problem is also NP-hard in general.
    
    Next, we show that the two-truck platooning is \emph{weakly} NP-hard.
    Specifically, we design an FPTAS for the two-truck platooning problem
    that achieves $(1+\epsilon)$ approximation ratio (for any $\epsilon>0$)
    with a time complexity polynomial in the size of the transportation
    network and $1/\epsilon$. The full details are in Appendix \ref{FPTAS}
    and we briefly discussed the idea below. 
    
    We first design an FPTAS under the setting where the merging point
    and splitting point of the platooning are given. Then we obtain an
    FPTAS for the general setting by applying the same FPTAS to all possible
    $\mathcal{O}(N^{2})$ combinations of merging and splitting points. 
    
    More specifically, given the merging point and splitting point, we
    design an FPTAS using the standard rounding and scaling technique.
    The FPTAS consists of two sub-procedures. First, we divide our two-truck
    platooning into five separate RSP problems.
    {For each of the RSP problems with speed planning, we quantize the edge-e fuel-time function $c_e(t_e)$ to be a staircase function, where the number of stairs is determined by $\epsilon$. Meanwhile, the fuel cost is scaled down with a function of $\epsilon$. Using dynamic programming, we enumerate cost bounded minimum-travel-time paths till the deadline is satisfied. Note that the time complexity is dominated by the minimum cost and the cost coordination in the five RSP problems, which is $\mathcal{O}(N^4/\epsilon^4)$.
    }
    Then we combine solutions for the five RSP problems to meet individual
    deadlines and minimize total fuel cost. It can be solved by enumerating
    {all possible $\mathcal{O}(N^2)$ combinations of merging and splitting points. The overall time complexity is thus $\mathcal{O}(N^{6}/\epsilon^{4})$}.
    The existence of an FPTAS implies that the two-truck platooning problem is NP-hard (but) in the weak sense \cite{DBLP:Approx_alg}. 
\end{proof}

\subsection{Checking Feasibility of Two-truck Platooning Problem} \label{subsec:checkFeasi}

{By exploiting Lem.~\ref{lem:platoonOnce}, we can efficiently check the feasibility of our problem using Alg.~\ref{alg:feasCheck}.}

\begin{algorithm2e}[t]
\SetAlgoLined
    \caption{Feasibility Check Algorithm of Two-truck Platooning Problem~$\eqref{equ:objective}$}
    \label{alg:feasCheck}
    Set FEASIBLE\_FLAG=\textrm{FALSE}.\\
    \For{merging node $v_m\in V$}{
        \For{splitting node $v_s\in V$}{
            Set the driving time on edge $e$ to be $t_e^{lb}$ and find the fastest sub-path $p_1$ from $s_1$ to $v_m$, the fastest sub-path $p_2$ from $s_2$ to $v_m$, the fastest sub-path $p_3$ from $v_m$ to $v_s$, the fastest sub-path $p_4$ from $v_s$ to $d_1$, the fastest sub-path $p_5$ from $v_s$ to $d_2$.\\
            Set $\tau_i\xleftarrow{}\sum_{e\in p_i}t_e^{lb},\forall i\in\{1,2,3,4,5\}$\\
            \If{$t_d^1+\tau_1+\tau_3+\tau_4\leq t_a^1$ and $t_d^2+\tau_2+\tau_3+\tau_4\leq t_a^1$ and $t_d^1+\tau_1+\tau_3+\tau_5\leq t_a^2$ and $t_d^2+\tau_2+\tau_3+\tau_5\leq t_a^2$}{
                FEASIBLE\_FLAG $\xleftarrow{}$\textrm{TRUE}\\
                \textbf{return} {FEASIBLE\_FLAG}\\
            }
        }
    }
    \textbf{return} FEASIBLE\_FLAG
\end{algorithm2e}

{To find the shortest path, we can use the classical Dijkstra's algorithm~\cite{wiki:DijkAlg}, which has a worst-case time complexity of
$\mathcal{O}\left(M+N\log N\right)$~\cite{barbehenn1998note}.
Therefore, the worst case time complexity of Alg.~\ref{alg:feasCheck} is
$\mathcal{O}\left(N^2\left(M+N\log N\right)\right)$.
}

\subsection{Proof to Theorem \ref{Thm_ILPGap}}
\label{proof:ILPGap}
\begin{proof}
{The idea of the proof is simple. For any given optimal LP solution, we show that it can be represented as a convex combination of ILP solutions.
Therefore, the main content of this proof is how to find the convex integer decomposition for an optimal LP solution.
In this proof, we use the fact that the minimum fractional flow of an LP solution is unsplittable, and can be represented as a fraction of an integer platooning solution.
The main difficulty is to find the merging and splitting points for the minimum fractional flow in the LP solution.
We use depth first search to find the merging and splitting points corresponding to the minimum fractional flow, whose time complexity is at most $\mathcal{O}(M)$.
Then we remove the current minimum fractional flow in the LP solution, and do the {same thing} for the remaining flow.
{Eventually,} we can decompose the whole LP solution into a convex combination of integer platooning solutions at most $\mathcal{O}(M)$ {steps of search-and-removal}. {Therefore}, for an optimal LP solution, we can recover an integer solution {by selecting the most fuel saving solution in the process} in {at most} $\mathcal{O}(M^2)$ {time}. 
}

{We first prove Lem.~\ref{lem:int_combination}, which will later be used to prove Thm.~\ref{Thm_ILPGap}.

\begin{lemma}
$\forall\left(\vec{x}^*,\vec{y}^*,\vec{z}^*\right)$ that is an optimal solution to the relaxed LP problem of the ILP shown in~\eqref{eq:prob_ppo_ILP}, $\exists{\theta_{k}\in[0,1]}$,
$k\in \mathbb{N}_K$ satisfying $\sum\limits_{k\in \mathbb{N}_K}\theta_{k}=1$, such that,
\begin{align}
  \left(\vec{x}^*,\vec{y}^*,\vec{z}^*\right)=\sum\limits_{k\in\mathbb{N}_K} \theta_{k}\left(\vec{x}^{k}, \mathbf{1}_{v=v_{i_k}},\mathbf{1}_{v=v_{j_k}}\right),
\end{align}
where $\left(\vec{x}^k, \mathbf{1}_{v=v_{i_k}},\mathbf{1}_{v=v_{j_k}}\right)$ is an integer solution to the original ILP problem with merging point $v_{i_k}$ and splitting point $v_{j_k}$.
\label{lem:int_combination}
\end{lemma}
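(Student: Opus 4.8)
The plan is to realize the relaxed LP solution as a fractional, five-commodity flow with one sub-flow per sub-path, coupled through the merge-mass $\vec{y}^*$ and split-mass $\vec{z}^*$, and then to peel off integer platooning units one bottleneck at a time. First I would record a structural fact that makes everything work: since $w_e^i(\vec\lambda)>0$ for every $e$ and $i$ (indeed $g_e^i=c_e(\cdot)+(\text{nonnegative})\cdot t_e^i>0$ because $c_e>0$), the support of an \emph{optimal} LP solution cannot contain a directed cycle inside any single sub-flow: cancelling such a cycle preserves all flow-conservation constraints of $\mathcal{P}$ in~\eqref{feasible_region} and the box constraints while strictly lowering the objective, contradicting optimality. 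Hence each of the five sub-flows is supported on an acyclic subgraph, which is exactly what forces the tracing below to terminate.

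Next I would extract a single integer platooning solution. Pick any $v_m$ with $y^*_{v_m}>0$; such a node exists since $\sum_v y^*_v=1$. Reading sub-path-$3$ conservation as ``net out-flow at $v$ equals $y^*_v-z^*_v$'', I would forward-trace from $v_m$ along edges carrying positive $x_e^{3,*}$: at every visited node $v$ with $z^*_v=0$ the net out-flow equals $y^*_v\ge 0$, forcing out-flow $\ge$ in-flow $>0$ so a continuing positive-flow edge exists, and acyclicity forces the walk to halt at some $v_s$ with $z^*_{v_s}>0$, yielding a positive-flow path $\pi_3\colon v_m\rightsquigarrow v_s$. Symmetrically, since sub-path-$1$ (resp. $2$) conservation makes $y^*_{v_m}>0$ a strict net in-flow at $v_m$, a backward trace on positive $x_e^{1,*}$ (resp. $x_e^{2,*}$) edges terminates at the unique source $s_1$ (resp. $s_2$), giving $\pi_1\colon s_1\rightsquigarrow v_m$ and $\pi_2\colon s_2\rightsquigarrow v_m$; and since $z^*_{v_s}>0$ is a strict source for sub-paths $4$ and $5$, forward traces give $\pi_4\colon v_s\rightsquigarrow d_1$ and $\pi_5\colon v_s\rightsquigarrow d_2$. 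Taking $\vec{x}^k$ to be the edge-indicators of $\pi_1,\dots,\pi_5$, together with $\mathbf{1}_{v=v_m}$ and $\mathbf{1}_{v=v_s}$, yields a genuine integer point of $\mathcal{P}$, as one checks directly against the flow-conservation and single merge/split constraints.

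I would then peel this unit off. Let $\theta_k$ be the bottleneck value, i.e. the minimum of $\{x_e^{i,*}:e\in\pi_i,\,i\in\mathbb{N}_5\}\cup\{y^*_{v_m},\,z^*_{v_s}\}$, and replace $(\vec{x}^*,\vec{y}^*,\vec{z}^*)$ by $(\vec{x}^*,\vec{y}^*,\vec{z}^*)-\theta_k(\vec{x}^k,\mathbf{1}_{v=v_m},\mathbf{1}_{v=v_s})$. Subtracting an integer unit-flow preserves flow conservation and keeps every entry nonnegative by the choice of $\theta_k$, so the residual is again a nonnegative coupled flow to which the same extraction applies. The bottleneck entry drops to exactly $0$ and, since we only ever subtract, never reappears; as there are only $5M$ edge-variables and $2N$ node-variables, the process halts after $K=\mathcal{O}(M+N)$ rounds, each costing one DFS of $\mathcal{O}(M)$, which matches the $\mathcal{O}(M^2)$ recovery bound of Thm.~\ref{Thm_ILPGap}. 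Finally, because the total $y$-mass starts at $\sum_v y^*_v=1$ and each peeled unit removes exactly one unit of $y$-mass, I obtain $\sum_k\theta_k=1$ with every $\theta_k\in(0,1]$, which is precisely the claimed convex decomposition.

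I expect the crux — and the only place needing real care — to be the \emph{consistent} choice of merge and split points: the five sub-flows must be stitched at the \emph{same} $v_m$ and $v_s$, and each trace must provably continue (a positive-flow successor edge is always available) and provably terminate at the correct endpoint. Acyclicity of the optimal support is the lever that settles both, so I would make that argument airtight, including the degenerate cases $s_1=v_m$, $v_s=d_1$, $v_s=d_2$, or $v_m=v_s$ (a zero-length platoon), each of which merely makes the corresponding $\pi_i$ trivial.
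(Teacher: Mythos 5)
Your proposal is correct and follows essentially the same route as the paper's proof: both arguments iteratively peel off one integer platooning unit (five stitched sub-paths plus a single merge/split indicator pair) scaled by a bottleneck coefficient, zeroing at least one positive variable per round, which yields the same $\mathcal{O}(M)$ rounds and $\mathcal{O}(M^2)$ total recovery cost. Your explicit cycle-cancelling argument (using $w_e^i>0$ at LP optimality to make the support of each sub-flow acyclic, hence forcing every trace to terminate at the correct endpoint) is a welcome tightening of a step that the paper's DFS-based construction leaves implicit.
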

Here, we give a constructive proof to Lem.~\ref{lem:int_combination}. We use $x^{*,i}_e,e\in E, i\in \mathbb{N}_5$($x^{k,i}_e,e\in E, i\in \mathbb{N}_5$ resp.) to denote the variable corresponding to edge $e$ and the sub-path $i$ in the solution $\vec{x}^*$($\vec{x}^k$ resp.). We apply Alg.~\ref{alg:constrt_sol} to construct $\theta_k$ and their corresponding solutions. We iteratively select the minimum positive $x_e^i$ as the coefficient $\theta_k$. To construct $\theta_k$'s \revt{corresponding} integer platooning solution, we execute $6$ searching procedures step by step. If $1=\tilde{i}\defeq\argmin\limits_{i\in\mathbb{N}_5}\min\limits_{e\in E, x_e^i>0}x_e^i$, the step-by-step searching procedure is shown as Alg.~\ref{alg:construct_coefficient_sol}, where we invoke $\mathrm{SEARCH}$ procedure to construct one sub-path. For $\tilde{i}$ other than $1$, we can execute the searching procedure in the same way except for some notation and order rearrangements.   

To show that the construction procedure works, we need to show that the procedure terminates in $\mathcal{O}(M^2)$ time and when it terminates, the output $\theta_k$ and corresponding solutions are what we desire. In every step of the while loop in Alg.~\ref{alg:constrt_sol}, there is at least one positive $x_e^i$ set to 0. Since there are at most $\mathcal{O}(M)$ positive $x_e^i$, the while loop can be executed for at most $\mathcal{O}(M)$ steps. Within every step of the while loop, a searching procedure is executed. In every step of the while loop of Alg.~\ref{alg:search_sol}, one edge is appended into the edge set $p$ and when there are no new edges to be added, the procedure stops, so Alg.~\ref{alg:search_sol} takes time $\mathcal{O}(M)$. To sum up, the construction procedure takes time $\mathcal{O}(M^2)$. And by the construction process, it naturally holds,
\begin{equation}
  \left(\vec{x}^*,\vec{y}^*,\vec{z}^*\right)=\sum\limits_{k\in\mathbb{N}_K} \theta_{k}\left(\vec{x}^{k}, \mathbf{1}_{v=v_{i_k}},\mathbf{1}_{v=v_{j_k}}\right),
\end{equation}
}
\begin{algorithm2e}[t]
\SetAlgoLined
\caption{Construct coefficients $\theta_k$ and their corresponding integer solutions.}
\label{alg:constrt_sol}
       Set $k\xleftarrow{}1$\\
       Set $x_e^i\xleftarrow{}x_e^{*,i},\forall e\in E, i\in\mathbb{N}_5$
      \While{true}{
      \If{$x_e^i=0,\forall e\in E,i\in[5]$}{
       $\mathrm{break}$} 
       $\tilde{e},\tilde{i} \xleftarrow{} \argmin\limits_{e\in E, i\in[5], x_e^{i}>0}{x_e^{i}}$\\
        Set $\theta_k\xleftarrow{}x_{\tilde{e}}^{\tilde{i}}$ and construct it's corresponding solution $\left(\vec{x}^{k}, \mathbf{1}_{v=v_{i_k}},\mathbf{1}_{v=v_{j_k}}\right)$\\
       Set $x^{i}_e\xleftarrow{} x^{i}_e-\theta_kx^{k,i}_e, y_{v_{i_k}}\xleftarrow{}y_{v_{i_k}}-\theta_k, z_{v_{j_k}}\xleftarrow{}z_{v_{j_k}}-\theta_k$\\
          $k$++
      }
\end{algorithm2e}

\begin{algorithm2e}[htbp]
\SetAlgoLined
    \caption{Obtaining $\theta_k$ and its corresponding solution with $\tilde{i}=1$.}
    \label{alg:construct_coefficient_sol}
            $p_1, v_1 \xleftarrow{} \mathrm{SEARCH}(\tilde{e},\tilde{i}, \mathrm{'BACKWARD'})$\\
            $p_1', v_2 \xleftarrow{} \mathrm{SEARCH}(\tilde{e}, \tilde{i}, \mathrm{'FORWARD'})$\\
            $p_1 \xleftarrow{} p_1\cup p_1'$\\
            Take $\tilde{e}_2\in\textbf{In}(v_2)$ satisfying $x_{\tilde{e}_2}^2\geq x_{\tilde{e}}^1$\\
            $p_2, v_3\xleftarrow{}\mathrm{SEARCH}(\tilde{e}_2,2, \mathrm{'BACKWARD'})$\\
            Take $\tilde{e}_3\in\textbf{Out}(v_2)$ satisfying $x_{\tilde{e}_3}^3\geq x_{\tilde{e}}^1$\\
            $p_3, v_4\xleftarrow{}\mathrm{SEARCH}(\tilde{e}_3,3, \mathrm{'FORWARD'})$\\
            Take $\tilde{e}_4\in\textbf{Out}(v_4)$ satisfying $x_{\tilde{e}_4}^4\geq x_{\tilde{e}}^1$\\
            $p_4, v_5\xleftarrow{}\mathrm{SEARCH}(\tilde{e}_4,4, \mathrm{'FORWARD'})$\\
            Take $\tilde{e}_5\in\textbf{Out}(v_4)$ satisfying $x_{\tilde{e}_5}^5\geq x_{\tilde{e}}^1$\\
            $p_5, v_6\xleftarrow{}\mathrm{SEARCH}(\tilde{e}_5,5, \mathrm{'FORWARD'})$\\
            Set $\theta_k\xleftarrow{}x_{\tilde{e}}^1$\\
            Set $x^{k,i}_{e}\xleftarrow{}\mathbf{1}_{e\in p_i},e\in E,i\in \mathbb{N}_5$ and $v_{i_k}\xleftarrow{}v_2, v_{j_k}\xleftarrow{}v_4$\\
\end{algorithm2e}

\begin{algorithm2e}[h]
\SetAlgoLined
    \caption{$\mathrm{SEARCH}$($\tilde{e}$, $\tilde{i}$, $s$)}
    \label{alg:search_sol}
    \While{true}{
        Initialize the forward path's edge set $p\xleftarrow{} \left\{\tilde{e}\right\}$\\
        \While{true}{
            \eIf{$s$='FORWARD'}{
                $v\xleftarrow{}\textbf{head}(\tilde{e})$\\
                $\tilde{E}\xleftarrow{}\textbf{Out}(v)$\\
            }{
                $v\xleftarrow{}\textbf{tail}(\tilde{e})$\\
                $\tilde{E}\xleftarrow{}\textbf{Out}(v)$\\
                
             }
             \eIf{$\exists e'\in \tilde{E}$ such that $x_{e'}^{i}\geq x_{\tilde{e}}^{i}$ and ${e'}\notin p$}{
                $p\xleftarrow{} p\cup\{e'\}$\\
                $\tilde{e}\xleftarrow{}e'$\\
             }{
                \textbf{break}
             }
        }
    }
    Return$(p, v)$ 
\end{algorithm2e}

We use $G(\vec{x})$ to denote the objective function. Suppose the optimal solution of the relaxed LP is 
$\left(\vec{x}^*,\vec{y}^*,\vec{z}^*\right)=\left(\left(x_e^{*,1},x_e^{*,2},x_e^{*,3},x_e^{*,4},x_e^{*,5}\right)_{e\in E},(y_v^*)_{v\in V},(z_v^*)_{v\in V}\right)$. And we define $\vec{x}^{*,i}\coloneqq\left(x_e^{*,i}\right)_{e\in E}$.

We define $\vec{x}^{*,i}$($\vec{x}^{k,i},\forall k\in\mathbb{N}_K$ resp.) as $(x^{*,i}_e)$($x^{k,i}_e,\forall k\in\mathbb{N}_K$ resp.), $\forall e\in E, i\in\mathbb{N}_5$. By Lem.~\ref{lem:int_combination}, $\left(\vec{x}^*,\vec{y}^*,\vec{z}^*\right)$ can be decomposed into {convex combination of} some basic platooning solutions.
\begin{equation*}
    \begin{aligned}
    \vec{x}^*&=\left(\vec{x}^{*,1},\vec{x}^{*,2},\vec{x}^{*,3},\vec{x}^{*,4},\vec{x}^{*,5}\right)\\
    &=\sum_{k\in\mathbb{N}_K}\theta_{k}\left(\vec{x}^{k,1},\vec{x}^{k,2},\vec{x}^{k,3},\vec{x}^{k,4},\vec{x}^{k,5}\right).
\end{aligned}
\end{equation*}
Since the objective function is linear in $\vec{x}$, we have 
\begin{equation*}
    \begin{aligned}
    G(\vec{x}^*) &= G\left(\vec{x}^{*,1},\vec{x}^{*,2},\vec{x}^{*,3},\vec{x}^{*,4},\vec{x}^{*,5}\right)\\
    & =\sum_{k\in\mathbb{N}_K}\theta_{k}G\left(\vec{x}^{k,1},\vec{x}^{k,2},\vec{x}^{k,3},\vec{x}^{k,4},\vec{x}^{k,5}\right). 
\end{aligned}
\end{equation*}

So $\exists k^*$, such that $\theta_{k^*}>0$ and 
\begin{align*}
  G\left(\vec{x}^*\right)\geq G\left(\vec{x}^{k^*,1},\vec{x}^{k^*,2},\vec{x}^{k^*,3},\vec{x}^{k^*,4},\vec{x}^{k^*,5}\right). 
\end{align*}

We fix merging point and splitting point to be ${v}_{i_{k^*}}$ and ${v}_{j_{k^*}}$. By the linearity of objective function $G$ and optimality of $\vec{x}^*$, we have that $\vec{x}^{k^*,1}$ {represents} the shortest path with respect to weight $(w_{{e}}^1)_{e\in E}$ from $s_1$ to $v_{i_{k^*}}$, $\vec{x}^{k^*,2}$ {represents} the shortest path with respect to weight $(w_{{e}}^2)_{e\in E}$ from $s_2$ to $v_{i_{k^*}}$, $\vec{x}^{k^*,3}$ {represents} the shortest path with respect to weight $(w_{{e}}^3)_{e\in E}$ from $v_{i_{k^*}}$ to $v_{j_{k^*}}$, $\vec{x}^{k^*,4}$ {represents} the shortest path with respect to weight $(w_{{e}}^4)_{e\in E}$ from $v_{j_{k^*}}$ to $d_1$ and $\vec{x}^{k^*,5}$ {represents} the shortest path with respect to weight $(w_{{e}}^5)_{e\in E}$ from $v_{j_{k^*}}$ to $d_2$. Therefore, $\left(\left(\vec{x}^{k^*,1},\vec{x}^{k^*,2},\vec{x}^{k^*,3},\vec{x}^{k^*,4},\vec{x}^{k^*,5}\right),\left(\textbf{1}_{v=v_{i_{k^*}}}, \textbf{1}_{v=v_{j_{k^*}}}\right)_{v\in V}\right)$ is an ILP solution. Thus we have 
\begin{align*}
    \mathrm{OPT}_{\mathrm{LP}} & =G\left(\vec{x}^{1,*},\vec{x}^{2,*},\vec{x}^{3,*},\vec{x}^{4,*},\vec{x}^{5,*}\right)\\
    & \geq G\left(\vec{x}^{k^*,1},\vec{x}^{k^*,2},\vec{x}^{k^*,3},\vec{x}^{k^*,4},\vec{x}^{k^*,5}\right)\\
    & \geq\mathrm{OPT}_{\mathrm{ILP}}. 
\end{align*}
Thus we show that optimal value of LP is lower bounded by an ILP solution. So $\mathrm{OPT}_{\mathrm{LP}}\geq\mathrm{OPT}_{\mathrm{ILP}}$ and naturally we have $\mathrm{OPT}_{\mathrm{ILP}}\geq\mathrm{OPT}_{\mathrm{LP}}$. Eventually, $\mathrm{OPT}_{\mathrm{ILP}}=\mathrm{OPT}_{\mathrm{LP}}$.

\end{proof}

\subsection{Proof of Theorem~\ref{thm:dualConvRate}}
\begin{proof}
\label{Proof:dualConvRate}
We follow the proof from the discussion in ~\cite[Section~3.3]{boyd2008subgradient}. {The essence of the proof is that, for a given iteration budget $K$, the optimal constant step size $\phi$ is proportional to $1/\sqrt{K}$. Similarly, the optimality gap of the dual function is also proportional to $1/\sqrt{K}$.} Let $\vec{\lambda}^*$ denote the optimal dual variable.We define the norm of $\vec{\lambda}^*$ as $R\triangleq\norm{\vec{\lambda}^*}$ and $H\triangleq 12\left(\sum_{e\in E}t_e^{ub}+\max\left\{T_d^1, T_d^2\right\}-\min\left\{T_s^1, T_s^2\right\}\right)$, it is easy to verify that,
\begin{equation}
    R\geq\norm{\vec{\lambda}[0]-\vec{\lambda}^*}=\norm{\vec{\lambda}^*}=R,
  \label{equ:dual_init_dist}
\end{equation}
since $\vec{\lambda}[0] = \vec{0}$. Besides,
\begin{equation}
  H\geq\norm{\dot{\vec{\lambda}}}. 
\end{equation}
Furthermore, with the dual subgradient update procedure,  we can derive that, 
\begin{align}
  & \norm{\vec{\lambda}[i+1]-\vec{\lambda}^*}^2 =\norm{\vec{\lambda}[i]+\phi\dot{\vec{\lambda}}[i]-\vec{\lambda}^*}^2 \nonumber\\
  &=\norm{\vec{\lambda}[i]-\vec{\lambda}^*}^2+2\phi\dot{\vec{\lambda}}[i]^T\left(\vec{\lambda}[i]-\vec{\lambda}^*\right)+\phi^2\norm{\dot{\vec{\lambda}}[i]}^2\nonumber\\
  &\leq\norm{\vec{\lambda}[i]-\vec{\lambda}^*}^2-2\phi\left(D^*-{D}_i\right)+\phi^2\norm{\dot{\vec{\lambda}}[i]}^2,\label{inq:dual_rec}
\end{align}
where $D_i\triangleq D\left(\vec{\lambda}[i]\right)$. The last inequality follows from the definition of subgradient, which gives $D^*-{D}_i\leq\dot{\vec{\lambda}}[i]^T\left(\vec{\lambda}^*-\vec{\lambda}[i]\right)$.

By applying~$\eqref{inq:dual_rec}$ recursively, we get
\begin{align}
    \norm{\vec{\lambda}[K+1]-\vec{\lambda}^*}^2 &\leq \norm{\vec{\lambda}[1]-\vec{\lambda}^*}^2-2\phi\sum_{i=1}^K\left(D^*-{D}_i\right) \nonumber\\
    & \quad +\phi^2\sum_{i=1}^K\norm{\dot{\vec{\lambda}}[i]}^2.
\end{align}
Using the fact that $\norm{\vec{\lambda}[K+1]-\vec{\lambda}^*}^2\geq0$, $\norm{\vec{\lambda}[1]-\vec{\lambda}^*}^2\leq R$ and $H\geq\norm{\dot{\vec{\lambda}}}$, we have,
\begin{equation}
  2\phi\sum_{i=1}^K\left(D^*-{D}_i\right)\leq R^2+\phi^2KH^2. 
\end{equation}
Since $D^*-{D}_i\geq D^*-\bar{D}_K$, we have
\begin{equation}
  D^*-\bar{D}_K\leq\frac{R^2+\phi^2KH^2}{2\phi K}. 
\end{equation}
Therefore, with step sizes $\phi_{i}=1/\sqrt{{K}}$, $i\in\mathbb{N}_{K}$, there exists a constant $\xi>0$ such that 
\[
    D^{*}-\bar{D}_{K}\leq\xi/\sqrt{K},
\]
{where $\xi$ can be set to be $\frac{R^2+H^2}{2}$.}
\end{proof}

\subsection{Proof to Theorem \ref{Thm_Optgap}}
\label{Thm:PosteriorBound}
\begin{proof}
First, we show the result for case 1. According to the weak duality, any dual function value is a lower bound of the optimal cost $OPT$, namely it holds that
\begin{equation*}
    D(\vec{\lambda}) \leq OPT.
\end{equation*}
For case 1, the stop condition $\dot{\lambda}_j[k^*] = 0$,$\forall j \in \mathbb{N}_4$ implies that $\delta_{i}^{*}(\vec{\lambda}[k^*])=0,\forall i\in \mathbb{N}_4$. The dual function value $D(\vec{\lambda})$ will be
\begin{equation*}
    D(\vec{\lambda}) = c(\bar{P}) + \sum_{j=1}^4 \lambda_j[k^*]\delta_{i}^{*}(\vec{\lambda}[k^*]) = c(\bar{P}) \leq OPT.
\end{equation*}
{where $\lambda_j[k^*]$ is the dual variable when the dual subgradient algorithm stops.}
Since $\bar{P}$ is a feasible function for our two-truck platooning problem, we have $c(\bar{P}) \geq OPT$. Therefore $c(\bar{P}) = OPT$, we will get the optimal solution if the algorithm returns at line $\ref{line:optima_returned}$.

For case 2, by the same reasoning, we know $\bar{P}$ is a feasible solution, $c(\bar{P}) \geq D^*$, where $D^*$ is the optimal dual function value. 
{By weak duality, $D^* \leq OPT$}.
The optimality gap between $\bar{P}$ and the optimal solution $OPT$ will be
\begin{align*}
    c(\bar{P}) - OPT & \leq  c(\bar{P}) - D^*\\
    & \leq c(\bar{P}) - D^*\\
    & = D(\vec{\lambda}) - \sum_{j=1}^4\lambda_j[k^*] \delta_j - D^*\\
    & \leq -\sum_{j=1}^4\lambda_j[k^*] \delta_j
\end{align*}
The optimality gap between our solution $c(\bar{P})$ and the optimal solution $OPT$ is bounded by the primal-dual gap, which is
\begin{equation*}
    B = c(\bar{P}') - D^* = -\sum_{j=1}^4 \lambda_j[k^*]\delta_j.
\end{equation*}
Note that $\bar{P}$ is feasible only when $\delta_j \leq 0$.
\end{proof}

\subsection{FPTAS}
\label{FPTAS}
In this section, we provide an FPTAS for the fixed merging point and splitting point two-truck platooning problem, which is an extension of FPTASes proposed in~\cite{Hassin1992RSP, Deng2016EnergyefficientTT, Liu2020TITS_energy_truck}. Like RSP, we observe that our two-truck platooning problem also satisfies Bellman's principle of optimality, and can be solved by dynamic programming. In fact, when the merging point and splitting point are fixed, our two-truck platooning problem can be divided into interdependent 5 RSP problems. The difficulty for our two-truck platooning problem is that the deadline constraints of the 5 RSP problems are coupled. So our two-truck platooning problem is quite similar to the multi-task problem in~\cite{Liu2020TITS_energy_truck} for which the deadlines constraints are also interrelated. The only difference between our two-truck platooning problem and the multi-task problem in~\cite{Liu2020TITS_energy_truck} is that the deadline constraints are not coupled sequentially. Therefore, compared to the FPTAS proposed in Section 3 of~\cite{Liu2020TITS_energy_truck}, we can design our FPTAS just by changing the coordination of traveling time for each RSP problem. 
\begin{algorithm2e}[t]
\SetAlgoLined   
    \caption{Test(G, L, U, $\epsilon$)}
    \label{FPTAS:Test}
     Set $S \xleftarrow{} \frac{L\epsilon}{N+1}$, $\hat{U} \xleftarrow{} \lfloor{\frac{U}{S}}\rfloor + 5(N + 1)$, $p_{test} \xleftarrow{}$ NULL\\
     \For{$\forall e \in E$, $\forall \hat{c} = 1, 2, ... \hat{U}$}{
        \label{line:scaling_loop}
        $\hat{c}_e^l \xleftarrow{} \lceil c_e(t_e^u)/S \rceil$, $\hat{c}_e^u \xleftarrow{} \lceil c_e(t_e^l)/S \rceil$\\
        $ t_e^i(\hat{c})\xleftarrow{}\left\{
                \begin{aligned}
                t_e^l, & \quad \hat{c} \geq \hat{c}_e^u;\\
                c_e^{-1}(\hat{c}S), & \quad \hat{c}_e^l \leq \hat{c} \geq \hat{c}_e^u;\\
                + \infty,  & \quad \hat{c} < \hat{c}_e^l.
                \end{aligned}
                \right.
        $\\
    }
    Set $g_{i}(v, 0) \xleftarrow{} +\infty$, $\forall v \neq s_i$, $\forall i \in \mathbb{N}_5$\\
    Set $g_{i}(v, 0) \xleftarrow{} 0$, $\forall v = s_i$, $\forall i \in \mathbb{N}_5, \forall \hat{c} = 1, 2,...,\hat{U}, \forall v \in V$\\
    \For{$\forall i \in \mathbb{N}_5, \forall \hat{c} = 1, 2,...,\hat{U}, \forall v \in V$}{
        \label{line:cost_loop}
        $g_i(v, \hat{c}) \xleftarrow{} g_i(v, \hat{c}-1)$\\
        \For{$\forall e = (u,v) \in E$, $\forall \bar{c} = 1,..., \hat{c}$}{
            $g_i(v, \hat{c}) \xleftarrow{} \min \big \{g_i(v, \hat{c}), g_i(u, \hat{c} - \bar{c}) \big \}$\\
        }
    }
    \For{$\bar{c}_1$ = 1,..., $\hat{U}$}{
        \label{line:short_path_loop}
        \For{$\bar{c}_2$ = 1, ..., $\hat{U} - \bar{c}_1$}{
            \For{$\bar{c}_4$ = 1, ..., $\hat{U} - \bar{c}_1 - \bar{c}_2$}{
                \For{$\bar{c}_5 = 1, ..., \hat{U} - \bar{c}_1 - \bar{c}_2 - \bar{c}_4$}{
                    $\bar{c}_3 = \lfloor (\hat{U} - \bar{c}_1 - \bar{c}_2 - \bar{c}_4 - \bar{c}_5)/( 2- 2\eta) \rfloor$\\
                    $T_1 \xleftarrow{} \max \big \{g_1(d_1, \bar{c}_1), g_2(d_2, \bar{c}_2)\big \} + g_3(d_3, \bar{c}_3) + g_4(d_4, \bar{c}_4)$\\
                    $T_2 \xleftarrow{} \max \big \{g_1(d_1, \bar{c}_1), g_2(d_2, \bar{c}_2)\big \} + g_3(d_3, \bar{c}_3) + g_5(d_5, \bar{c}_5)$\\
                    \If{$t_d^1 + T_1 \leq t_a^1$ and $t_d^2 + T_2 \leq t_a^2$}{
                        $p_{test}$ is defined by the solutions for the 5 RSP problems.\\
                    }
                }             
            }
        }
    }
    {\textbf{return} {$p_{test}$}}
\end{algorithm2e}

The essence of the FPTAS is to quantize and round down edge costs while the travel time information is kept. We use dynamic programming to solve the rounded problem exactly by enumerating the cost-bounded minimum-travel-time path. The rounding procedure guarantees the optimal cost for the rounded problem is polynomially bounded of input size, therefore guarantees a polynomial time complexity. We modify the test procedure of the FPTAS in Section 3.2 of \cite{Liu2020TITS_energy_truck}, especially the traveling time coordination for each RSP. Then we incorporate the rounding and scaling procedure of Algorithm 1 in \cite{Liu2020TITS_energy_truck} to design our FPTAS. The test procedure of our FPTAS is elaborated in Alg \ref{FPTAS:Test}. $s_i$ and $d_i$ denote the source and destination nodes for $i$-th RSP problem, where the sequence of the 5 RSP problems is the same as defined in Section 2 Model and problem formulation.

Compared to Algorithm 1 in Section 3 of \cite{Liu2020TITS_energy_truck}, the only difference is that we need to coordinate the 5 RSP problems differently. The traveling time coordination for the 5 RSP problems is presented in the loop of line 15 to line 28 in Alg \ref{FPTAS:Test}. The rounding and scaling procedure in Alg \ref{FPTAS:Test} guarantees a polynomial time complexity. The proof is conceivable from Section III of~\cite{Deng2016EnergyefficientTT} and Section 3 of~\cite{Liu2020TITS_energy_truck}.  With the test procedure in Alg \ref{FPTAS:Test}, we follow the same search structure as in~\cite{Deng2016EnergyefficientTT} and~\cite{Liu2020TITS_energy_truck} to find the near-optimal solution for our two-truck platooning problem. First, we need an initial lower bound $\textbf{LB}$ where $\textbf{LB} \leq OPT$. Obviously, 0 is a trivial lower bound. However, to accelerate the binary search, we use the solution for the unconstrained two-truck platooning problem as the lower bound $\textbf{LB}$, \emph{i.e.}, the 2 trucks can drive most efficiently on every edge. For the upper bound, we use the solution that the 2 trucks drive most inefficiently on every edge. The binary search procedure is presented in Alg \ref{FPTAS:BinarySearch}. Since the proof for our FPTAS is almost the same as previous work in Section 3 of~\cite{Deng2016EnergyefficientTT} and~\cite{Liu2020TITS_energy_truck}, we omit the proof here.

\begin{algorithm2e}[t]
 \SetAlgoLined
 	\caption{An FPTAS}
	\label{FPTAS:BinarySearch}
    Get a lower bound \textbf{LB} and upper bound \textbf{UB} for \textbf{OPT}.\\
    $B_L \xleftarrow{} \textbf{LB}$\\
    $B_U \xleftarrow{} \textbf{UB}$\\
    \While{$B_U/B_L > 2$}
    {
        $B \xleftarrow{} \sqrt{B_L \cdot B_U}$\\
        \eIf{ Test(G, B, B, $\epsilon$) = NULL}{
            $B_L \xleftarrow{} B$\\
        }{
            $B_U \xleftarrow{} B$\\
        }
    }
    FPTAS $\xleftarrow{}$ Test(G, $B_L$, $2B_U$, $\epsilon$)
\end{algorithm2e}

There are three loops in Algorithms \ref{FPTAS:Test}. The first loop which starts from line \ref{line:scaling_loop} has a time complexity of $\mathcal{O}\left(M\hat{U}\right)$, and the second loop starts from line \ref{line:cost_loop} has a time complexity of $\mathcal{O}\left(M\hat{U}^2\right)$. The time complexity for the third loop starts from line \ref{line:short_path_loop} is $\mathcal{O}\left(\hat{U}^4\right)$. Since $\hat{U}$ is $\mathcal{O}\left(\frac{ \textbf{UB}(N+1)}{\textbf{LB} \epsilon}\right)$, so the time complexity for Algorithm \ref{FPTAS:Test} is the larger one between $\mathcal{O}\left(M\frac{ \textbf{UB}^2(N+1)^2}{\textbf{LB}^2 \epsilon^2}\right)$ and $\mathcal{O}\left(\frac{ \textbf{UB}^4(N+1)^4}{\textbf{LB}^4 \epsilon^4}\right)$. Algorithm \ref{FPTAS:BinarySearch} has a time complexity of $\mathcal{O}\left(\log(\frac{\textbf{UB}}{\textbf{LB}})\right)$. Note that it's easy to find a polynomial bounded upper bound $\textbf{UB}$ for our two-truck platooning problem, like we can simply use the single truck path and speed planning solution in~\cite{Deng2016EnergyefficientTT}. Therefore, Algorithm \ref{FPTAS:Test} and \ref{FPTAS:BinarySearch} are an FPTAS for our two-truck platooning problem.

\begin{IEEEbiography}[{\includegraphics[width=1in,height=1.25in,clip,keepaspectratio]{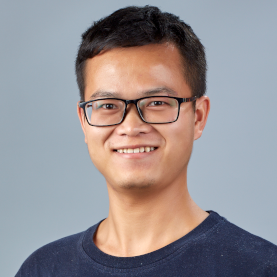}}]{Wenjie Xu} is currently a doctoral student at {\'E}cole polytechnique f{\'e}d{\'e}rale de Lausanne (EPFL). He received his MPhil degree in Information Engineering from The Chinese University of Hong Kong, and B.E. degree in Electronic Engineering from Tsinghua University in 2018. His research interests lie in the interface of optimization, control and machine learning, with applications to building control and intelligent transportation.
\end{IEEEbiography}

\begin{IEEEbiography}[{\includegraphics[width=1in,height=1.25in,clip,keepaspectratio]{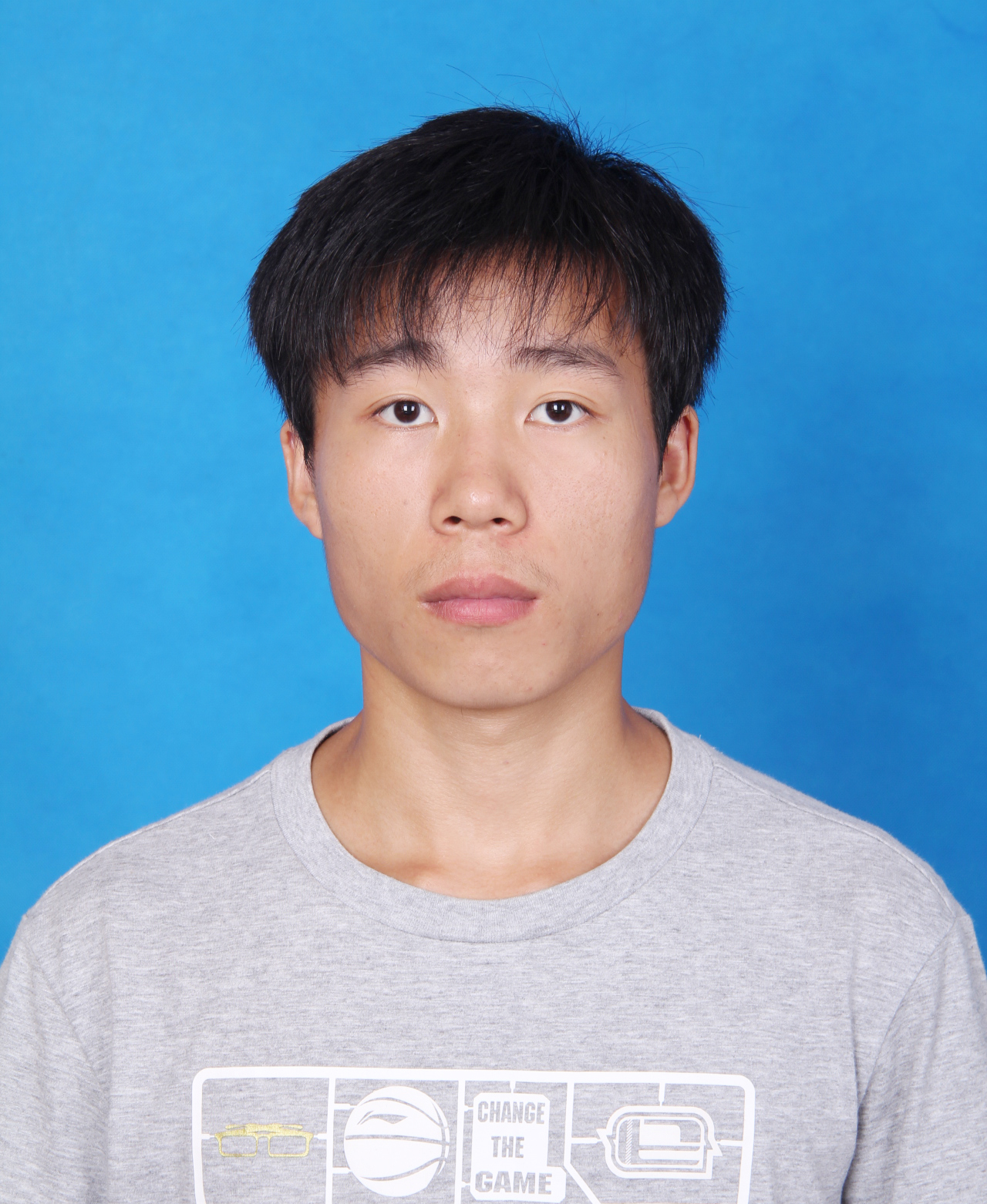}}]{Titing Cui} is currently a doctoral student at University of Pittsburgh. He received his M.S. degree in Applied and Computational Mathematics from KTH Royal Institute of Technology in 2019, and B.S. degree in Mathematics and Applied Mathematics from Zhejiang University in 2017. His research interests include optimization and algorithm design in the fields of the intelligent transportation system, integer programming, revenue management and pricing.
\end{IEEEbiography}

\begin{IEEEbiography}[{\includegraphics[width=1in,height=1.25in,clip,keepaspectratio]{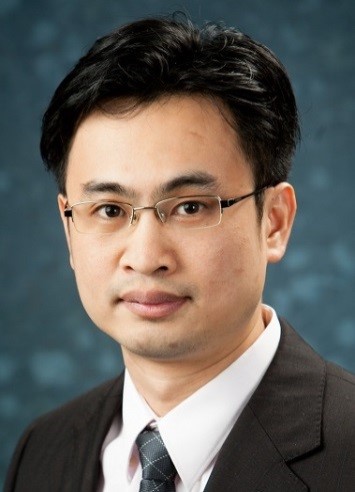}}]{Minghua Chen} (S’04 M’06 SM’ 13 F’22) received his B.Eng. and M.S. degrees from the Department of Electronic Engineering at Tsinghua University. He received his Ph.D. degree from the Department of Electrical Engineering and Computer Sciences at University of California Berkeley. He is currently a Professor of School of Data Science, City University of Hong Kong. Minghua received the Eli Jury award from UC Berkeley in 2007 (presented to a graduate student or recent alumnus for outstanding achievement in the area of Systems, Communications, Control, or Signal Processing) and The Chinese University of Hong Kong Young Researcher Award in 2013. He also received IEEE ICME Best Paper Award in 2009, IEEE Transactions on Multimedia Prize Paper Award in 2009, ACM Multimedia Best Paper Award in 2012, and IEEE INFOCOM Best Poster Award in 2021. He receives the ACM Recognition of Service Award in 2017 and 2020 for the service contribution to the research community. He is currently a Senior Editor for IEEE Systems Journal, an Area Editor of ACM SIGEnergy Energy Informatics Review, and an Award Chair of ACM SIGEnergy. Minghua’s recent research interests include online optimization and algorithms, machine learning in power systems, intelligent transportation systems, distributed optimization, and delay-critical networked systems. He is an ACM Distinguished Scientist and IEEE Fellow.
\end{IEEEbiography}

\end{document}